\theoremstyle{plain}
\newtheorem{theorem}{Theorem}[section]
\newtheorem{proposition}[theorem]{Proposition}
\newtheorem{lemma}[theorem]{Lemma}
\newtheorem{corollary}[theorem]{Corollary}
\theoremstyle{definition}
\theoremstyle{remark}%following is written with \rm
\newtheorem{remark}[theorem]{Remark}
\newcommand{\N}{\mathbb{N}}
\newcommand{\Z}{\mathbb{Z}}
\newcommand{\R}{\mathbb{R}}
\newcommand{\set}[2]{\{#1\,|\ #2\}}
\newcommand{\f}{\mathbf{v}}
\newcommand{\g}{\omega}
\newcommand{\C}{\mathbb{C}}
\newcommand{\GL}{\mathrm{GL}}
\newcommand{\Ad}{\mathrm{Ad}}
\newcommand{\SL}{\mathrm{SL}}
\newcommand{\Sp}{\mathrm{Sp}}
\newcommand{\U}{\mathrm{U}}
\newcommand{\ket}[1]{\vert\mathit{#1}\rangle}
\newcommand{\mat}[4]{\left(\begin{array}{cc}
 #1 & #2\\
 #3 & #4\\
\end{array}\right)}
\newcommand{\matN}[4]{\left[\begin{array}{cc}
 #1 & #2\\
 #3 & #4\\
\end{array}\right]_N}
\newcommand{\vc}[2]{\left(\begin{array}{c}
 #1 \\
 #2\\
\end{array}\right)}
\begin{document}
\title[Clifford group is not a semidir. product if  $N$ is divisible by four]{Clifford group is not a semidirect product in dimensions $N$
divisible by four}

\author[M.~Korbel\'a\v{r}]{Miroslav~Korbel\'a\v{r}}
\address{Department of Mathematics, Faculty of Electrical Engineering, 
Czech Technical University in Prague, Jugosl\'avsk\'ych partiz\'an\r{u} 1, 
166 27 Prague 6, Czech Republic}
\email{korbemir@fel.cvut.cz}

\author[J.~Tolar]{Ji\v{r}\'i Tolar}
\address{Department of Physics,
 Faculty of Nuclear Sciences and  Physical  Engineering,
 Czech Technical University in Prague, B\v{r}ehov\'{a} 7,
 115 19 Prague 1, Czech Republic}
\email{jiri.tolar@fjfi.cvut.cz}

\thanks{}

\keywords{Clifford group, semidirect product, group presentation, finite-dimensional, quantum mechanics}
\subjclass[2010]{}
\date{\today}

\begin{abstract}
The paper is devoted to projective Clifford groups of quantum $N$-dimensional systems. 
Clearly, Clifford gates allow only the simplest quantum computations
which can be simulated on a classical computer (Gottesmann-Knill theorem).
However, it may serve as a cornerstone of full quantum computation.
As to its group structure it is well-known that -- in $N$-dimensional 
quantum mechanics -- the Clifford group is a natural semidirect product
provided the dimension $N$ is an odd number.
For even $N$  special results on the Clifford groups are scattered 
in the mathematical literature, but they don't concern the semidirect structure.
Using appropriate group presentation of $\SL(2,\Z_N)$ 
it is proved that for even $N$ projective Clifford groups are not 
natural semidirect products if and only if $N$ is divisible by four.
\end{abstract}
\maketitle
%\vspace{4ex}

\section{Introduction}

Finite-dimensional quantum mechanics has naturally become 
the mathematical as well as physical basis of quantum information. 
Among its main concepts, Clifford groups belong to the basic 
symmetries of finite-dimensional quantum kinematics and 
have over time emerged as an important tool in
quantum information with a wide variety of applications.

The term Clifford group was coined in \cite{Gottesman} in the study 
of fault-tolerant quantum computation and the construction 
of quantum error-correcting codes. 
The simplest Clifford group in multiqubit quantum computation is generated 
by a restricted set of unitary Clifford gates -- 
the Hadamard, $\pi/4$-phase and controlled-X gates. 
Because of this the Clifford model of quantum computation 
can be efficiently simulated on a classical computer 
(the Gottesmann-Knill theorem) \cite{NielsenChuang}.
However, this fact does not diminish the importance of the Clifford
model, since it may serve as a suitable starting point for 
full-fledged quantum computation.
\footnote{We would like to point out that the claim in 
	\cite[note 43]{Gross} that the Clifford group 
which appears in the context of quantum information has no connection to 
Clifford groups used in the representation theory of orthogonal Lie groups, 
deserves a correction. Namely, the original Clifford
groups are certain automorphisms of the Clifford algebras
which appeared in a `square-root' problem of matrix linearization 
of a quadratic form  (like Dirac's equation). 
Looking back, the paper by Morinaga and Nono published in 1952
\cite{MorinagaNono} should be quoted.
There a more general linearization of a form of $n$-th degree was 
studied and shown to lead to what they called generalized Clifford algebras.
Then the terms Clifford transform group and Clifford collineation group 
were used in \cite{bolt1,bolt2} for what we now call
the Clifford group and the Heisenberg group, respectively.
}

In the general case of a quantum system with
$N$-dimensional Hilbert space the Heisenberg group of
unitary operators defines the quantum kinematics and the states of
the quantum register in quantum computing. 
Then the corresponding Clifford group $C(N)$ is defined as the group 
of unitary operators that map  the Heisenberg group, generated by scalar multiples of 
generalized Pauli matrices, to itself under conjugation.
 In other words, $C(N)$ is the normalizer of the Heisenberg group within the unitary group $\U(N)$.
It is clear that $C(N)$  necessarily contains
the continuous group $\U(1)$ of phase factors.
In quantum computing the so-called Clifford operations, i.e.
the corresponding operator transformations, are not affected
by phase factors. Therefore it is natural to study the 
\textit{projective Clifford group} -- the quotient group 
$\overline{C}(N)$ which counts classes of unitaries
in $C(N)$ that differ by an overall phase factor.
% This projective Clifford group 
%was called the Clifford quotient group in \cite{TolarJPCS18}.

We would also like to draw attention to the fact that we published 
comprehensive studies of the projective Clifford groups
 $\overline{C}(N)$ under the name \textit{symmetries of finite Heisenberg groups}
\cite{HPPT02, KorbTolar10, KorbTolar12}.
This terminology stems from our research in the setting of a
programme envisaged by J. Patera and H. Zassenhaus -- the study of symmetries 
in the context of Cartan's theory of semi-simple Lie algebras: 
especially the symmetries of the Pauli gradings of 
the Lie algebras $\mathrm{sl}(N,\C)$ \cite{HPP98, HPPT02, PZ88, PZ89}
for arbitrary dimensions $N$.

It is therefore evident that we were motivated to study symmetries 
of finite Heisenberg groups not in prime or prime power dimensions, 
but for arbitrary dimensions. Note that our results 
\cite{HPPT02, KorbTolar10, KorbTolar12} cover a broad collection 
of projective Clifford groups corresponding to arbitrary single
or composite quantum systems. For odd dimensions see also \cite{TolarJPCS18,TolarChadz}. 
The papers \cite{Digernes, DigernesVV, Hostens} deserve to be mentioned, too.

In analogy with the continuous case -- remember that in quantum optics
squeezed states can be viewed as coherent states associated with
unitary irreducible representations of the inhomogeneous symplectic group --
it is natural to ask whether the Clifford group in finite dimensions should be
isomorphic to a semi-direct product of the finite symplectic group
with the Heisenberg group. Indeed, in odd dimensions it was really proved 
that both the projective and non-projective Clifford groups
have the structure of a \textit{natural semidirect product} 
 \cite[Lemma 4]{Gross}, \cite{appleby,DuttaPrasad}.

But if the dimension is an even number, the situation is more tricky.
To our best knowledge, special results are scattered in the literature
under many different guises, but the problem of semidirect product
structure seems to be unsettled yet. 
One partial result is due to D. M. Appleby \cite[Theorem 1]{appleby} 
disclosing that in even dimensions  the projective Clifford group 
may not be a natural semidirect product.
The part of Appleby's theorem \cite[Theorem 1]{appleby}  concerning even dimensions is taken
as our starting point in Section \ref{section_6}  where an appropriate 
presentation of $\SL(2,\Z_N)$ is chosen \cite{hopf}.
Using it for even $N$, in Sections \ref{section_6} -- \ref{section_12} new results 
for projective Clifford groups are derived: in Section \ref{section_12} we show
that for dimensions divisible by 4 they are not semidirect products,
but for dimensions $N$ with $N/2$ odd they are. Existence of the semidirect product in even dimensions seems to be quite surprising and we also propose a conjecture relating to this result in Conclusions (Section \ref{conclusion}). 
%We mainly refer to Appleby's papers 
%\cite{appleby,Appleby11} on which our argumentation is built.
\footnote{Note the recent paper \cite{Raussendorf} concerned with
quantum dynamics in phase space in both odd and even dimensions.}

Our decision to study rather the projective Clifford groups than the non-project\-ive ones was motivated by a much simpler description of the projective case via Appleby's result and also by the fact that the non-existence of the semidirect product in the projective case immediately implies the same conclusion for the non-projective case (see Corollary \ref{6.2}). Nevertheless, let us point out that the approach using group presentations can be further applied to decide whether also the non-projective case (for $N$ even with $N/2$ odd) admits the structure of the semidirect product.

% Unfortunately, our approach is not suited to extend our results to 
% non-projective Clifford groups in full generality.
% ons.}

The paper is organized as follows:
in Sections \ref{section_2} -- \ref{semidirect} the main notions are introduced 
on which our chain of reasoning in Sections \ref{section_6} -- \ref{section_12} is based.

\section{Basics of quantum mechanics}\label{section_2}

In  classical mechanics, the symmetries of Hamilton equations 
are constituted by canonical transformations. 
But L. van Hove showed in 1951 \cite{Tilgner,VanHove} that
the group of canonical transformations of phase space $\mathbb{R}^{2n}$
cannot be represented by quantal unitary transformations. 
Only special subgroups of canonical transformations admit unitary representations. 
One such important case are the linear canonical transformations
of the phase space $\mathbb{R}^{2n}$ preserving the Poisson
brackets. They form the \textit{symplectic group} $\Sp(2n,\mathbb{R})$
which for one degree of freedom reduces to
$\Sp(2,\mathbb{R})\cong \SL(2,\mathbb{R})$ with the action on the
Cartesian coordinates $q$ and momenta $p$
\begin{equation}
(q',p') = (q,p)\mathbb{A}=(q,p) \left( \begin{array}{cc}
a & b \\
c & d
\end{array}  \right), \quad ad-bc=1,
\end{equation}
i.e. $\mathbb{A} \in \SL(2,\mathbb{R})$.
These symplectic transformations together with translations of the phase space 
form a semidirect product group of inhomogeneous symplectic transformations,
also called symplectic affine transformations \cite{bolt1, Gross}.

In \textit{standard} quantum mechanics the Cartesian coordinates and momenta have
their quantum counterparts, the canonical observables $\hat{Q}_i$ and
$\hat{P}_j$. According to the \textit{Stone-von Neumann uniqueness theorem}, 
these observables -- satisfying the canonical commutation relations
$[\hat{Q}_k,\hat{P}_\ell]=i\hbar I \delta_{k\ell}$ --
are represented by essentially self-adjoint operators unitarily
equivalent to the operators of the Schr\"odinger representation in
the Hilbert space $L^2(\mathbb{R}^n,d^{n}q)$.
In the simplest case of the Hilbert space   $L^2(\mathbb{R}, dq)$ 
($n=1$) the operators are as follows
$$ \hat{Q}\psi(q)=q\psi(q), \qquad
\hat{P}\psi(q)=-i\hbar \frac{\partial\psi(q)}{\partial q}.$$
The operators $\hat{Q}$ and $\hat{P}$ generate one-parameter (abelian) groups
of unitary operators
$$ U(t)= e^{\frac{i}{\hbar} \hat{P} t }, \qquad
V(s) = e^{\frac{i}{\hbar} \hat{Q} s }
$$
satisfying the relation
$$ U(t)V(s)= e^{\frac{i}{\hbar}st} V(s)U(t). $$
In the Schr\"odinger representation
they act on $L^2(\mathbb{R}, dq)$ as shift and phase operators 
$$
[U(t)\psi](q) = \psi (q+t), \qquad
[V(s)\psi](q) =e^{\frac{i}{\hbar}qs} \psi(q)
$$
They are jointly expressed as unitary \textit{Weyl displacement operators} 
\begin{equation}  \label{W}
W(s,t) = e^{\frac{i}{\hbar}(s\hat{Q} + t\hat{P})} =
e^{\frac{i}{2\hbar}st} V(s)U(t).
\end{equation}
From the composition rule
\begin{equation}\label{composition}
W(s_1,t_1)W(s_2,t_2) = e^{-\frac{i}{2\hbar}(s_1 t_2 - s_2 t_1)}
W(s_1 + s_2, t_1 + t_2)
\end{equation}
one sees that they provide a \textit{projective unitary representation}
of the abelian group of translations of the phase space $\mathbb{R}^{2}$.

The Segal-Shale-Weil representation, shortly the \textit{Weil
representation}, is the projective unitary representation of the group
of linear canonical transformations in the same Hilbert space
$L^2(\mathbb{R}^n,d^{n}q)$. It can be lifted to a (non-projective) unitary irreducible
representation $X(\mathbb{A})$ of the double covering of
$\Sp(2n,\mathbb{R})$ (called the \textit{metaplectic group}) which
for $n=1$ reduces to the double covering of $\SL(2,\mathbb{R})$,
\begin{equation}
(\hat{Q}',\hat{P}') =(\hat{Q},\hat{P})\mathbb{A}, \quad
\det{\mathbb{A}}=1.
\end{equation}
The transformed commutators have the form of canonical commutation
relations $[\hat{Q}',\hat{P}']=i\hbar I$. Hence 
the Stone-von Neumann theorem implies the unitary equivalence
$$ \hat{Q}' = X(\mathbb{A})\hat{Q}X(\mathbb{A})^{-1}, \quad
\hat{P}' = X(\mathbb{A})\hat{P}X(\mathbb{A})^{-1}.
$$
Then the Weyl displacement operators are transformed as
$$ X(\mathbb{A})W(s,t)X(\mathbb{A})^{-1}=W((s,t)\mathbb{A}^T).
$$

\section{Heisenberg groups of $N$-dimensional quantum systems}\label{section_3}

The apparatus of finite-dimensional quantum mechanics can be obtained 
by the discretization of the standard quantum mechanics in Cartesian
coordinates. The introduction of its basic objects is guided by the
simplest quantum kinematics on the real line. 
The discrete objects for a simple $N$-level quantum system (a qudit)
are obtained by replacing the configuration space $\R$ by $\Z_N$  
\cite{Schwinger,StovTolar84,Vourdas,Weyl}.

In the following, the Hilbert space will be finite-dimensional, 
$\C^N$ with the standard inner product. 
%It is the state space of a quantum $N$-level system. 
In the usual notation of quantum computing \cite{appleby, Appleby11,NielsenChuang}, 
the basic unitary operators generalizing Pauli matrices 
$\sigma_x$ and $\sigma_z$ to dimension $N$ 
are denoted $X_N$, $Z_N$ (in our papers \cite{HPPT02,KorbTolar10,KorbTolar12} we denoted them  $P_N$, $Q_N$). 
%They generate the \textit{Pauli group} as a subgroup of the unitary group $\U(N)$ . 

Now in the Hilbert space $\mathcal{H}_{N}=\mathbb{C}^N$ take
an orthonormal basis $$\mathcal{B} = \left\lbrace\ket{0}, \ket{1},
\ldots \ket{N-1}\right\rbrace.$$ The basic unitary operators $Z_N$,
$X_N$ are defined by their action on the basis \cite{Schwinger,Weyl}
\begin{eqnarray}
Z_N \ket{j}&=& \omega_N^j \ket{j}, \\
X_N \ket{j} &=& \ket{j-1 \pmod{N}},
\end{eqnarray}
where $j=0,1,\ldots,N-1$, $\omega_N = e^{\frac{2\pi i}{N}}$. 
This is the well-known \textit{clock-and-shift representation} of the basic
operators $Z_N$, $X_N$. In the canonical or computational basis
$\mathcal{B}$ these operators are represented by the
\textit{generalized Pauli matrices}
\begin{equation}
Z_N = \mbox{diag}\left(1,\omega_N,\omega_N^2,\cdots,\omega_N^{N-1}\right)
\end{equation}
and
\begin{equation}
X_N = \left(
\begin{array}{cccccc}
0 & 1 & 0& \cdots & 0 & 0 \\
0 & 0& 1&  \cdots & 0 & 0 \\
0 & 0 & 0&\cdots & 0 & 0 \\
\vdots &   & & \ddots &   & \\
0 & 0 &0 & \cdots & 0 & 1 \\
1 & 0 &0 &\cdots & 0 & 0
\end{array} \right) .
\end{equation}
Their commutation relation
\begin{equation}
X_N Z_N = \omega_N Z_N X_N 
\end{equation}
expresses their minimal non-commutativity. Further, they are of the order $N$,
$$
 X_N^N = Z_N^N = I, \quad \omega_N^N = 1.
$$
Elements of $\mathbb{Z}_N = \left\lbrace 0,1,\ldots N-1
\right\rbrace $ label the vectors of the canonical basis
$\mathcal{B}$ with the physical interpretation that $\ket{j}$ is the
(normalized) eigenvector of position at $j \in \mathbb{Z}_N$. In
this sense the cyclic group $\mathbb{Z}_N$ plays the role of the
\textit{configuration space} for an $N$-dimensional quantum system. 

The \emph{Weyl  displacement operators} for the discrete case have an analogous form as in the continuous one (\ref{W}), in particular they are defined according to \cite{appleby} as
$$W(k,\ell)=\tau^{k\ell}X_N^k Z_N^\ell,
$$ 
where $\tau=-e^{\frac{i\pi}{N}}$  and  $k,\ell\in \Z$. \footnote{For $N$ odd we have $(X_N^iZ_N^j)^N=I$ for all $i,j\in\Z$ while for $N$ even, for instance, the operator $X_N Z_N$ is of order $2N$.
Such peculiarities of even dimensions suggest the choice of Weyl operators involving $\tau$ as a square root of $\omega_N$. Then $W(i,j)^N=I$ for all $i,j\in\Z$.

Notice further that $W(k+N,\ell)=W(k,\ell)=W(k,\ell+N)$ if $N$ is odd, while $W(k+N,\ell)=(-1)^{\ell}W(k,\ell)$ and $W(k,\ell+N)=(-1)^kW(k,\ell)$ if $N$ is even.} 

% {\color{red}
% By using the concept of Weyl displacement operators we would like to have for the discrete case an analogous form as in the countinous one (\ref{W}), in particular we require the form $$W(k,\ell)=\lambda^{k\ell}Z_N^k X_N^\ell$$ where $0\neq\lambda\in\C$ is a fixed number and  $k,\ell=0,1,\dots,N-1$. If we are further asking for an formula analogous to (\ref{composition}) 
% $$W(k_1,\ell_1)W(k_2,\ell_2)=\mu^{k_1\ell_2-k_2\ell_1} W(k_1+k_2,\ell_1+\ell_2)$$
% holding for a fixed number $0\neq\mu\in\C$ and all $k_1,k_2,\ell_1,\ell_2=0,1,\dots,N-1$ it turns out that the only possible choice is $\mu^{-1}=\lambda=\pm\sqrt{\omega_N}$ (see the Attachment).
% 
% Now the \emph{Heisenberg group} is defined as the smallest group containing $W(k,\ell)$ for all $k,\ell=0,1,\dots,N-1$. Denote now $\tau_N=\sqrt{}$
% }

As the quantum state is determined up a scalar multiple (the phase) it is natural to consider unitary operators (that perform transformation of states) also with scalar multiples. We define therefore an  equivalence relation $\sim$ on $\U(N)$ such that for $U,V\in \U(N)$ is $U\sim V$ if and only if $U=\lambda V$ for some $\lambda\in \U(1)$. The cosets of $\sim$ will be denoted as $U_{/\sim}$.

Motivated by this view the \textit{Heisenberg group} $H(N)$ is defined as the subgroup in $\U(N)$ generated by the operators 
$X_N$, $Z_N$ and all possible scalar multiples:
\begin{equation}
H(N)= \left\lbrace \lambda W(k,\ell) |
\lambda\in \U(1), k,\ell = 0,1,2,\ldots,N-1\right\rbrace.
\end{equation}
% The center $Z(H(N))$ of the Heisenberg group is the
% set of all those elements of $H(N)$ which commute with all elements in $H(N)$, hence $Z(H(N))=U(1)$.
% Since the center is a normal subgroup, one can go over to the
% quotient group
This group is also sometimes called \emph{generalized Pauli group} \cite{appleby}.

Since the group $I(N)=\set{\lambda I}{\lambda\in \U(1)}$ is a normal subgroup of $H(N)$, one can go over to the
 quotient group
$\overline{H}(N)=H(N)/I(N)$ that is the \emph{projective Heisenberg group}.
This quotient group is usually identified with the 
\textit{finite phase space} $\mathbb{Z}_N \times \mathbb{Z}_N$
by the isomorphism of abelian groups $$\nu:\Z_N \times \Z_N \rightarrow \overline{H}(N),\ \text{where}\ (k,\ell)\mapsto W(k,\ell)_{/\sim}\ .$$

 %Its elements are the cosets
%labeled by pairs of exponents $(i,j)$, $i,j = 0,1,\ldots,N-1$. 

% The \textit{Pauli group} $\Pi_N$ of order $N^3$ is generated by
% $Z_N$ and $X_N$ and consists of the unitary operators \cite{HPPT02,PZ88}:
% \begin{equation}
% \Pi_N = \left\lbrace \omega_N^l Z^i_N X_N^j |
% l,i,j = 0,1,2,\ldots,N-1\right\rbrace.
% \end{equation}
% Then the Clifford group $C(N)$ to be constructed should contain the
% Pauli group as its normal subgroup. 
% In the discretization of basic quantum operators 
% special role is played by the Weyl displacement operators (\ref{W}) 
% which involve the factor $1/2$ in the exponent. The continuous exponentials are
% replaced by  $\Z_N$-powers of $\omega_N$ and one observes
% that the division by $2$ is possible in $\Z_N$ for odd $N$ only. 
% In this way the discrete Weyl displacement operators
% may be guessed for odd $N$ from their form in the phase space $\R^2$.
% The continuous and the discrete odd case behave very similarly
% inspite of different mathematics \cite{Gross}. 

\section{Clifford groups and projective Clifford groups}\label{section_4}

The \emph{Clifford group} $C(N)$ comprises all unitary operators $U$ such that their action of inner automorphisms leaves the group $H(N)$ invariant:
$$ C(N)=\set{U\in \U(N)}{UH(N) U^{-1} = H(N)},
$$
i.e. the Clifford group is the normalizer $N_{\U(N)}(H(N))$ of $H(N)$  in $\U(N)$.
\footnote{Let us note that in the references \cite{appleby,BengtssonZyczkowski} the Heisenberg groups are defined separately for odd and even $N$ and they are, moreover, finite groups. Nevertheless, the normalizer of these groups in $\U(N)$, i.e. the Clifford groups $C(N)$, is the same as in our case.}
The Clifford group consists therefore of all $U \in \U(N)$ such that
$UW(i,j) U^{-1} \in H(N)$ for all $i,j=0,1,\dots,N-1$.
Now the subgroup $H(N)$ is a normal subgroup of the normalizer,
so one obtains a short exact sequence of group homomorphisms
\begin{equation} \label{8}
     1 \rightarrow H(N) \rightarrow C(N)
\rightarrow C(N)/H(N)  \rightarrow 1 .
\end{equation}
The full structure of the normalizer
$C(N) = N_{\U(N)}(H(N))$ is complicated by phase factors
and rather difficult to describe in general \cite{Hostens}.

Since the Clifford operations, i.e. the corresponding inner automorphisms, 
are not affected by phase factors, it is natural to study the 
\emph{projective Clifford group} as the quotient
$ \overline{C}(N) =  C(N)/ I(N)$.

There is yet another way how to deal with this cosets of $\sim$. If we consider for $U\in \U(N)$  the $\Ad$-action 
$ \Ad_U : \ Y \mapsto UYU^{-1}$, where $Y\in \GL(N,\C)$, then it is easy to verify that $\Ad_U =\Ad_V $ holds if and only if $U\sim V$.

Using this simple fact we can use results in  \cite{HPPT02, KorbTolar10, KorbTolar12, TolarJPCS18} to describe the factor $C(N)/H(N)\cong \overline{C}(N)/\overline{H}(N)$. 
%As we will deal mainly with operators up the scalar multiple, we still define simplified Weyl operators in the form $\widetilde{W}(i,j)=X_N^iZ_N^j$ for $(i,j)\in\Z_N^2$.

% Concerning the quotient group in (\ref{9}) we refer to the theorem
% proved in \cite{HPPT02}:
\begin{theorem}\cite{HPPT02}\label{htpp}
Let $N\geq 2$ be an integer. For every $U_{/\sim}\in \overline{C}(N)$ there is a unique matrix $A\in \SL(2,\mathbb{Z}_N)$ such that $UW(u)U^{-1}\sim W(Au)$ for every $u\in\Z_N^2$. Moreover, the map $$\overline{\pi}: \overline{C}(N)\to \SL(2,\mathbb{Z}_N), \ \ U_{/\sim}\mapsto A$$ is a group epimorphism and $\ker \overline{\pi}=\overline{H}(N)$. %Hence $\overline{C}(N)/\overline{H}(N)\cong \SL(2,\mathbb{Z}_N)$.
It follows that $$\pi: C(N)\to \SL(2,\mathbb{Z}_N), \ \ U\mapsto \overline{\pi}(U_{/\sim})$$ is a group epimorphism as well, $\ker \pi=H(N)$ and $$C(N)/H(N)\cong\overline{C}(N)/\overline{H}(N)\cong \SL(2,\mathbb{Z}_N)\ .    $$
\end{theorem}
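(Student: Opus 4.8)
The plan is to construct the map $\overline{\pi}$ directly from the conjugation action of $C(N)$ on the finite phase space $\overline{H}(N)\cong\Z_N^2$, identify its image via the commutator pairing and its kernel via irreducibility of the Heisenberg representation, establish surjectivity by implementing a generating set of $\SL(2,\Z_N)$, and then deduce the statements about $\pi$ from the identity $\Ad_U=\Ad_V\iff U\sim V$ recorded above. For the construction: given $U\in C(N)$, conjugation by $U$ is an automorphism of $H(N)$ fixing the central subgroup $I(N)$ pointwise, hence it descends to an automorphism of $\overline{H}(N)=H(N)/I(N)$; transporting this along $\nu\colon\Z_N^2\to\overline{H}(N)$ produces an automorphism of the abelian group $\Z_N^2$, that is, multiplication by some $A\in\GL(2,\Z_N)$, which unwinds exactly to $UW(u)U^{-1}\sim W(Au)$ for all $u$. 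Injectivity of $\nu$ makes $A$ unique and shows it depends only on $U_{/\sim}$, so $\overline{\pi}(U_{/\sim}):=A$ is well defined, and it is a homomorphism since $(U_1U_2)W(u)(U_1U_2)^{-1}\sim U_1 W(A_2 u)U_1^{-1}\sim W(A_1 A_2 u)$.

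Next I would pin down the image and the kernel. From the defining relations one computes the commutator $W(u)W(v)W(u)^{-1}W(v)^{-1}=\omega_N^{\langle u,v\rangle}$, where $\langle(k_1,\ell_1),(k_2,\ell_2)\rangle=k_1\ell_2-k_2\ell_1$ is the standard symplectic form on $\Z_N^2$, nondegenerate because $\omega_N$ has exact order $N$. Conjugation by $U$ fixes this scalar, so $\langle Au,Av\rangle\equiv\langle u,v\rangle\pmod{N}$ for all $u,v$, and for a $2\times2$ matrix this forces $\det A=1$; hence $\overline{\pi}$ takes values in $\SL(2,\Z_N)$, which is the existence-and-uniqueness claim. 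For the kernel, if $U\sim W(w)\in H(N)$ then $UW(u)U^{-1}=\omega_N^{\langle w,u\rangle}W(u)\sim W(u)$, so $\overline{H}(N)\subseteq\ker\overline{\pi}$; conversely, $U_{/\sim}\in\ker\overline{\pi}$ means $UW(u)U^{-1}=\chi(u)W(u)$, and comparing with the composition rule of the $W(u)$ shows $\chi\colon\Z_N^2\to\U(1)$ is a character, hence $\chi(u)=\omega_N^{\langle w,u\rangle}$ for a unique $w$ (the pairing is perfect). Then $W(w)^{-1}U$ commutes with $X_N$ and $Z_N$, and since the $N^2$ operators $W(u)$ form a basis of $\M_N(\C)$ it must be a scalar, so $U\in H(N)$; therefore $\ker\overline{\pi}=\overline{H}(N)$.

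The step I expect to be the real obstacle is surjectivity of $\overline{\pi}$. Here I would use that $\SL(2,\Z_N)$ is generated by $S:=\mat{0}{-1}{1}{0}$ and the lower transvection $T':=\mat{1}{0}{-1}{1}$ — indeed $T'=S\,\mat{1}{1}{0}{1}\,S^{-1}$, and $S$ together with $\mat{1}{1}{0}{1}$ generate $\SL(2,\Z)$, hence their reductions generate $\SL(2,\Z_N)$ — and then exhibit Clifford unitaries implementing them. The finite Fourier transform $F=\frac{1}{\sqrt{N}}\sum_{j,k}\omega_N^{jk}\ket{j}\bra{k}$ satisfies $FX_NF^{-1}=Z_N^{-1}$ and $FZ_NF^{-1}=X_N$, so $F\in C(N)$ with $\overline{\pi}(F_{/\sim})=S^{-1}$; the diagonal quadratic phase $D\colon\ket{j}\mapsto\tau^{j^2}\ket{j}$ satisfies $DZ_ND^{-1}=Z_N$ and $DX_ND^{-1}\sim W(1,-1)$, so $\overline{\pi}(D_{/\sim})=T'$. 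Both identities are direct computations. The delicate point — and the reason that even $N$ is not automatic — is that the naive quadratic phase asks for the exponent $j^2/2$; it is precisely the normalization $\tau=-e^{i\pi/N}$, for which $\tau^2=\omega_N$ and $j\mapsto\tau^{j^2}$ is periodic modulo $N$ when $N$ is even, that makes $D$ a well-defined element of $\U(N)$.

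Finally, the assertions about $\pi$ are formal consequences. Since $\Ad_U=\Ad_V\iff U\sim V$, the rule $\pi(U)=\overline{\pi}(U_{/\sim})$ is the composite of the quotient map $C(N)\to\overline{C}(N)$ with $\overline{\pi}$, hence a group epimorphism, and its kernel is the preimage of $\overline{H}(N)=\ker\overline{\pi}$, namely $H(N)$ (because $I(N)\subseteq H(N)$). The first isomorphism theorem, applied to $\pi$ and to $\overline{\pi}$, then gives $C(N)/H(N)\cong\overline{C}(N)/\overline{H}(N)\cong\SL(2,\Z_N)$.
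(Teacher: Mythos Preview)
The paper does not prove this theorem at all --- it is stated with a citation to \cite{HPPT02} and then used as a black box. So there is no ``paper's own proof'' to compare against.

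Your argument is correct and is essentially the standard proof of this well-known fact. The steps --- descent of conjugation to $\overline{H}(N)\cong\Z_N^2$ giving a $\GL(2,\Z_N)$-matrix, the commutator identity $W(u)W(v)W(u)^{-1}W(v)^{-1}=\omega_N^{\langle u,v\rangle}$ forcing $\det A=1$, the character argument together with irreducibility for the kernel, and realizing generators of $\SL(2,\Z_N)$ via the discrete Fourier transform and a quadratic phase for surjectivity --- are all sound. Two minor remarks: your periodicity check for $j\mapsto\tau^{j^2}$ actually goes through for all $N$ (not only even $N$), since $\tau^{(j+N)^2-j^2}=(\tau^N)^{2j+N}$ and $\tau^N=(-1)^{N+1}$, so the exponent $2j+N$ has the right parity in either case; and the surjectivity of $\SL(2,\Z)\to\SL(2,\Z_N)$, which you invoke implicitly, is itself a nontrivial (though classical) fact. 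Neither affects the validity of the proof.
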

% Thus the groups entering (\ref{9}) are isomorphic to
% \begin{eqnarray}
% \P_N  \cong   \mathbb{Z}_N \times \mathbb{Z}_N \equiv \Z_N^2, \\
% \bar{C}(N) / \P_N  \cong 
% N_{\mathcal{M}_N}(\P_N)/\P_{N}   \cong  \SL(2,\mathbb{Z}_N).
%\end{eqnarray}
% Here $\P_N$ is a
%normal subgroup of the normalizer $N_{\mathcal{M_N}}(\P_N)$ with two
%commuting generators $\Ad_{Z_N}$, $\Ad_{X_N}$.
We can therefore write the following exact sequences
\begin{equation} \label{exact1}
     1 \rightarrow \Z_N^2 \stackrel{\nu}{\longrightarrow} \overline{C}(N)
\stackrel{\overline{\pi}}{\longrightarrow} \SL(2,\mathbb{Z}_N)  \rightarrow 1 .
\end{equation}
where $\nu(i,j)=W(i,j)_{/\sim}$ for every $(i,j)\in\Z^2_N$ and

\begin{equation} \label{exact2}
     1 \rightarrow H(N) \stackrel{\iota}{\longrightarrow} C(N)
\stackrel{\pi}{\longrightarrow} \SL(2,\mathbb{Z}_N)  \rightarrow 1 .
\end{equation}
where $\iota$ denotes the inclusion.

\section{Semidirect product and right splitting exact sequence}\label{semidirect}

Our main aim is to investigate the (projective) Clifford group from the view of the semidirect product. Let us therefore briefly recall the basic properties of semidirect product: Let $H$ and $K$ be groups and $\varphi: K\to Aut(H)$ be a group homomorphism from $K$ to the group $Aut(H)$ of all automorphisms of $H$. The outer semidirect product $K\ltimes_{\varphi}H$ is defined as a group with the underlying set $K\times H$ and the multiplication is given by the formula
 $$(x,h)\cdot (y,g)=(xy, h\cdot\varphi(x)(g))$$
 for all $(x,h),(y,g)\in K\times H$.
 
 For an exact sequence of homomorphisms of groups 
$$ 
\begin{tikzcd}
1 \arrow{r}{} &H \arrow{r}{i} & G \arrow{r}{\pi}  & K  \arrow{r}{} &1
% 1 \arrow[r,""] &H \arrow[r,"i"] & G \arrow[r,"\pi"]  & K  \arrow[r,""] &1
\end{tikzcd}
$$

the following are equivalent:
\begin{enumerate}
 \item[(i)] There is a group homomorphism $\varphi: K\to Aut(H)$ and a group isomorphism $\Phi: K\ltimes_{\varphi} H \to G$ such that $\Phi(1,h)=i(h)$ and $\pi(\Phi(x,1))=x$ for every $h\in H$ and $x\in K$. 
 \item[(ii)] The sequence is right splitting, i.e. there is a homomorphism $\mu: K\to G$ such that $\pi\circ\mu=id_K$.
\end{enumerate}

In particular, $\varphi$ can be expressed by means of $\mu$ in the form $$\varphi(x)(h)=i^{-1}(\mu(x)\cdot i(h)\cdot \mu(x)^{-1})$$ for all $x\in K$ and $h\in H$.

It is well known \cite{DuttaPrasad,Gross} that for $N$ odd both the Clifford group and the projective Clifford group are semidirect products arising from the exact sequences (\ref{exact1}) and (\ref{exact2})
$$
\overline{C}(N) \cong   \SL(2,\mathbb{Z}_N)\ltimes_{\overline{\varphi}} \mathbb{Z}_N^2 ,
$$
where $\overline{\varphi}(A)(u)=Au$ for $A\in \SL(2,\mathbb{Z}_N)$ and $u\in\Z_{N}^2$
 and
$$
C(N) \cong   \SL(2,\mathbb{Z}_N)\ltimes_{\varphi} H(N) ,
$$
where $\varphi(A)(\lambda W(u))=\lambda W(Au)$ for $A\in \SL(2,\mathbb{Z}_N)$, $\lambda\in \U(1)$ and $u\in\Z_{N}^2$.

%  Let $\tau=-e^{\frac{\mathbf{i}\pi}{N}}$. Then the multiplicative order of $\tau$ is equal to $N$ if $N$ is odd and equal to $2N$ if $N$ is even.
% The Weyl operators are defined according to \cite{appleby}, as $W(i,j)=\tau^{ij}X_N^iZ_N^j$ for $(i,j)\in\Z_N^2$ if $N$ is odd and $(i,j)\in \Z_{2N}^2$ if $N$ is even. 

This fact constitutes the motivation
to explore in detail the case of the semidirect product 
for even $N$ in the present article. 

%But even dimensions, to which this paper is devoted, are considerably more difficult to handle.

% They consist of  $N^3$  elements for odd $N$,
% $$  H(N) = \Pi_N = \left\lbrace \omega_N^l Z^i_N X_N^j  \ |
% \   l,i,j = 0,1,\ldots,N-1\right\rbrace,
% $$
%  and $2N^3$ elements for even $N$,
% $$  H(N) = \left\lbrace \tau_N^l Z^i_N X_N^j \ |
% \     i,j = 0,1,\ldots,N-1, l = 0,1,\ldots,2N-1 \right\rbrace
% \quad \mbox{for even} \  N. $$

\section{Projective Clifford groups in even dimensions}\label{section_6}

Our point of departure is Theorem 1 in Appleby's article
\cite{appleby}.
% According to this theorem for odd $N$ there is an isomorphism
% $$
% C(N)/U(1) \cong SL(2,\Z_{N})\ltimes_{\varphi} \Z_{N}^2,
% $$
%  where $\phi(A)(v)=Av$ for all $A\in $
 For a matrix $A\in \SL(2,\Z_{2N})$ let $[A]_N\in \SL(2,\Z_{N})$ denote such a matrix where all entries in $A$ are considered to be modulo $N$. 
 In general, the relation modulo $k\in\N$ will be denoted  by $\equiv_k$.

%Notice that in case of $N$ even the cosets $W(i,j)_{/\sim}$ are well and uniquely defined by integers $i,j$ modulo $N$. 
 
\begin{theorem}\cite[Theorem 1]{appleby}\label{appleby}
 Let $N$ be an even integer and let $$\psi:\SL(2,\Z_{2N})\to Aut(\Z_N^2)$$ be a homomorphism defined as $\psi(A)(u)=[A]_N\cdot u$ for all $A\in \SL(2,\Z_{2N})$ and $u\in\Z_N^2$.
 Then there is a group epimorphism 
 $$f: \SL(2,\Z_{2N})\ltimes_\psi \Z_{N}^2 \to \overline{C}(N), \ \ (A,u)\mapsto U_{/\sim}$$
 with the property $UW(w)U^{-1}\sim W([A]_N\cdot w)$ for every $w\in\Z_{N}^2$ and with the kernel
 $$
K=\ker f=\left\{\left(\mat{1+Nr}{Ns}{Nt}{1+Nr}, \left(\begin{array}{c}\frac{N}{2}s\\[2mm]\frac{N}{2}t\end{array}\right) 
\right)\Big| r,s,t\in\{0,1\}\right\}
$$ 
consisting of $8$ elements. Hence the projective Clifford group is described by the isomorphism   $$\widetilde{f}: \big(\SL(2,\Z_{2N})\ltimes_\psi \Z_{N}^2\big)/K \to \overline{C}(N), \ \ (A,u)_{/K}\mapsto f(A,u)\ .$$
\end{theorem}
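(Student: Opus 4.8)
The plan is to derive the statement from the exact sequence~(\ref{exact1}) together with Theorem~\ref{htpp}. Write $q\colon\SL(2,\Z_{2N})\to\SL(2,\Z_N)$ for the reduction homomorphism $A\mapsto[A]_N$; since $N$ is even, $2N$ and $N$ have the same prime divisors, so $q$ is surjective and $|\ker q|=(2N)^3/N^3=8$. The central step is to construct a group homomorphism $\mu\colon\SL(2,\Z_{2N})\to\overline{C}(N)$ with $\overline{\pi}\circ\mu=q$ --- that is, a homomorphism which, via conjugation on the Weyl operators, realizes the symplectic action reduced mod $N$ but is now indexed by $\SL(2,\Z_{2N})$. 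Granting such a $\mu$, I would set $f(A,u)=\nu(u)\,\mu(A)$. Then $f$ is a homomorphism by a short check: by Theorem~\ref{htpp}, conjugation by $\mu(A)$ acts on $\overline{H}(N)=\nu(\Z_N^2)$ as multiplication by $\overline{\pi}(\mu(A))=[A]_N=\psi(A)$, so $\mu(A)\,\nu(v)\,\mu(A)^{-1}=\nu(\psi(A)(v))$ and hence $f(A,u)f(B,v)=\nu\bigl(u+\psi(A)(v)\bigr)\mu(AB)=f\bigl((A,u)(B,v)\bigr)$; moreover the displayed property $UW(w)U^{-1}\sim W([A]_Nw)$ holds because $W(u)$ and $W([A]_Nw)$ commute up to a phase. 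Surjectivity of $f$ is automatic, since $\overline{\pi}(\operatorname{im}\mu)=\SL(2,\Z_N)$ while $\ker\overline{\pi}=\operatorname{im}\nu\subseteq\operatorname{im}f$.

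Constructing $\mu$ is the main obstacle, and it is precisely the point where $\Z_{2N}$ rather than $\Z_N$ is forced. I would do it with explicit operators: fix a finite presentation of $\SL(2,\Z_{2N})$ on the standard generators $S=\mat{0}{-1}{1}{0}$ and $T=\mat{1}{1}{0}{1}$, assign to $S$ the coset of the normalized finite Fourier transform on $\C^N$ and to $T$ the coset of a diagonal quadratic phase gate --- the usual Clifford representatives written in terms of the square root $\tau=-e^{i\pi/N}$ of $\omega_N$ --- check that conjugation by these realizes $w\mapsto Sw$ and $w\mapsto Tw$ on the Weyl operators modulo $N$, and then verify that \emph{every} defining relation of $\SL(2,\Z_{2N})$ is satisfied \emph{exactly} by the corresponding cosets in $\overline{C}(N)$. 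This relation check is the heart of the proof, and it is genuinely necessary to pass to $\SL(2,\Z_{2N})$: the analogous candidate cosets indexed by $\SL(2,\Z_N)$ fail a defining relation, because a power such as $T^N$ which is trivial in $\SL(2,\Z_N)$ corresponds to a \emph{nontrivial} Heisenberg coset in $\overline{C}(N)$ (namely $(Z_N^{N/2})_{/\sim}$), so~(\ref{exact1}) does not split over $\SL(2,\Z_N)$ while its pullback along $q$ does. The signs responsible are exactly Gauss‑sum signs; equivalently, the step can be phrased as the vanishing after pullback along $q$ of the extension class of~(\ref{exact1}) in $H^2(\SL(2,\Z_N);\Z_N^2)$, or as the descent of the metaplectic representation of $\SL(2,\Z)$ to level $2N$ in even dimension.

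Once $\mu$ is in hand the kernel is read off by an elementary computation. If $(A,u)\in\ker f$ then $[A]_N=\overline{\pi}\bigl(f(A,u)\bigr)=I$, so $A=I+NB$ with $B\in\M_2(\Z)$ whose entries may be taken in $\{0,1\}$; imposing $\det A\equiv_{2N}1$ and using $N^2\equiv_{2N}0$ forces $\operatorname{tr}B\equiv_2 0$, i.e.\ $A=\mat{1+Nr}{Ns}{Nt}{1+Nr}$ with $r,s,t\in\{0,1\}$ --- the eight matrices of $\ker q$. For each such $A$ the element $\mu(A)$ lies in $\overline{H}(N)$, and evaluating the explicit operator attached to $A=I+NB$ (another one‑line Weyl‑operator computation in which the factors $\tau$ again intervene) gives $\mu(A)=W\bigl(\tfrac{N}{2}s,\tfrac{N}{2}t\bigr)_{/\sim}$; hence $(A,u)\in\ker f$ exactly when $\nu(u)=\mu(A)^{-1}$, that is $u=-\bigl(\tfrac{N}{2}s,\tfrac{N}{2}t\bigr)=\bigl(\tfrac{N}{2}s,\tfrac{N}{2}t\bigr)$ in $\Z_N$. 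This is precisely the eight‑element set $K$ of the statement, and $\widetilde{f}$ is then an isomorphism by the first isomorphism theorem applied to the epimorphism $f$.
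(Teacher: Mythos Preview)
The paper does not supply a proof of this theorem at all: it is quoted verbatim as \cite[Theorem~1]{appleby} and used as a black box, so there is no ``paper's own proof'' to compare against. Your proposal is therefore an attempt to reconstruct Appleby's argument rather than to reproduce anything in the present paper.

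As a reconstruction, your outline is structurally sound --- build a section $\mu\colon\SL(2,\Z_{2N})\to\overline{C}(N)$ of $\overline{\pi}$ over $q$, set $f(A,u)=\nu(u)\mu(A)$, and read off the kernel --- and your counting argument for $|\ker q|=8$ is correct. But the proposal has a genuine gap at the one place you yourself flag as ``the heart of the proof'': you assert that the Fourier and quadratic-phase cosets satisfy every defining relation of $\SL(2,\Z_{2N})$ in $\overline{C}(N)$, yet you do not carry out a single relation check, nor do you specify which presentation of $\SL(2,\Z_{2N})$ you are using. This is exactly the computation that distinguishes $2N$ from $N$, and it cannot be waved through; Appleby's proof (and the parallel treatments in the literature) devote most of their effort to it. Likewise, the claim that $\mu(I+NB)=W(\tfrac{N}{2}s,\tfrac{N}{2}t)_{/\sim}$ is stated as ``another one-line Weyl-operator computation'' but not performed; since $\mu$ was only defined on the generators $S,T$, evaluating it on an arbitrary element of $\ker q$ requires either writing that element as a word in $S,T$ or proving a closed formula for $\mu$, neither of which you do.

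Your side remark that the image of $T^N$ is $(Z_N^{N/2})_{/\sim}$, and that this obstructs a splitting over $\SL(2,\Z_N)$, is morally correct and in fact anticipates the paper's own Lemma~\ref{1} and Proposition~\ref{2.2}; but note that the paper goes on to show that for $N\equiv_4 2$ the sequence \emph{does} split (Theorem~\ref{6.4}), so your parenthetical ``(\ref{exact1}) does not split over $\SL(2,\Z_N)$'' is too strong as stated.
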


Using this theorem we can state  the following diagram 

\begin{equation} 
\begin{tikzcd}\label{diagram}
\overline{C}(N) \arrow{r}{\overline{\pi}}  & \SL(2,\Z_{N}) \\%
\big(\SL(2,\Z_{2N})\ltimes_\psi \Z_{N}^2\big)/K \arrow[swap]{ur}{\sigma=\overline{\pi}\circ\widetilde{f}}  \arrow{u}{\widetilde{f}} & 
\end{tikzcd}
\end{equation}
where  $\sigma\left((A,u)_{/K}\right)=[A]_N$ for every $(A,u)\in \SL(2,\Z_{2N})\ltimes \Z_{N}^2$.
\footnote{Indeed, by the definition of $\sigma$ we have $\sigma\left((A,u)_{/K}\right)=\overline{\pi}(U_{/\sim})$ for some $U\in C(N)$ such that $U_{/\sim}=f(A,u)$. By Theorem \ref{appleby}, it holds that $UW(w)U^{-1}\sim W([A]_N\cdot w)$ for every $w\in\Z_{N}^2$. Hence the matrix $[A]_N$ has the property defining the map $\overline{\pi}$ and therefore,  by Theorem \ref{htpp}, we obtain the equality $\sigma\left((A,u)_{/K}\right)=\overline{\pi}(U_{/\sim})=[A]_N$.}

This observation allows us to characterize the property that the projective Clifford groups is a (natural) semidirect product.

\begin{proposition}\label{character_1}
 Let $N$ be an even integer. Then the projective Clifford group $\overline{C}(N)$ is a semidirect product corresponding to the exact sequence $(\ref{exact1})$ (in terms of Section $\ref{semidirect}$) if and only if there is a group homomorphism $$\mu:  \SL(2,\mathbb{Z}_N) \to \big(\SL(2,\Z_{2N})\ltimes_\psi \Z_{N}^2\big)/K$$
 such that  $\sigma\circ\mu=id_{\SL(2,\Z_N)}$.
\end{proposition}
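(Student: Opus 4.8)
The plan is to reduce the statement to the equivalence ``semidirect product $\Leftrightarrow$ right splitting'' recalled in Section \ref{semidirect}, and then simply to transport a splitting back and forth along the vertical isomorphism $\widetilde{f}$ of diagram (\ref{diagram}). First I would note that, by the equivalence (i)$\Leftrightarrow$(ii) of Section \ref{semidirect} applied to the exact sequence (\ref{exact1}), the group $\overline{C}(N)$ is a semidirect product corresponding to (\ref{exact1}) if and only if (\ref{exact1}) is right splitting, i.e.\ if and only if there exists a group homomorphism $\overline{\mu}:\SL(2,\Z_N)\to\overline{C}(N)$ with $\overline{\pi}\circ\overline{\mu}=id_{\SL(2,\Z_N)}$. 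Hence it suffices to prove that such a $\overline{\mu}$ exists precisely when a $\mu$ as in the statement does.

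Both implications are then immediate from the identity $\sigma=\overline{\pi}\circ\widetilde{f}$ of diagram (\ref{diagram}) and from the fact, established in Theorem \ref{appleby}, that $\widetilde{f}$ is a group isomorphism. Indeed, given $\mu$ with $\sigma\circ\mu=id_{\SL(2,\Z_N)}$, the composite $\overline{\mu}:=\widetilde{f}\circ\mu$ is a group homomorphism and $\overline{\pi}\circ\overline{\mu}=\overline{\pi}\circ\widetilde{f}\circ\mu=\sigma\circ\mu=id_{\SL(2,\Z_N)}$; conversely, given $\overline{\mu}$ with $\overline{\pi}\circ\overline{\mu}=id_{\SL(2,\Z_N)}$, the composite $\mu:=\widetilde{f}^{-1}\circ\overline{\mu}$ is a group homomorphism and $\sigma\circ\mu=\overline{\pi}\circ\widetilde{f}\circ\widetilde{f}^{-1}\circ\overline{\mu}=\overline{\pi}\circ\overline{\mu}=id_{\SL(2,\Z_N)}$.

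I do not expect any genuine obstacle here: Proposition \ref{character_1} is really just a transport of structure along $\widetilde{f}$, so its proof consists of the two short verifications above together with a reference to Section \ref{semidirect}. The only thing to keep in mind is that diagram (\ref{diagram}) --- in particular $\sigma=\overline{\pi}\circ\widetilde{f}$ and $\sigma((A,u)_{/K})=[A]_N$ --- has already been justified via Theorems \ref{htpp} and \ref{appleby}, and that any splitting produced in this way automatically realizes the natural action $\overline{\varphi}(A)(u)=Au$, since for $U_{/\sim}=\overline{\mu}(A)$ Theorem \ref{htpp} forces $\overline{\mu}(A)\,\nu(u)\,\overline{\mu}(A)^{-1}=\nu(Au)$. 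The substantial question --- for which even $N$ a homomorphism $\mu$ with $\sigma\circ\mu=id_{\SL(2,\Z_N)}$ actually exists --- is not part of this proposition; it is the subject of Sections \ref{section_6}--\ref{section_12}, the advantage of the present reformulation being that, unlike $\overline{C}(N)$, the group $\big(\SL(2,\Z_{2N})\ltimes_\psi\Z_N^2\big)/K$ admits a completely explicit description over $\SL(2,\Z_N)$ that can be attacked by means of a group presentation of $\SL(2,\Z_N)$.
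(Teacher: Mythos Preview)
Your proof is correct and essentially identical to the paper's: both reduce the semidirect-product condition to right splitting via Section~\ref{semidirect}, and then pass between a splitting $\overline{\mu}$ of $\overline{\pi}$ and a splitting $\mu$ of $\sigma$ by composing with the isomorphism $\widetilde{f}$ (resp.\ $\widetilde{f}^{-1}$), using $\sigma=\overline{\pi}\circ\widetilde{f}$. The extra remarks you add about the induced action $\overline{\varphi}$ and the role of the reformulation are accurate but go beyond what the proposition itself requires.
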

\begin{proof}
 According to Section \ref{semidirect}, the group $\overline{C}(N)$ is a semidirect product corresponding to the exact sequence $(\ref{exact1})$
  if and only if the sequence $(\ref{exact1})$ is right splitting. This means that there is a homomorphism $\mu':\SL(2,\mathbb{Z}_N)\to \overline{C}(N)$ such that $\overline{\pi}\circ\mu'=id_{\SL(2,\Z_N)}$. Now,  $\mu=\widetilde{f}^{-1}\circ \mu'$ has (according to (\ref{diagram})) the desired property and, conversely, having the homomorphism $\mu$ we simply set $\mu'=\widetilde{f}\circ \mu$.
\end{proof}

Notice that the keypoint in Proposition \ref{character_1} is that $\mu$ has to be a \emph{group  homomorphism} (constructing $\mu$ only as a map between sets is no problem). For the purpose of constructing of homomorphisms between groups suitable presentations of groups are a very useful tool. 

Let us briefly recall the main idea. Let $K$ be a group with generators $e_1,\dots,e_n\in K$ that fulfill the defining relation of $K$ presented by some set  $\set{w_{\alpha}(x_1,\dots,x_n)}{\alpha\in S}$ of terms  in free variables $x_1,\dots,x_n$. This means that $w_{\alpha}(e_1,\dots,e_n)=1$ for every $\alpha\in S$ and the group $K$ is isomorphic to a factor of a free group with the free basis $\{x_1,\dots,x_n\}$, where the corresponding kernel  is determined by the set $\set{w_\alpha}{\alpha\in S}$. Assume now, that for a group $G$ there are elements $g_1,\dots,g_n\in G$ such that $w_{\alpha}(g_1,\dots,g_n)=1$ for every $\alpha\in S$. Then there is a unique group homomorphism $\mu:K\to G$ such that $\mu(e_i)=g_i$ for every $i=1,\dots,n$.

We will use the presentation of $\SL(2,\Z_N)$ with  two generators $t$ and $r$ fulfilling the defining relations:
\begin{itemize}
	\item $t^{N}=1$, $r^N=1$,
	\item $t^{k}r^{\ell}=r^{\ell}t^{k}$ for $k, \ell\in\N$ such that $N=k\ell$ and $\gcd(k,\ell)=1$,
	\item $(t^k r^{\ell}t^k)^2=(trt)^2$ for $1\leq k,\ell\leq N-1$ such that $k\ell\equiv_N 1$,
	\item $t^k r^{\ell}t^k=r^{\ell}t^k r^{\ell}$ for $1\leq k,\ell\leq N-1$ such that $k\ell\equiv_N 1$.
\end{itemize}
The corresponding matrices in $\SL(2,\Z_N)$ are $t=\left[\begin{smallmatrix}1 & 1\\ 0 & 1\end{smallmatrix}\right]_{N}$ and $r=\left[\begin{smallmatrix}1 & 0\\ -1 & 1\end{smallmatrix}\right]_{N}$. 

This presentation is equivalent to the presentation introduced in \cite[Chapter 1, Section 1.4]{hopf}, where we have added the condition $r^N=1$ to allow only the non-negative powers in the remaining defining conditions.

% Our goal is to show that there is no monomorphism 
% $$
% \nu:SL(2,\Z_N)\to \left(SL(2,\Z_{2N})\ltimes \Z_{N}^2\right)/K
% $$ 
% such that for 
% $$\pi:\big(SL(2,\Z_{2N})\ltimes \Z_{N}^2\big)/K\to SL(2,\Z_N)$$
% $$(M,h)_{/K}\mapsto [M]_{N}$$
%  $\pi\circ\nu=id_{SL(2,\Z_N)}$ holds. 
%  This means that the short exact sequence  (\ref{9})  of projective groups 
% $$\underbrace{H(N)/U(1)}_{\cong\Z_{N}^2}\to C(N)/U(1)\to 
% \underbrace{C(N)/H(N)}_{\cong SL(2,\Z_N)}$$ 
%  is non-splitting, or equivalently, that  the projective Clifford group 
%  $C(N)/U(1)$ is not a natural semidirect product of $SL(2,\Z_N)$ and $\Z_{N}^2$.

Summarizing our approach we obtain the following characterization.

\begin{proposition}\label{character_2}
  Let $N$ be an even integer. Then the projective Clifford group $\overline{C}(N)$ is a semidirect product corresponding to the exact sequence $(\ref{exact1})$ (in terms of Section $\ref{semidirect}$) if and only if there are $T,R\in \SL(2,\Z_{2N})\ltimes_{\psi} \Z_{N}^2$ fulfilling the conditions:
\begin{enumerate}
\item[(i)] $\sigma(T_{/K})=t=\left[\begin{smallmatrix}1 & 1\\ 0 & 1\end{smallmatrix}\right]_{N}$ and $\sigma(R_{/K})=r=\left[\begin{smallmatrix}1 & 0 \\-1 & 1\end{smallmatrix}\right]_{N}$,
 \item[(ii)] $T^{N},R^N\in K$
 \item[(iii)] $T^{k}R^{\ell}(R^{\ell}T^{k})^{-1}\in K$ for $k, \ell\in\N$ such that $N=k\ell$ and $\gcd(k,\ell)=1$
 \item[(iv)] $(T^k R^{\ell}T^k)^2(TRT)^{-2}\in K$ for $1\leq k,\ell\leq N-1$ such that $k\ell\equiv_{N} 1$
 \item[(v)] $T^k R^{\ell}T^k(R^{\ell}T^k R^{\ell})^{-1}\in K$ for $1\leq k,\ell\leq N-1$ such that $k\ell\equiv_{N} 1$.
\end{enumerate}
In such a case we set the homomorphism $\mu$ in Proposition $\ref{character_1}$ as (a unique) extension of the map $t\mapsto T_{/K}$, $r\mapsto R_{/K}$.
\end{proposition}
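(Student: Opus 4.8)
The plan is to deduce this characterization from Proposition \ref{character_1} by translating the condition ``$\sigma\circ\mu=\mathrm{id}$ for some homomorphism $\mu$'' into the language of the chosen presentation of $\SL(2,\Z_N)$. Let $q\colon \SL(2,\Z_{2N})\ltimes_{\psi}\Z_N^2 \to \big(\SL(2,\Z_{2N})\ltimes_{\psi}\Z_N^2\big)/K$ denote the canonical epimorphism (recall that $K=\ker f$ is a normal subgroup by Theorem \ref{appleby}, so this quotient makes sense and $\sigma$ is a well-defined homomorphism on it by diagram $(\ref{diagram})$). The first step is the elementary remark that for any word $w(x_1,x_2)$ in two free variables and any $T,R\in\SL(2,\Z_{2N})\ltimes_{\psi}\Z_N^2$ one has $w(T_{/K},R_{/K})=q\big(w(T,R)\big)$, so $w(T_{/K},R_{/K})=1$ if and only if $w(T,R)\in K$. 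Rewriting the four families of defining relations of the presentation of $\SL(2,\Z_N)$ in the form $w_{\alpha}(t,r)=1$, this remark identifies conditions (ii)--(v) with the single statement that $T_{/K}$ and $R_{/K}$ satisfy all the defining relations of $\SL(2,\Z_N)$ inside $\big(\SL(2,\Z_{2N})\ltimes_{\psi}\Z_N^2\big)/K$; likewise condition (i) is the statement $\sigma(T_{/K})=t$, $\sigma(R_{/K})=r$.

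For the direction ``$\Rightarrow$'' I would start from a homomorphism $\mu$ as furnished by Proposition \ref{character_1}, pick (using surjectivity of $q$) any $T,R$ with $T_{/K}=\mu(t)$ and $R_{/K}=\mu(r)$, and then verify (i)--(v): condition (i) is immediate from $\sigma\circ\mu=\mathrm{id}$, and for each relation $w_{\alpha}$ one computes $w_{\alpha}(T_{/K},R_{/K})=\mu\big(w_{\alpha}(t,r)\big)=\mu(1)=1$, whence $w_{\alpha}(T,R)\in K$ by the remark above; running this over the four relation families yields (ii)--(v). For the direction ``$\Leftarrow$'', starting from $T,R$ satisfying (i)--(v), conditions (ii)--(v) say exactly that $T_{/K},R_{/K}$ satisfy the defining relations of $\SL(2,\Z_N)$, so the universal property of the presentation recalled just before the statement produces a unique group homomorphism $\mu\colon\SL(2,\Z_N)\to\big(\SL(2,\Z_{2N})\ltimes_{\psi}\Z_N^2\big)/K$ with $\mu(t)=T_{/K}$ and $\mu(r)=R_{/K}$; then $\sigma\circ\mu$ and $\mathrm{id}_{\SL(2,\Z_N)}$ are two homomorphisms out of $\SL(2,\Z_N)$ that agree on the generators $t,r$ by (i), hence coincide, and Proposition \ref{character_1} concludes that $\overline{C}(N)$ is a semidirect product corresponding to $(\ref{exact1})$. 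In both directions the homomorphism $\mu$ obtained is the asserted extension of $t\mapsto T_{/K}$, $r\mapsto R_{/K}$.

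I do not expect a genuine obstacle inside this statement: it is essentially a translation exercise, the only points to watch being the correct pairing of each of the four relation families with the corresponding condition among (ii)--(v), and the (harmless) observation that in the ``$\Rightarrow$'' direction the choice of the $q$-preimages $T,R$ of $\mu(t),\mu(r)$ is irrelevant, since (ii)--(v) are imposed only modulo $K$. The real work is deferred to the later sections, where one must actually exhibit such elements $T,R$ when $N/2$ is odd and prove their nonexistence when $4\mid N$.
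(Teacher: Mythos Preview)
Your proposal is correct and follows essentially the same argument as the paper's own proof: both directions are obtained from Proposition~\ref{character_1} by translating, via the presentation of $\SL(2,\Z_N)$, the existence of a splitting homomorphism $\mu$ into the requirement that the images $T_{/K},R_{/K}$ of the generators satisfy the defining relations (equivalently, that the words in $T,R$ lie in $K$). Your explicit introduction of the quotient map $q$ and the remark that the choice of $q$-preimages is irrelevant are minor elaborations on the paper's argument, not a different approach.
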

\begin{proof}
 By Proposition \ref{character_1}, the projective Clifford group $\overline{C}(N)$ is a semidirect product corresponding to the exact sequence $(\ref{exact1})$ if and only if there is a group homomorphism $\mu:  \SL(2,\mathbb{Z}_N) \to \big(\SL(2,\Z_{2N})\ltimes_\psi \Z_{N}^2\big)/K$
 such that  $\sigma\circ\mu=id_{\SL(2,\Z_N)}$ holds. 
 
 Having such a $\mu$, there are $T,R\in \SL(2,\Z_{2N})\ltimes_{\psi} \Z_{N}^2$ such that $\mu(t)=T_{/K}$ and $\mu(r)=R_{/K}$. Plugging these elements into the defining relations of $\SL(2,\mathbb{Z}_N)$ we easily obtain that the corresponding terms belong to the group $K$. 
 
 Conversely, assuming that the  conditions (i)-(v) are fulfilled for some $T,R\in \SL(2,\Z_{2N})\ltimes_{\psi} \Z_{N}^2$, there is a unique homomorphism $\mu:\SL(2,\mathbb{Z}_N) \to \big(\SL(2,\Z_{2N})\ltimes_\psi \Z_{N}^2\big)/K$ such that $\mu(t)=T_{/K}$ and $\mu(r)=R_{/K}$. Since $\sigma(\mu (t))=t$, $\sigma(\mu (r))=r$ and $t$ and $r$ are the generators of $\SL(2,\Z_{N})$, it follows that $\sigma\circ\mu=id_{\SL(2,\Z_{N})}$.
\end{proof}

% To procede with the proof by contradiction 
% we will assume the converse, i.e. that such a monomorphism $\nu$ exists. 

In the rest of this paper we are going to show that for an even integer $N$ the conditions (i)--(v) in Proposition \ref{character_2} lead to a contradiction if and only if $N$ is divisible by $4$ (Theorems \ref{6.1} and \ref{6.4}). 
%In this way we shall arrive at conditions under which $\nu$ doesn't exist.

\section{Preliminary computations}\label{section_7}

Let $N$ be an even number (in this and all remaining sections). To simplify our notation we will write  $\SL(2,\Z_{2N})\ltimes \Z_{N}^2$ instead of $\SL(2,\Z_{2N})\ltimes_{\psi} \Z_{N}^2$.
For the computational purposes, we define the following  maps $\g$ and $\f$ that determine both components of an element $(C,w)$ of the respective semidirect group product
$$\g:\SL(2,\Z_{2N})\ltimes \Z_{N}^2\to \SL(2,\Z_{2N}), \ \ (C,w)\mapsto C$$ and
$$\f:\SL(2,\Z_{2N})\ltimes \Z_{N}^2\to \Z_{N}^2, \ (C,w)\mapsto w\ .$$ 
The important basic properties of these maps are summarized in the following proposition.
\begin{proposition}\label{basic_0}
Let $S,P\in \SL(2,\Z_{2N})\ltimes \Z_{N}^2$. Then:
\begin{enumerate}
 \item[(i)] $\g(S\cdot P)=\g(S)\cdot\g(P)$, i.e., $\g$ is a group homomorphism.
 \item[(ii)] $\f(S\cdot P)=\f(S)+[\g(S)]_{N}\cdot\f(P)$.
 \item[(iii)] If $[\g(S)]_N=[\g(P)]_N$ then $\f(SP^{-1})=\f(S)-\f(P)$. 
\end{enumerate}
\end{proposition}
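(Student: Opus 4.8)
The plan is to unwind the definition of multiplication in the semidirect product $\SL(2,\Z_{2N})\ltimes_\psi\Z_N^2$; every assertion then falls out by reading off components. Write $S=(C,w)$ and $P=(D,z)$ with $C,D\in\SL(2,\Z_{2N})$ and $w,z\in\Z_N^2$, so that $\g(S)=C$, $\f(S)=w$, $\g(P)=D$, $\f(P)=z$. By the defining formula of the outer semidirect product recalled in Section \ref{semidirect}, applied with $\varphi=\psi$ and $\psi(C)(z)=[C]_N\cdot z$, one has
$$S\cdot P=(C,w)\cdot(D,z)=\bigl(CD,\ w+[C]_N\cdot z\bigr).$$
Comparing first components gives $\g(SP)=CD=\g(S)\cdot\g(P)$, which is (i); comparing second components gives $\f(SP)=w+[C]_N\cdot z=\f(S)+[\g(S)]_N\cdot\f(P)$, which is (ii). (One may also simply note that (i) is the general fact that the projection $K\ltimes_\varphi H\to K$ onto the first factor is a homomorphism.)

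For (iii) I would first record the inverse in the semidirect product. Since $\Z_N^2$ is written additively, solving $(D,z)\cdot(D^{-1},u)=(I,0)$ yields $z+[D]_N\cdot u=0$, hence $u=-[D]_N^{-1}\cdot z=-[D^{-1}]_N\cdot z$. Here I use that reduction of entries modulo $N$ is a ring homomorphism $\Z_{2N}\to\Z_N$ and therefore induces a group homomorphism $[\,\cdot\,]_N:\SL(2,\Z_{2N})\to\SL(2,\Z_N)$, so that in particular $[D^{-1}]_N=[D]_N^{-1}$. Thus $\f(P^{-1})=-[D^{-1}]_N\cdot z$, and applying (ii) to the pair $S,P^{-1}$ gives
$$\f(SP^{-1})=\f(S)+[\g(S)]_N\cdot\f(P^{-1})=w-[C]_N[D^{-1}]_N\cdot z=w-[CD^{-1}]_N\cdot z.$$
Now invoke the hypothesis $[\g(S)]_N=[\g(P)]_N$, i.e.\ $[C]_N=[D]_N$: then $[CD^{-1}]_N=[C]_N[D]_N^{-1}$ is the identity matrix of $\SL(2,\Z_N)$, so $\f(SP^{-1})=w-z=\f(S)-\f(P)$, as claimed.

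The computation is entirely routine, so there is no genuine obstacle; the only points needing (minor) care are the correct inversion of an element of the noncommutative group $\SL(2,\Z_{2N})\ltimes\Z_N^2$ and the compatibility $[D^{-1}]_N=[D]_N^{-1}$ of entrywise reduction with inversion, which is exactly what causes the factor $[CD^{-1}]_N$ in (iii) to collapse to the identity under the stated assumption.
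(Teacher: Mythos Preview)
Your proof is correct and follows essentially the same approach as the paper: parts (i) and (ii) are read off directly from the semidirect-product multiplication formula, and for (iii) you compute $P^{-1}=\bigl(D^{-1},-[D^{-1}]_N z\bigr)$, apply (ii), and use the hypothesis $[C]_N=[D]_N$ to reduce $[C]_N[D^{-1}]_N$ to the identity. The paper's argument is the same, only slightly terser about the inverse computation and the compatibility $[D^{-1}]_N=[D]_N^{-1}$.
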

\begin{proof}
 (i) and (ii) follow immediately from the definition of the semidirect product $\SL(2,\Z_{2N})\ltimes \Z_{N}^2$.
 
 (iii) It is easy to check that $P^{-1}=\big(\g(P)^{-1},-[\g(P)^{-1}]_N\cdot \f(P)\big)$. Assume now that $[\g(S)]_N=[\g(P)]_N$. Then $[\g(S)]_N\cdot [\g(P)^{-1}]_N=[I]_N$ and, by the part (ii), we obtain  $$\f(SP^{-1})=\f(S)+ [\g(S)]_N\cdot \f(P^{-1})=\f(S)- [\g(S)]_N\cdot [\g(P)^{-1}]_N\cdot \f(P)=$$
 $$=\f(S)- [I]_N\cdot \f(P)=\f(S)- \f(P)\ .$$
\end{proof}

%These important properties will be used in the sections that follow. 

Through the rest of the paper let $$T=(A,h)\in \SL(2,\Z_{2N})\ltimes \Z_{N}^2\ \ \text{and} \  \
 R=(B,h')\in \SL(2,\Z_{2N})\ltimes \Z_{N}^2$$ fulfill the condition (i) in Proposition \ref{character_2}, i.e. 
$$\sigma(T/_{K})=\matN{1}{1}{0}{1}\ \ \ \text{and} \ \ \ \sigma(R/_{K})=\matN{1}{0}{-1}{1} .
$$
By the definition of $\sigma$ (see the diagram (\ref{diagram})) we have
$$\matN{1}{1}{0}{1} =\sigma(T/_{K})=\sigma\left((A,h)_{/K}\right)=[A]_{N}$$
and $$\matN{1}{0}{-1}{1} =\sigma(R/_{K})=\sigma\left((B,h')_{/K}\right)=[B]_{N}\ .$$
The kernel of the map $\SL(2,\Z_{2N})\to \SL(2,\Z_{N})$, $E\mapsto[E]_N$ consists of elements $\left(\begin{smallmatrix}1+\alpha N & \beta N\\ \gamma N & 1+\alpha N\end{smallmatrix}\right)\in \SL(2,\Z_{2N})$ where $\alpha,\beta,\gamma\in\{0,1\}$. It follows therefore that
$$A=\mat{1}{1}{0}{1}\cdot\mat{1+aN}{bN}{cN}{1+aN}=\mat{1}{1}{0}{1}+N\mat{a+c}{a+b}{c}{a}$$
and 
$$B=\mat{1}{0}{-1}{1}\cdot\mat{1+a'N}{b'N}{c'N}{1+a'N}=\mat{1}{0}{-1}{1}+N\mat{a'}{b'}{c'-a'}{a'-b'}$$
for some $a,b,c,a',b',c'\in\{0,1\}$.

In the rest of the paper  we fix the meaning of $T,R, A, B$ and $a,b,c,a',b',c'$ and $h=\left(\begin{smallmatrix}u\\v\end{smallmatrix}\right)\in\Z_N^2$ and $h'=\left(\begin{smallmatrix}u'\\v'\end{smallmatrix}\right)\in\Z_N^2$.

Recall that $$\g(T)=A, \ \ \ \ \g(S)=B $$
and $$\f(T)=h, \ \ \ \ \f(S)=h' \ .$$

%The elements $\tilde{t}, \tilde{r}\in \big(SL(2,\Z_{2N})\ltimes \Z_{N}^2\big)/K$ fullfil the generating relations in $SL(2,\Z_{2N})\ltimes \Z_{N}^2$.

\

%First we will do some preliminary computations.

% \begin{proof}
%  ${N-k\choose 2}=\frac{(N-k)(N-k-1)}{2}=\frac{N^2-2kN+k^2-N+k}{2}\equiv_N\frac{k^2+k-N}{2}={k+1\choose 2}-\frac{N}{2}$
% \end{proof}

\begin{lemma}\label{basic}
Let $k\in\N$. Then:
\begin{enumerate}
\item[(i)] $(C,w)^k=\left(C^k,\left[\sum_{i=0}^{k-1}C^i\right]_N\cdot w\right)$ for $(C,w)\in \SL(2,\Z_{2N})\ltimes \Z_{N}^2$.
\item[(ii)] Let $U$ and $G$ be square matrices over $\Z_{2N}$. Then 
$$(U+N G)^k=U^k+N\sum\limits^{k}_{i=1}U^{i-1}GU^{k-i}\ .$$
\end{enumerate}
\end{lemma}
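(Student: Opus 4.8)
The statement to prove is Lemma~\ref{basic}, which has two parts: a formula for powers in the semidirect product, and a binomial-type expansion for $(U+NG)^k$ over $\Z_{2N}$.

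\medskip

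The plan is to prove both parts by straightforward induction on $k$. For part (i), I would base the induction at $k=1$, where the formula reads $(C,w)^1=(C^1,[\sum_{i=0}^{0}C^i]_N\cdot w)=(C,[I]_N\cdot w)=(C,w)$, which holds. For the inductive step, I would write $(C,w)^{k+1}=(C,w)^k\cdot(C,w)$ and apply the multiplication rule of the semidirect product $\SL(2,\Z_{2N})\ltimes_\psi\Z_N^2$ together with the inductive hypothesis. By Proposition~\ref{basic_0}(i)--(ii), the first component multiplies to $C^k\cdot C=C^{k+1}$, while the second component becomes
$$\left[\sum_{i=0}^{k-1}C^i\right]_N\cdot w+[\g((C,w)^k)]_N\cdot w=\left[\sum_{i=0}^{k-1}C^i\right]_N\cdot w+[C^k]_N\cdot w=\left[\sum_{i=0}^{k}C^i\right]_N\cdot w,$$
which is exactly the desired formula for exponent $k+1$. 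Here I use that $\psi(C^k)(w)=[C^k]_N\cdot w$ and that the reduction map $E\mapsto[E]_N$ is a ring homomorphism, so reduction commutes with sums and products of matrices.

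\medskip

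For part (ii), I would again induct on $k$, the case $k=1$ being the trivial identity $(U+NG)^1=U^1+NGU^0=U+NG$. For the step, I would multiply $(U+NG)^{k+1}=(U+NG)^k\cdot(U+NG)$, substitute the inductive hypothesis, and expand. The crucial observation is that we work over $\Z_{2N}$, so any term carrying a factor $N^2$ vanishes (since $N\cdot N\equiv_{2N}0$); hence the cross term $N\big(\sum_{i=1}^k U^{i-1}GU^{k-i}\big)\cdot NG$ drops out entirely. What remains is
$$U^{k+1}+NU^kG+N\sum_{i=1}^{k}U^{i-1}GU^{k-i}\cdot U=U^{k+1}+NU^kG+N\sum_{i=1}^{k}U^{i-1}GU^{(k+1)-i},$$
and reindexing the stray term $NU^kG$ as the $i=k+1$ summand gives $U^{k+1}+N\sum_{i=1}^{k+1}U^{i-1}GU^{(k+1)-i}$, completing the induction.

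\medskip

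Neither part poses a genuine obstacle; the only point requiring care is the bookkeeping in part (ii), namely remembering that the ambient ring is $\Z_{2N}$ so that $N^2\equiv_{2N}0$ kills the higher-order term, and correctly absorbing the leftover $NU^kG$ into the summation by a shift of index. One should also note explicitly that part (ii) does not require $U$ and $G$ to commute, which is why the summand must be written in the symmetric form $U^{i-1}GU^{k-i}$ rather than as a classical binomial coefficient times $U^{k-1}G$.
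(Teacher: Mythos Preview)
Your proof is correct and follows essentially the same approach as the paper: both parts are handled by a direct induction on $k$, with the key observation $N^2\equiv_{2N}0$ in part (ii). The only cosmetic difference is that in part (i) the paper writes the inductive step as $(C,w)\cdot(C,w)^k$ (left multiplication) while you use $(C,w)^k\cdot(C,w)$; both work equally well and your more detailed treatment of part (ii) simply spells out what the paper summarizes in one line.
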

\begin{proof}
 (i) We proceed by induction:  
 \begin{align*}
 (C,w)^{k+1}&=(C,w)\cdot(C,w)^k=(C,w)\cdot\left(C^k,\left[\sum_{i=0}^{k-1}C^i\right]_N\cdot w\right)=\\
 &=\left(C^{k+1},w+[C]_N\cdot\left[\sum_{i=0}^{k-1}C^i\right]_N\cdot w\right)=\left(C^{k+1},\left[\sum_{i=0}^{k}C^i\right]_N\cdot w\right)\ .
 \end{align*}

 (ii) Follows easily by induction and the fact that $N^{2}\equiv_{2N}0$.
\end{proof}

\begin{remark}\label{7.2}
 For $k\in\N$ we use the following identities  $$\sum_{i=1}^k i={k+1\choose 2}, \ \ \ \ \sum_{i=1}^k (k-i)={k\choose 2}, \ \ \ \ \sum_{i=1}^k i(k-i)={k+1\choose 3}\ .$$
\end{remark}

\begin{lemma}\label{product}
Let $k,\ell\in\N$. Then:
\begin{enumerate}
 \item[(i)] $\f(T^k)=\matN{k}{{k\choose 2}}{0}{k}\cdot h$ \ \  and \ \ $\f(R^\ell)=\matN{\ell}{0}{-{\ell\choose 2}}{\ell}\cdot h'$,
 \item[(ii)] $\f(T^k\cdot R^\ell)= \matN{k}{{k\choose 2}}{0}{k}\cdot h+\matN{\ell-k{\ell\choose 2}}{k\ell}{-{\ell\choose 2}}{\ell}\cdot h'$,
 \item[(iii)] $\f(R^\ell\cdot T^k)=\matN{k}{{k\choose 2}}{-k\ell}{k-\ell{k\choose 2}}\cdot h+\matN{\ell}{0}{-{\ell\choose 2}}{\ell}\cdot h'$.
\end{enumerate}
\end{lemma}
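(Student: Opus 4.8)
The plan is to unwind the three identities directly from the material already assembled, the only structural observation needed being that $E\mapsto[E]_N$ is a ring homomorphism $\M(2,\Z_{2N})\to\M(2,\Z_N)$; this lets me pass to $[A]_N=\matN{1}{1}{0}{1}$ and $[B]_N=\matN{1}{0}{-1}{1}$ before raising to powers or summing powers, whenever only the modulo-$N$ reduction of such an expression is required.

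For part (i) I would apply Lemma \ref{basic}(i) to $T=(A,h)$, obtaining $\f(T^k)=\bigl[\sum_{i=0}^{k-1}A^i\bigr]_N\cdot h$. Reducing modulo $N$ first turns $A^i$ into $\matN{1}{i}{0}{1}$, so the bracketed sum equals $\matN{k}{\sum_{i=0}^{k-1}i}{0}{k}$, and Remark \ref{7.2} gives $\sum_{i=0}^{k-1}i={k\choose 2}$ (one must take ${k\choose 2}$ here, not ${k+1\choose 2}$). The formula for $\f(R^\ell)$ follows in exactly the same way from $R=(B,h')$, now with $[B^i]_N=\matN{1}{0}{-i}{1}$.

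For parts (ii) and (iii) I would use the cocycle identity $\f(SP)=\f(S)+[\g(S)]_N\cdot\f(P)$ of Proposition \ref{basic_0}(ii). Because $\g$ is multiplicative (Proposition \ref{basic_0}(i)) we have $\g(T^k)=A^k$ and $\g(R^\ell)=B^\ell$, hence $[\g(T^k)]_N=\matN{1}{k}{0}{1}$ and $[\g(R^\ell)]_N=\matN{1}{0}{-\ell}{1}$. Substituting these together with the expressions from part (i) into the cocycle identity --- for (ii) with $S=T^k$, $P=R^\ell$, and for (iii) with $S=R^\ell$, $P=T^k$ --- and performing the single $2\times2$ matrix product $\matN{1}{k}{0}{1}\cdot\matN{\ell}{0}{-{\ell\choose 2}}{\ell}$ (resp. $\matN{1}{0}{-\ell}{1}\cdot\matN{k}{{k\choose 2}}{0}{k}$) produces precisely the matrices displayed in the statement.

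There is no genuine obstacle; the argument is bookkeeping, and the only points needing attention are the legitimacy of reducing modulo $N$ before exponentiating (which is exactly the ring-homomorphism property) and the binomial identities of Remark \ref{7.2}. It is worth noting that Lemma \ref{basic}(ii) is \emph{not} used here: computing the exact powers of $A$ and $B$ over $\Z_{2N}$ is only needed later, where the full $\Z_{2N}$-level information about elements such as $T^N$ is relevant.
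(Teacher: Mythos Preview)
Your proposal is correct and follows essentially the same route as the paper: Lemma~\ref{basic}(i) plus the reduction $[A]_N=\matN{1}{1}{0}{1}$, $[B]_N=\matN{1}{0}{-1}{1}$ for part~(i), and Proposition~\ref{basic_0}(ii) together with $[A^k]_N=\matN{1}{k}{0}{1}$, $[B^\ell]_N=\matN{1}{0}{-\ell}{1}$ for parts~(ii) and~(iii). Your remark that Lemma~\ref{basic}(ii) is not needed here is accurate and matches the paper's usage.
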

\begin{proof}
(i) By Lemma \ref{basic}(i) we have the equality $\f(T^k)=\left[\sum_{i=0}^{k-1}A^i\right]_N\cdot h$ and $\f(R^k)=\left[\sum_{i=0}^{k-1}B^i\right]_N\cdot h'$. Using  Remark \ref{7.2}, we obtain $$\left[\sum_{i=0}^{k-1}A^i\right]_N=\sum_{i=0}^{k-1}\left[A\right]^i_N=\sum_{i=0}^{k-1}\matN{1}{1}{0}{1}^i=\sum_{i=0}^{k-1}\matN{1}{i}{0}{1}=\matN{k}{{k\choose 2}}{0}{k}$$ and
 $$\left[\sum_{i=0}^{\ell-1}B^i\right]_N=\sum_{i=0}^{\ell-1}\left[B\right]^i_N=\sum_{i=0}^{\ell-1}\matN{1}{0}{-1}{1}^i=\sum_{i=0}^{\ell-1}\matN{1}{0}{-i}{1}=\matN{\ell}{0}{-{\ell\choose 2}}{\ell} \ .$$
The assertion now immediateky follows.
 
(ii) and (iii).  By the previous part (i), we obtain
\begin{align*}
\f(T^k\cdot R^\ell)&=\f(T^k)+[\g(T^k)]_{N}\cdot\f(R^\ell)=\\
&=\matN{k}{{k\choose 2}}{0}{k}\cdot h+\left[A^k\right]_N\cdot\matN{\ell}{0}{-{\ell\choose 2}}{\ell}\cdot h'=\\
&=\matN{k}{{k\choose 2}}{0}{k}\cdot h+\matN{1}{k}{0}{1}\cdot\matN{\ell}{0}{-{\ell\choose 2}}{\ell}\cdot h'=\\
&=\matN{k}{{k\choose 2}}{0}{k}\cdot h+\matN{\ell-k{\ell\choose 2}}{k\ell}{-{\ell\choose 2}}{\ell}\cdot h'
\end{align*}
and
\begin{align*}
\f(R^\ell\cdot T^k)&=\f(R^{\ell})+[\g(R^{\ell})]_{N}\cdot\f(T^k)=\\
&=\matN{\ell}{0}{-{\ell\choose 2}}{\ell}\cdot h'+\left[B^\ell\right]_N\cdot\matN{k}{{k\choose 2}}{0}{k}\cdot h=\\
&=\matN{\ell}{0}{-{\ell\choose 2}}{\ell}\cdot h'+\matN{1}{0}{-\ell}{1}\cdot\matN{k}{{k\choose 2}}{0}{k}\cdot h=\\
&=\matN{\ell}{0}{-{\ell\choose 2}}{\ell}\cdot h'+\matN{k}{{k\choose 2}}{-k\ell}{k-\ell{k\choose 2}}\cdot h\ .
\end{align*}
\end{proof}

\begin{lemma}\label{7.4}
Let $k,\ell\in\N$. Then:
\begin{enumerate}
\item[(i)] $A^k=\mat{1}{k}{0}{1}+N\mat{ka+{k+1\choose 2}c}{k^2a+kb+{k+1\choose 3}c}{kc}{ka+{k\choose 2}c}$,

\item[(ii)] $B^\ell=\mat{1}{0}{-\ell}{1}+N\mat{\ell a'-{\ell\choose 2}b'}{\ell b'}{-\ell^2a'+{\ell+1\choose 3}b'+\ell c'}{\ell a'-{\ell+1\choose 2}b'}\ .$
\end{enumerate}
\end{lemma}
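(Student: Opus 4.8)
The plan is to derive both identities directly from Lemma~\ref{basic}(ii). First I would record the decompositions already displayed just before the statement: $A = U + NG$ with $U = \mat{1}{1}{0}{1}$ and $G = \mat{a+c}{a+b}{c}{a}$, and $B = U' + NG'$ with $U' = \mat{1}{0}{-1}{1}$ and $G' = \mat{a'}{b'}{c'-a'}{a'-b'}$. Applying Lemma~\ref{basic}(ii) then gives
\[
A^k = U^k + N\sum_{i=1}^{k} U^{i-1} G\, U^{k-i}, \qquad B^\ell = (U')^\ell + N\sum_{i=1}^{\ell} (U')^{i-1} G'\, (U')^{\ell-i}.
\]
Since $U^j = \mat{1}{j}{0}{1}$ and $(U')^j = \mat{1}{0}{-j}{1}$, the terms $U^k$ and $(U')^\ell$ already account for the $N$-free parts of the two claimed formulas, so the whole task reduces to evaluating the two sums.

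Next I would carry out the two $2\times 2$ multiplications inside the summand. For $A$ one finds that $U^{i-1} G\, U^{k-i}$ has entries affine in $i$: the $(2,1)$-entry is the constant $c$, the $(1,1)$-entry is $a + ic$, the $(2,2)$-entry is $a + c(k-i)$, and — after cancelling the cross terms — the $(1,2)$-entry can be written as $ak + b + c\,i(k-i)$. Summing over $i = 1,\dots,k$ and substituting the identities $\sum i = {k+1 \choose 2}$, $\sum (k-i) = {k \choose 2}$, $\sum i(k-i) = {k+1 \choose 3}$ from Remark~\ref{7.2} yields exactly the matrix multiplying $N$ in part~(i). Part~(ii) is the mirror computation: $(U')^{i-1} G'\, (U')^{\ell-i}$ has $(1,2)$-entry the constant $b'$, $(1,1)$-entry $a' - b'(\ell-i)$, $(2,2)$-entry $a' - ib'$, and $(2,1)$-entry rewritten as $-a'\ell + c' + b'\,i(\ell-i)$; summing over $i = 1,\dots,\ell$ and using the same three identities with $\ell$ in place of $k$ produces the matrix in part~(ii).

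I do not expect any real obstacle here: the statement is forced once Lemma~\ref{basic}(ii) is in hand, and the congruence $N^2 \equiv_{2N} 0$ needed to invoke it is already built into that lemma, so no extra care about reductions modulo $2N$ is required. The one spot calling for a little attention is rewriting the off-diagonal entry of $U^{i-1} G\, U^{k-i}$ (respectively of $(U')^{i-1} G'\, (U')^{\ell-i}$) in the form $(\text{constant}) + (\text{constant})\cdot i(k-i)$, so that the identity $\sum i(k-i) = {k+1 \choose 3}$ applies directly rather than forcing a separate evaluation of $\sum i^2$; past that, everything is routine bookkeeping.
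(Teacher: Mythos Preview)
Your proposal is correct and is essentially the paper's own argument: the paper also decomposes $A=U+NG$ and $B=U'+NG'$, applies Lemma~\ref{basic}(ii), computes the summands $U^{i-1}GU^{k-i}$ and $(U')^{i-1}G'(U')^{\ell-i}$ entrywise exactly as you describe, and sums using the identities of Remark~\ref{7.2}. There is nothing to add.
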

\begin{proof}
By Lemma \ref{basic}(ii) and Remark \ref{7.2}, we obtain
\begin{eqnarray*}
A^k&=&\mat{1}{1}{0}{1}^k+N\sum\limits^{k}_{i=1}\mat{1}{1}{0}{1}^{i-1}\mat{a+c}{a+b}{c}{a}\mat{1}{1}{0}{1}^{k-i}=\\
&=&\mat{1}{k}{0}{1}+N\sum\limits^{k}_{i=1}\mat{1}{i-1}{0}{1}\mat{a+c}{a+b}{c}{a}\mat{1}{k-i}{0}{1}=\\
&=&\mat{1}{k}{0}{1}+N\sum\limits^{k}_{i=1}\mat{a+ic}{ka+b+i(k-i)c}{c}{a+(k-i)c}=\\
&=&\mat{1}{k}{0}{1}+N\mat{ka+{k+1\choose 2}c}{k^2a+kb+{k+1\choose 3}c}{kc}{ka+{k\choose 2}c}
\end{eqnarray*}
and similarly
\begin{eqnarray*}
B^\ell&=&\mat{1}{0}{-1}{1}^\ell+N\sum\limits^{\ell}_{i=1}\mat{1}{0}{-1}{1}^{i-1}\mat{a'}{b'}{c'-a'}{a'-b'}\mat{1}{0}{-1}{1}^{\ell-i}=\\
&=&\mat{1}{0}{-\ell}{1}+N\sum\limits^{\ell}_{i=1}\mat{1}{0}{1-i}{1}\mat{a'}{b'}{c'-a'}{a'-b'}\mat{1}{0}{i-\ell}{1}=\\
&=&\mat{1}{0}{-\ell}{1}+N\sum\limits^{\ell}_{i=1}\mat{a'-(\ell-i)b'}{b'}{-\ell a'+i(\ell-i)b'+c'}{a'-ib'}=\\
&=&\mat{1}{0}{-\ell}{1}+N\mat{\ell a'-{\ell \choose 2}b'}{\ell b'}{-\ell^2a'+{\ell+1\choose 3}b'+\ell c'}{\ell a'-{\ell+1\choose 2}b'}\ .
\end{eqnarray*}
\end{proof}

\begin{remark}\label{7.5}
For $n\in\N_0$ the following holds:
 $${n\choose 2}\equiv_2 
 \begin{cases}
 \frac{n}{2}, & n \ \text{is even,}\\[2mm]
\frac{n-1}{2}, & n \ \text{is odd.}
\end{cases}
\ \ \ \  \ \ \ \ \
{n\choose 3}\equiv_2 
\begin{cases}
0, & n \ \text{is even,}\\[2mm]
\frac{n-1}{2}, & n \ \text{is odd.}
\end{cases}                     
$$
%$${N-k\choose 2}\equiv_N{k+1\choose 2}-\frac{N}{2}$$
\end{remark}

\begin{corollary}\label{7.6}
Let $k,\ell\in\N$. Then:
$$A^k=\begin{cases}
\mat{1}{k}{0}{1}+N\mat{\frac{k}{2}c}{\frac{k}{2}c}{0}{\frac{k}{2}c}, &\ k\ \text{is even,}\\[7mm]
\mat{1}{k}{0}{1}+N\mat{a+\frac{k+1}{2}c}{a+b}{c}{a+\frac{k-1}{2}c}, &\ k\ \text{is odd,}\\
\end{cases}$$
$$B^\ell=\begin{cases}
\mat{1}{0}{-\ell}{1}+N\mat{\frac{\ell}{2}b'}{0}{\frac{\ell}{2}b'}{\frac{\ell}{2}b'}, &\ \ell\ \text{is even,}\\[7mm]
\mat{1}{0}{-\ell}{1}+N\mat{a'+\frac{\ell-1}{2}b'}{b'}{a'+c'}{a'+\frac{\ell+1}{2}b'}, &\ \ell\ \text{is odd.}\\
\end{cases}$$
\end{corollary}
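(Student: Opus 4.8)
The plan is to obtain the two displayed formulas as a direct reduction modulo $2$ of the formulas for $A^k$ and $B^\ell$ established in Lemma \ref{7.4}. The only structural fact needed is that for any integer $m$ the element $Nm\in\Z_{2N}$ depends solely on the parity of $m$, since $2N\equiv_{2N}0$; consequently, in the expression
$$A^k=\mat{1}{k}{0}{1}+N\mat{ka+{k+1\choose 2}c}{k^2a+kb+{k+1\choose 3}c}{kc}{ka+{k\choose 2}c}$$
(and similarly for $B^\ell$) every entry of the matrix that is multiplied by $N$ may be replaced by its residue modulo $2$. So the whole proof reduces to computing those four residues, which I would do by a case split on the parity of $k$ (respectively of $\ell$).

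For $k$ even I would use $k\equiv_2 0$ and $k^2\equiv_2 0$ to kill the $a$- and $b$-contributions, together with Remark \ref{7.5}, which (noting that $k+1$ is odd and $k$ is even) gives ${k+1\choose 2}\equiv_2{k+1\choose 3}\equiv_2\frac{k}{2}$ and ${k\choose 2}\equiv_2\frac{k}{2}$; substituting and using $kc\equiv_2 0$ produces exactly the first branch. For $k$ odd I would use $k\equiv_2 k^2\equiv_2 1$ together with Remark \ref{7.5} in the form ${k+1\choose 2}\equiv_2\frac{k+1}{2}$, ${k+1\choose 3}\equiv_2 0$ (since $k+1$ is even) and ${k\choose 2}\equiv_2\frac{k-1}{2}$ (since $k$ is odd), which yields the second branch. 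The treatment of $B^\ell$ from the second formula of Lemma \ref{7.4} is completely parallel; the single extra point is that the minus signs in front of the entries of the $N$-matrix disappear under reduction because $-1\equiv_2 1$.

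This is a routine bookkeeping argument with no conceptual obstacle: the statement is a pure corollary of Lemma \ref{7.4} and Remark \ref{7.5}. The only thing one must watch carefully is to track, in each of the eight entries, the correct three ingredients simultaneously --- the parity of the polynomial coefficient ($k$, $k^2$, $\ell$, $\ell^2$), the correct branch of Remark \ref{7.5}, which is selected by the parity of the \emph{argument} of the binomial coefficient (that is, of $k$, $k+1$, $\ell$, or $\ell+1$, and not of $k$ or $\ell$ themselves), and the sign collapse $-1\equiv_2 1$ for the lower-left and off-diagonal entries of $B^\ell$.
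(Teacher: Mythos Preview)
Your proposal is correct and follows exactly the paper's own approach: the paper's proof simply states that the corollary follows immediately from Lemma \ref{7.4}, Remark \ref{7.5}, and the observation that $Nj$ modulo $2N$ depends only on the parity of $j$ (together with $-Nm\equiv_{2N}Nm$). Your write-up is in fact more explicit than the paper's, carefully tracking each of the eight entries through the parity reduction, but the underlying argument is identical.
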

\begin{proof}
 Follows immediately from Lemma \ref{7.4} and Remark \ref{7.5} and the fact that for all $j,m\in\N$ is $Nj\equiv_{2N}0$ if $j$ is even, $Nj\equiv_{2N}1$ if $j$ is odd, and $-Nm\equiv_{2N} Nm$.
\end{proof}

In the following Sections \ref{section_8} --\ref{section_11} we express the conditions (ii)--(v) from Proposition \ref{character_2} in terms of $a,b,c,a',b',c'\in\{0,1\}$ and $u,v,u',v'\in\Z_N$ (see the beginning of Section \ref{section_7}) that correspond to elements $T,R\in\SL(2,\Z_{2N})\ltimes \Z_{N}^2$.

\section{Condition $T^{N},R^N\in K$}\label{section_8}
%\section{Condition $\tilde{t}^N=1$ and $\tilde{r}^N=1$}

\begin{lemma}\label{1}
 $$T^N=(A,h)^N=\left(\mat{1}{0}{0}{1}+N\mat{\frac{N}{2}c}{1+\frac{N}{2}c}{0}{\frac{N}{2}c},\vc{\frac{N}{2}\cdot v}{0}\right)$$
 and
 $$R^N=(B,h')^N=\left(\mat{1}{0}{0}{1}+N\mat{\frac{N}{2}b'}{0}{1+\frac{N}{2}b'}{\frac{N}{2}b'},\vc{0}{\frac{N}{2}\cdot u'}\right)$$
 where $\f(T)=h=\left(\begin{smallmatrix}u\\v\end{smallmatrix}\right)\in\Z_N^2$ and $\f(R)=h'=\left(\begin{smallmatrix}u'\\v'\end{smallmatrix}\right)\in\Z_N^2$.
\end{lemma}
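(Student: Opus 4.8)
The plan is to compute $T^N = (A,h)^N$ directly using Lemma \ref{basic}(i), which tells us that $(A,h)^N = \bigl(A^N, [\sum_{i=0}^{N-1}A^i]_N \cdot h\bigr)$. So there are two pieces: the matrix $A^N$ over $\Z_{2N}$, and the vector part $[\sum_{i=0}^{N-1}A^i]_N \cdot h$. For the matrix part, I would apply Corollary \ref{7.6} with $k = N$; since $N$ is even this gives $A^N = \mat{1}{N}{0}{1} + N\mat{\frac{N}{2}c}{\frac{N}{2}c}{0}{\frac{N}{2}c}$, and because $N\cdot N \equiv_{2N} 0$ the off-diagonal entry $N$ in the first matrix combines with the $N\cdot\frac{N}{2}c$ term — wait, more carefully: $\mat{1}{N}{0}{1} = \mat{1}{0}{0}{1} + N\mat{0}{1}{0}{0}$, and then the stated form $\mat{1}{0}{0}{1}+N\mat{\frac{N}{2}c}{1+\frac{N}{2}c}{0}{\frac{N}{2}c}$ follows immediately. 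So the matrix part is essentially a one-line consequence of Corollary \ref{7.6}.

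For the vector part I need $[\sum_{i=0}^{N-1}A^i]_N$. By the computation already carried out in the proof of Lemma \ref{product}(i) (taking $k = N$), we have $[\sum_{i=0}^{N-1}A^i]_N = \matN{N}{\binom{N}{2}}{0}{N}$. Modulo $N$ the diagonal entries vanish, and $\binom{N}{2} = \frac{N(N-1)}{2} \equiv_N \frac{N}{2}(N-1) \equiv_N \frac{N}{2}$ since $N-1$ is odd and $\frac{N}{2}\cdot(N-1) - \frac{N}{2} = \frac{N}{2}(N-2)$ is a multiple of $N$ (as $N-2$ is even). Hence $[\sum_{i=0}^{N-1}A^i]_N = \matN{0}{\frac{N}{2}}{0}{0}$, and applying this to $h = \left(\begin{smallmatrix}u\\v\end{smallmatrix}\right)$ gives $\left(\begin{smallmatrix}\frac{N}{2}v\\0\end{smallmatrix}\right)$, as claimed.

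The computation for $R^N = (B,h')^N$ is entirely parallel: the matrix part comes from Corollary \ref{7.6} with $\ell = N$ even, giving $B^N = \mat{1}{0}{-N}{1} + N\mat{\frac{N}{2}b'}{0}{\frac{N}{2}b'}{\frac{N}{2}b'}$, and $-N \equiv_{2N} N$ so the lower-left entry becomes $N(1 + \frac{N}{2}b')$, matching the stated form. For the vector part, the proof of Lemma \ref{product}(i) gives $[\sum_{i=0}^{N-1}B^i]_N = \matN{N}{0}{-\binom{N}{2}}{N}$, which reduces mod $N$ to $\matN{0}{0}{-\frac{N}{2}}{0} = \matN{0}{0}{\frac{N}{2}}{0}$ (using $-\frac{N}{2} \equiv_N \frac{N}{2}$), and applying this to $h' = \left(\begin{smallmatrix}u'\\v'\end{smallmatrix}\right)$ yields $\left(\begin{smallmatrix}0\\\frac{N}{2}u'\end{smallmatrix}\right)$.

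There is no real obstacle here; the lemma is a bookkeeping consequence of Lemma \ref{basic}(i), Corollary \ref{7.6}, and the congruence $\binom{N}{2} \equiv_N \frac{N}{2}$ for even $N$. The only point requiring a moment's care is keeping straight which reductions are modulo $2N$ (the matrix entries, where $N^2 \equiv_{2N} 0$ and $-N \equiv_{2N} N$ are used) and which are modulo $N$ (the vector components, coming from the $[\,\cdot\,]_N$ in the semidirect action), so I would state those two congruences explicitly before substituting.
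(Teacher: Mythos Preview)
Your proposal is correct and follows essentially the same route as the paper: the matrix parts $A^N$ and $B^N$ come from Corollary~\ref{7.6} with $k=\ell=N$ even (together with $-N\equiv_{2N}N$), and the vector parts come from Lemma~\ref{product}(i) with $k=N$ plus the congruence $\binom{N}{2}\equiv_N \frac{N}{2}$. The only cosmetic difference is that you invoke Lemma~\ref{basic}(i) directly for the decomposition of $(A,h)^N$, whereas the paper phrases the same thing via $T^N=(\g(T^N),\f(T^N))$.
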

\begin{proof}
We use the fact that $T^{k}=(\g(T^k),\f(T^k))$ and, similarly,  $R^{\ell}=(\g(R^\ell),\f(R^\ell))$.
Since $N$ is even, we obtain, by Corollary \ref{7.6}, that 
$$\g(T^k)=A^N=\mat{1}{N}{0}{1}+N\mat{\frac{N}{2}c}{\frac{N}{2}c}{0}{\frac{N}{2}c}$$ and
$$\g(R^\ell)=B^N=\mat{1}{0}{-N}{1}+N\mat{\frac{N}{2}b'}{0}{\frac{N}{2}b'}{\frac{N}{2}b'}=\mat{1}{0}{N}{1}+N\mat{\frac{N}{2}b'}{0}{\frac{N}{2}b'}{\frac{N}{2}b'}$$
where we have used the equality $-N\equiv_{2N}N$.
By Lemma \ref{product}(i),  we then have that
$$\f(T^N)=\matN{N}{{N\choose 2}}{0}{N}\cdot \vc{u}{v}=\vc{{N\choose 2}v}{0}=\vc{\frac{N}{2}v}{0}$$ and
$$\f(R^N)=\matN{N}{0}{-{N\choose 2}}{N}\cdot \vc{u'}{v'}=\vc{0}{-{N\choose 2}u'}=\vc{0}{\frac{N}{2}u'}$$
where we have used that the equality ${N\choose 2}\equiv_N \frac{N}{2}(N-1)\equiv_N -\frac{N}{2}\equiv_N \frac{N}{2}$ holds for the even number $N$.
\end{proof}

\begin{proposition}\label{2.2}
The condition $T^{N},R^N\in K$ holds if and only if  $v\equiv_2 \frac{N}{2}c+1$ and $u'\equiv_2 \frac{N}{2}b'+1$.
\end{proposition}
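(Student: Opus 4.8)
The plan is to match the explicit forms of $T^{N}$ and $R^{N}$ given by Lemma \ref{1} against the list of the eight elements of $K$ from Theorem \ref{appleby}, reading off—entry by entry—the parity conditions forced on the parameters $r,s,t\in\{0,1\}$ and then isolating which of them survives as a genuine constraint on $v$, respectively on $u'$. Since $N$ is even, two elementary facts will be used throughout: for $j\in\Z$ one has $Nj\equiv_{2N}0$ if and only if $j$ is even, and for $x\in\Z$ one has $\tfrac{N}{2}x\equiv_{N}0$ if and only if $x$ is even. Because membership in $K$ is an existential statement, the point is to show that the constraints coming from the matrix block determine $r,s,t$ uniquely, after which the translation block leaves exactly one condition to be checked.

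First I would treat $T^{N}$. By Lemma \ref{1} its $\g$-component equals $\left(\begin{smallmatrix}1+\frac{N^{2}}{2}c & N+\frac{N^{2}}{2}c\\ 0 & 1+\frac{N^{2}}{2}c\end{smallmatrix}\right)$ in $\SL(2,\Z_{2N})$ and its $\f$-component equals $\left(\begin{smallmatrix}\tfrac{N}{2}v\\ 0\end{smallmatrix}\right)$ in $\Z_{N}^{2}$. Comparing the $\g$-component with $\left(\begin{smallmatrix}1+Nr & Ns\\ Nt & 1+Nr\end{smallmatrix}\right)$ and using the first fact forces $r\equiv_{2}\tfrac{N}{2}c$, $s\equiv_{2}\tfrac{N}{2}c+1$ and $t\equiv_{2}0$, so $r,s,t\in\{0,1\}$ are uniquely determined and in particular $t=0$. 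Comparing the $\f$-components then leaves the two conditions $\tfrac{N}{2}v\equiv_{N}\tfrac{N}{2}s$ and $0\equiv_{N}\tfrac{N}{2}t$; the second is automatic since $t=0$, and by the second fact the first is equivalent to $v\equiv_{2}s$, i.e. to $v\equiv_{2}\tfrac{N}{2}c+1$. Hence $T^{N}\in K$ if and only if $v\equiv_{2}\tfrac{N}{2}c+1$.

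The computation for $R^{N}$ is symmetric. By Lemma \ref{1} its $\g$-component is $\left(\begin{smallmatrix}1+\frac{N^{2}}{2}b' & 0\\ N+\frac{N^{2}}{2}b' & 1+\frac{N^{2}}{2}b'\end{smallmatrix}\right)$ and its $\f$-component is $\left(\begin{smallmatrix}0\\ \tfrac{N}{2}u'\end{smallmatrix}\right)$, so matching against the generic element of $K$ forces $s=0$ and $t\equiv_{2}\tfrac{N}{2}b'+1$, and the one surviving translation condition is $\tfrac{N}{2}u'\equiv_{N}\tfrac{N}{2}t$, equivalently $u'\equiv_{2}t$, i.e. $u'\equiv_{2}\tfrac{N}{2}b'+1$. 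Combining the two cases gives the claimed equivalence. I do not expect a real obstacle here; the only point that needs care is the bookkeeping with the two different moduli—$2N$ on the matrix block and $N$ on the translation block—together with the restriction of $r,s,t$ to $\{0,1\}$, which is precisely what collapses each entrywise comparison to a single parity condition.
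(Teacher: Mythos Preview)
Your proposal is correct and follows essentially the same approach as the paper: invoke Lemma~\ref{1} for the explicit forms of $T^{N}$ and $R^{N}$, then compare entrywise with the generic element of $K$ from Theorem~\ref{appleby}. The paper's own proof compresses this comparison into a single sentence, whereas you spell out the matrix-block and translation-block matching separately and record the two modular facts being used; the underlying argument is the same.
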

\begin{proof}
By  Lemma \ref{1} and by the definition of the group $K$, the condition
$$T^N=\left(\mat{1}{0}{0}{1}+N\mat{\frac{N}{2}c}{1+\frac{N}{2}c}{0}{\frac{N}{2}c},\vc{\frac{N}{2}\cdot v}{0}\right)\in K$$ and 
$$R^N=\left(\mat{1}{0}{0}{1}+N\mat{\frac{N}{2}b'}{0}{1+\frac{N}{2}b'}{\frac{N}{2}b'},\vc{0}{\frac{N}{2}\cdot u'}\right)\in K$$
is equivalent to $v\equiv_2 \frac{N}{2}c+1$ and $u'\equiv_2 \frac{N}{2}b'+1$.
\end{proof}

\section{Condition $T^{k}R^{\ell}(R^{\ell}T^{k})^{-1}\in K$ where  $N=k\ell$ and $\gcd(k,\ell)=1$}\label{section_9}

%\section{Condition $\tilde{t}^{k}\tilde{r}^{\ell}=\tilde{r}^{\ell}\tilde{t}^{k}$ where  $N=k\ell$ and $\gcd(k,\ell)=1$}

\begin{lemma}\label{N}
Let $k, \ell\in\N$ be such that $N=k\ell$ and $\gcd(k,\ell)=1$.  Then
$$T^k\cdot R^\ell\left(R^\ell\cdot T^k\right)^{-1}=\left(A^kB^\ell(B^\ell A^k)^{-1},\left(\begin{array}{c}\frac{N}{2}(\ell-1)u'\\[2mm]\frac{N}{2}(k-1)v\end{array}\right)\right)$$
 %\item $$\left((B,h')^\ell\cdot (A,h)^k\right)^{-1}=
%\left((B^\ell A^k)^{-1},-\left[\mat{\ell+\frac{N}{2}(\ell-1)}{0}{{\ell+1\choose 2}}{\ell}\right]_N\cdot h'-\left[\mat{k}{-{k+1\choose 2}}{0}{k}\right]_N\cdot h\right)$$
where $\f(T)=\left(\begin{smallmatrix}u\\v\end{smallmatrix}\right)\in\Z_N^2$ and $\f(R)=\left(\begin{smallmatrix}u'\\v'\end{smallmatrix}\right)\in\Z_N^2$.
\end{lemma}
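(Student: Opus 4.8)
The plan is to compute the two coordinates of $T^k\cdot R^\ell(R^\ell\cdot T^k)^{-1}$ in $\SL(2,\Z_{2N})\ltimes\Z_N^2$ separately. For the $\SL(2,\Z_{2N})$-coordinate there is nothing to prove beyond invoking that $\g$ is a group homomorphism (Proposition~\ref{basic_0}(i)): since $\g(T)=A$ and $\g(R)=B$, we get $\g\big(T^kR^\ell(R^\ell T^k)^{-1}\big)=A^kB^\ell(B^\ell A^k)^{-1}$, which is the first entry of the claimed pair.

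For the $\Z_N^2$-coordinate, the first key point is that the hypothesis $N=k\ell$ forces $[\g(T^kR^\ell)]_N=[\g(R^\ell T^k)]_N$. Indeed, reducing modulo $N$ sends $\g(T^kR^\ell)=A^kB^\ell$ to $t^kr^\ell$ and $\g(R^\ell T^k)=B^\ell A^k$ to $r^\ell t^k$, and a one-line matrix multiplication shows that both $t^kr^\ell$ and $r^\ell t^k$ equal $\matN{1}{k}{-\ell}{1}$, precisely because the off-diagonal entry $k\ell=N$ is $\equiv_N0$. Hence Proposition~\ref{basic_0}(iii) applies with $S=T^kR^\ell$ and $P=R^\ell T^k$, yielding $\f\big(T^kR^\ell(R^\ell T^k)^{-1}\big)=\f(T^kR^\ell)-\f(R^\ell T^k)$.

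The next step is to substitute the explicit expressions from Lemma~\ref{product}(ii),(iii) and subtract. Using $k\ell\equiv_N0$ again to kill the off-diagonal entries containing $k\ell$, the matrix coefficient of $h=\f(T)$ in the difference collapses to the single nonzero entry $\ell{k\choose2}$ in position $(2,2)$, and the matrix coefficient of $h'=\f(R)$ collapses to the single nonzero entry $-k{\ell\choose2}$ in position $(1,1)$. Applying these to $h=\left(\begin{smallmatrix}u\\v\end{smallmatrix}\right)$ and $h'=\left(\begin{smallmatrix}u'\\v'\end{smallmatrix}\right)$ leaves $\f\big(T^kR^\ell(R^\ell T^k)^{-1}\big)=\left(\begin{smallmatrix}-k{\ell\choose2}u'\\ \ell{k\choose2}v\end{smallmatrix}\right)$ in $\Z_N^2$.

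Finally I would carry out the arithmetic identification. Since $N$ is even, $\tfrac N2\in\N$, and using $N=k\ell$ one has $\ell{k\choose2}=\tfrac{k\ell(k-1)}{2}=\tfrac N2(k-1)$ and $k{\ell\choose2}=\tfrac{k\ell(\ell-1)}{2}=\tfrac N2(\ell-1)$; together with $-\tfrac N2 m\equiv_N\tfrac N2 m$ for every $m\in\Z$, this rewrites the vector as $\left(\begin{smallmatrix}\frac N2(\ell-1)u'\\ \frac N2(k-1)v\end{smallmatrix}\right)$, which is the asserted second coordinate. I do not anticipate a real obstacle here; the only point needing care is the bookkeeping of reductions --- modulo $N$ throughout the $\f$-coordinate, where every $k\ell$-term vanishes, versus modulo $2N$ in the $\g$-coordinate --- and noticing that $N=k\ell$ is exactly the hypothesis making $t^kr^\ell$ and $r^\ell t^k$ agree modulo $N$.
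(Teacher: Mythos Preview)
Your proof is correct and follows essentially the same route as the paper: use Proposition~\ref{basic_0}(i) for the $\g$-coordinate, establish $[\g(T^kR^\ell)]_N=[\g(R^\ell T^k)]_N$ so that Proposition~\ref{basic_0}(iii) applies, subtract the expressions from Lemma~\ref{product}(ii),(iii), and simplify via $k\binom{\ell}{2}=\tfrac{N}{2}(\ell-1)$, $\ell\binom{k}{2}=\tfrac{N}{2}(k-1)$. The only cosmetic difference is that the paper cites the presentation relation $t^kr^\ell=r^\ell t^k$ (valid when $N=k\ell$, $\gcd(k,\ell)=1$) rather than verifying it by direct matrix multiplication as you do.
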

\begin{proof}
We only need to determine $\f\big(T^k\cdot R^\ell\left(R^\ell\cdot T^k\right)^{-1}\big)$.
Since $k\ell=N$, it holds that $k{\ell\choose 2}=\frac{k\ell(\ell-1)}{2}=\frac{N}{2}(\ell-1)$ and $\ell{k\choose 2}=\frac{\ell k(k-1)}{2}=\frac{N}{2}(k-1)$. Hence, by Lemma \ref{product}(ii) and (iii), we obtain 
$$\f(T^k\cdot R^\ell)=\matN{k}{{k\choose 2}}{0}{k}\cdot h+\matN{\frac{N}{2}(1-\ell)+\ell}{0}{-{\ell\choose 2}}{\ell}\cdot h'$$ and
$$\f(R^\ell\cdot T^k)=\matN{k}{{k\choose 2}}{0}{\frac{N}{2}(1-k)+k}\cdot h+\matN{\ell}{0}{-{\ell\choose 2}}{\ell}\cdot h'\ .$$
By the generating relations for 
$t=\left[\begin{smallmatrix}1 & 1\\ 0 & 1\end{smallmatrix}\right]_{N}=[A]_N$ and $r=\left[\begin{smallmatrix}1 & 0\\ -1 & 1\end{smallmatrix}\right]_{N}=[B]_N$ of the group $\SL(2,\Z_{N})$ it holds that  $$\left[\g(T^k\cdot R^\ell)\right]_{N}=\left[A^kB^\ell\right]_N=t^kr^{\ell}=r^{\ell}t^k=\left[B^\ell A^k\right]_N=\left[\g(R^\ell\cdot T^k)\right]_{N}\ .$$  Then, by Lemma  \ref{basic_0}(iii), it follows that
 $$\f\left(T^k\cdot R^\ell\left(R^\ell\cdot T^k\right)^{-1}\right)=\f\left(T^k\cdot R^\ell\right)-\f\left(R^\ell\cdot T^k\right)=$$
 $$=\matN{0}{0}{0}{\frac{N}{2}(k-1)}\cdot h+\matN{-\frac{N}{2}(\ell-1)}{0}{0}{0}\cdot h'=$$
 $$=\matN{0}{0}{0}{\frac{N}{2}(k-1)}\cdot \vc{u}{v}+\matN{-\frac{N}{2}(\ell-1)}{0}{0}{0}\cdot \vc{u'}{v'}=\left(\begin{array}{c}\frac{N}{2}(\ell-1)u'\\[2mm]\frac{N}{2}(k-1)v\end{array}\right)$$
where we have used the equality $-\frac{N}{2}\equiv_{N}\frac{N}{2}$.
\end{proof}

% \begin{remark}
%  ${k\choose 2}+{N-k\choose 2}-k^2={N\choose 2}$
% \end{remark}

As $N=k\ell$ and $\gcd(k,\ell)=1$ it follows that either $k$ is even and $\ell$ is odd or, vice versa, $k$ is odd and $\ell$ even.

\begin{remark}\label{inverse}
 For $\left(\begin{smallmatrix}\alpha & \beta\\ \gamma & \delta\end{smallmatrix}\right)\in\SL(2,\Z_{2N})$ is $\left(\begin{smallmatrix}\alpha & \beta\\ \gamma & \delta\end{smallmatrix}\right)^{-1}=\left(\begin{smallmatrix}\delta & -\beta\\ -\gamma & \alpha\end{smallmatrix}\right)$.
\end{remark}

\subsection{Case $k$ is  even and $\ell$ is odd}

In this part we assume that $k\in\N$ is even, $\ell\in\N$ is odd,  $N=k\ell$ and $\gcd(k,\ell)=1$.
\begin{lemma}\label{even_odd}
 $A^k B^\ell(B^\ell A^k)^{-1}=\mat{1}{0}{0}{1}+N
 \mat{1+\frac{k}{2}c+a'+c'}{0}{1+\frac{k}{2}c+a'+c'}{1+\frac{k}{2}c}$.
\end{lemma}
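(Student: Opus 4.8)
The plan is to prove this by a direct computation in $\M_2(\Z_{2N})$, substituting the closed forms for $A^k$ and $B^\ell$ provided by Corollary \ref{7.6}. Since $k$ is even, $A^k=\mat{1}{k}{0}{1}+N\mat{\frac k2 c}{\frac k2 c}{0}{\frac k2 c}$, and since $\ell$ is odd, $B^\ell=\mat{1}{0}{-\ell}{1}+N\mat{a'+\frac{\ell-1}{2}b'}{b'}{a'+c'}{a'+\frac{\ell+1}{2}b'}$. I would write $A^k=X+NP$ and $B^\ell=Y+NQ$ with $X=\mat{1}{k}{0}{1}$, $Y=\mat{1}{0}{-\ell}{1}$ and $P,Q$ the integral correction matrices just displayed. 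Because $A^k,B^\ell\in\SL(2,\Z_{2N})$, Remark \ref{inverse} yields $(A^k)^{-1}=X^{-1}+N\widetilde P$ and $(B^\ell)^{-1}=Y^{-1}+N\widetilde Q$ modulo $2N$, with $X^{-1}=\mat{1}{-k}{0}{1}$, $Y^{-1}=\mat{1}{0}{\ell}{1}$ and $\widetilde P,\widetilde Q$ read off from the adjugates; and since $(B^\ell A^k)^{-1}=(A^k)^{-1}(B^\ell)^{-1}$, the object to be computed is the group commutator $A^kB^\ell(A^k)^{-1}(B^\ell)^{-1}$.

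Next I would expand this fourfold product over $\Z_{2N}$ and discard every summand carrying the factor $N^2$, which is $0$ in $\Z_{2N}$ (this is the simplification underlying Lemma \ref{basic}(ii)). What survives is the ``leading'' commutator $XYX^{-1}Y^{-1}$ plus $N$ times a finite sum of terms, each linear in exactly one of $P,Q,\widetilde P,\widetilde Q$ and conjugated by a word in $X,Y$. For the leading term, $XYX^{-1}Y^{-1}=\mat{1-k\ell+k^2\ell^2}{k^2\ell}{k\ell^2}{1+k\ell}$; inserting $k\ell=N$ and using $N^2\equiv_{2N}0$, $k$ even (so $kN\equiv_{2N}0$), $\ell$ odd (so $\ell N\equiv_{2N}N$) and $-N\equiv_{2N}N$ collapses it to a matrix of the shape $\mat{1}{0}{0}{1}+N(\cdots)$. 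For the $N$-linear remainder one only needs its entries modulo $2$, since $Nj\equiv_{2N}N(j\bmod 2)$; so one may first reduce $X,Y,X^{-1},Y^{-1}$ and all correction matrices modulo $2$ — in particular $X\equiv X^{-1}\equiv I$ and $Y\equiv Y^{-1}\equiv\mat{1}{0}{1}{1}$ because $k$ is even and $\ell$ is odd — after which the residual $2\times 2$ arithmetic, simplified via the parity identities of Remark \ref{7.5}, produces the correction matrix. Summing the two contributions gives the asserted closed form for $A^kB^\ell(B^\ell A^k)^{-1}$.

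As a consistency step it helps to note first that, since $[A^k]_N=t^k$ and $[B^\ell]_N=r^\ell$, the relation $t^kr^\ell=r^\ell t^k$ holding in $\SL(2,\Z_N)$ for $N=k\ell$ with $\gcd(k,\ell)=1$ forces $A^kB^\ell(B^\ell A^k)^{-1}\in\ker\big(\SL(2,\Z_{2N})\to\SL(2,\Z_N)\big)$, hence it is automatically of the shape $\mat{1+\alpha N}{\beta N}{\gamma N}{1+\alpha N}$ with $\alpha,\beta,\gamma\in\{0,1\}$; this narrows the work to pinning down three bits and doubles as a check. I expect the main obstacle to be precisely this $\Z_{2N}$-bookkeeping: one must keep the genuine $\Z_{2N}$-entries $k$ and $\ell$ of the leading factors at full precision while treating the correction matrices only modulo $2$, and take care not to drop any of the conjugated cross terms. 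The splitting $A^k=X+NP$, $B^\ell=Y+NQ$ is the device that keeps the computation manageable.
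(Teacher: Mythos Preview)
Your approach is correct and essentially the same as the paper's: both use the decomposition $A^k=X+NP$, $B^\ell=Y+NQ$ from Corollary~\ref{7.6}, expand the product discarding $N^2$ terms, and reduce the correction matrices modulo $2$ via the parities of $k$ and $\ell$. The only difference is organizational --- the paper first combines $A^kB^\ell$ and $B^\ell A^k$ into the common form $\left(\begin{smallmatrix}1&k\\-\ell&1\end{smallmatrix}\right)+N(\cdots)$ and then multiplies just two factors, whereas you carry the fourfold commutator $A^kB^\ell(A^k)^{-1}(B^\ell)^{-1}$ through directly.
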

\begin{proof}
Since $k$ is even and $\ell$ is odd, it follows that $N\cdot(\begin{smallmatrix}1 & k\\ 0 & 1\end{smallmatrix})=N\cdot(\begin{smallmatrix}1 & 0\\ 0 & 1\end{smallmatrix})$ and $N\cdot(\begin{smallmatrix}1 & 0\\ -\ell & 1\end{smallmatrix})=N\cdot(\begin{smallmatrix}1 & 0\\ 1 & 1\end{smallmatrix})$ when computing modulo $2N$. Using these equalities, the equality $k\ell=N$ and Corollary \ref{7.6} we obtain
\begin{eqnarray*}
 A^k B^\ell &=& \Bigg(\mat{1}{k}{0}{1}+N\tfrac{k}{2}c\mat{1}{1}{0}{1}\Bigg)\cdot
 \\&&\cdot\Bigg(\mat{1}{0}{-\ell}{1}+N\mat{a'+\frac{\ell-1}{2}b'}{b'}{a'+c'}{a'+\frac{\ell+1}{2}b'}\Bigg)=
\\&=&\mat{1-N}{k}{-\ell}{1}+N\tfrac{k}{2}c\mat{1}{1}{0}{1}\mat{1}{0}{1}{1}+\\
&&+N\mat{a'+\frac{\ell-1}{2}b'}{b'}{a'+c'}{a'+\frac{\ell+1}{2}b'}=
\\&=&\mat{1-N}{k}{-\ell}{1}+N\tfrac{k}{2}c\mat{0}{1}{1}{1}+N\mat{a'+\frac{\ell-1}{2}b'}{b'}{a'+c'}{a'+\frac{\ell+1}{2}b'}=
\\&=&\mat{1}{k}{-\ell}{1}+N\mat{1+a'+\frac{\ell-1}{2}b'}{\frac{k}{2}c+b'}{\frac{k}{2}c+a'+c'}{\frac{k}{2}c+a'+\frac{\ell+1}{2}b'}
 \end{eqnarray*}
and, similarly,
 \begin{eqnarray*}
 B^\ell A^k&=&\Bigg(\mat{1}{0}{-\ell}{1}+N\mat{a'+\frac{\ell-1}{2}b'}{b'}{a'+c'}{a'+\frac{\ell+1}{2}b'}\Bigg)\cdot\\
 &&\cdot\Bigg(\mat{1}{k}{0}{1}+N\tfrac{k}{2}c\mat{1}{1}{0}{1}\Bigg)=\\
&=&\mat{1}{k}{-\ell}{1-N}+N\mat{a'+\frac{\ell-1}{2}b'}{b'}{a'+c'}{c'+\frac{\ell+1}{2}b'}+\\
&&+N\tfrac{k}{2}c\mat{1}{0}{1}{1}\mat{1}{1}{0}{1}=\\
&=&\mat{1}{k}{-\ell}{1-N}+N\mat{a'+\frac{\ell-1}{2}b'}{b'}{a'+c'}{c'+\frac{\ell+1}{2}b'}+N\tfrac{k}{2}c\mat{1}{1}{1}{0}=\\
&=&\mat{1}{k}{-\ell}{1}+N\mat{\alpha}{\beta}{\gamma}{\delta}
 \end{eqnarray*}
where $$\mat{\alpha}{\beta}{\gamma}{\delta}=\mat{\frac{k}{2}c+a'+\frac{\ell-1}{2}b'}{\frac{k}{2}c+b'}{\frac{k}{2}c+a'+c'}{1+c'+\frac{\ell+1}{2}b'}\ .$$ 
Hence, by Remark \ref{inverse} and the equality $-N\equiv_{2N}N$, it holds that
$$(B^\ell A^k)^{-1}=\mat{1}{-k}{\ell}{1}+N\mat{\delta}{\beta}{\gamma}{\alpha}$$
and therefore
\begin{align*}
&A^k B^\ell(B^\ell A^k)^{-1}=\Bigg(\mat{1}{k}{-\ell}{1}+N\mat{\delta-b'-c'+a'}{\beta}{\gamma}{\alpha+b'}\Bigg)\cdot\\
&\hspace{60mm}\cdot\Bigg(\mat{1}{-k}{\ell}{1}+N\mat{\delta}{\beta}{\gamma}{\alpha}\Bigg)=\\
&=\mat{1+N}{0}{0}{1+N}+N\mat{1}{0}{1}{1}\mat{\delta}{\beta}{\gamma}{\alpha}+\\
&\hspace{45mm}+N
\mat{\delta-b'-c'+a'}{\beta}{\gamma}{\alpha+b'}\mat{1}{0}{1}{1}=\\
&=\mat{1+N}{0}{0}{1+N}+N
\mat{\delta}{\beta}{\gamma+\delta}{\alpha+\beta}+\\
&\hspace{55mm}+N\mat{\beta+\delta-b'-c'+a'}{\beta}{\alpha+\gamma+b'}{\alpha+b'}=\\
&=\mat{1}{0}{0}{1}+N\mat{1+\beta-b'-c'+a'}{0}{\alpha+\delta+b'}{1+\beta+b'}=\\
&=\mat{1}{0}{0}{1}+N\mat{1+\frac{k}{2}c+a'+c'}{0}{1+\frac{k}{2}c+a'+c'}{1+\frac{k}{2}c}\ .
 \end{align*}
\end{proof}

\begin{proposition}\label{3.3}
The following conditions are equivalent:
\begin{enumerate}
\item[(i)] $T^{k}R^{\ell}(R^{\ell}T^{k})^{-1}\in K$  holds for all $k,\ell\in \N$ such that $k$ is even, $\ell$ is odd, $N=k\ell$ and $\gcd(k,\ell)=1$. 
 \item[(ii)] $a'\equiv_2 c'$ and $v\equiv_2\begin{cases} 1+c & \text{if}\ N\equiv_4 2\\ 1 & \text{if}\ N\equiv_4 0 .\end{cases}$
\end{enumerate}
\end{proposition}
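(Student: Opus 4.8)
The plan is to read off membership in $K$ directly from the two preceding lemmas. By Lemma \ref{N}, under the hypotheses $k$ even, $\ell$ odd, $N=k\ell$, $\gcd(k,\ell)=1$, the element $T^{k}R^{\ell}(R^{\ell}T^{k})^{-1}$ has group-component $A^{k}B^{\ell}(B^{\ell}A^{k})^{-1}$, which Lemma \ref{even_odd} evaluates to $\left(\begin{smallmatrix}1&0\\0&1\end{smallmatrix}\right)+N\left(\begin{smallmatrix}1+\frac{k}{2}c+a'+c'&0\\1+\frac{k}{2}c+a'+c'&1+\frac{k}{2}c\end{smallmatrix}\right)$, and vector-component $\left(\begin{smallmatrix}\frac{N}{2}(\ell-1)u'\\\frac{N}{2}(k-1)v\end{smallmatrix}\right)$. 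By the description of $K$ in Theorem \ref{appleby}, this element lies in $K$ if and only if there exist $r,s,t\in\{0,1\}$ for which the group-component equals $\left(\begin{smallmatrix}1+Nr&Ns\\Nt&1+Nr\end{smallmatrix}\right)$ and the vector-component equals $\left(\begin{smallmatrix}\frac{N}{2}s\\\frac{N}{2}t\end{smallmatrix}\right)$.

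I would then compare entrywise, using repeatedly that $Nj\equiv_{2N}N(j\bmod 2)$ and $\frac{N}{2}j\equiv_N\frac{N}{2}(j\bmod 2)$. Equality in the $(1,2)$ slot forces $s\equiv_2 0$; equality of the two diagonal slots forces $a'+c'\equiv_2 0$, i.e. $a'\equiv_2 c'$; and then $r\equiv_2 t\equiv_2 1+\frac{k}{2}c$. On the vector side, $\ell$ odd makes the top entry $\frac{N}{2}(\ell-1)u'\equiv_N 0$, so the top equation $\frac{N}{2}s=0$ is automatic, while $k$ even makes the bottom entry $\frac{N}{2}(k-1)v\equiv_N\frac{N}{2}v$, so the bottom equation reads $\frac{N}{2}v\equiv_N\frac{N}{2}t$, i.e. $v\equiv_2 t\equiv_2 1+\frac{k}{2}c$. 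Thus, for a fixed admissible pair $(k,\ell)$, membership in $K$ is equivalent to $a'\equiv_2 c'$ together with $v\equiv_2 1+\frac{k}{2}c$.

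To finish, I would observe that $\frac{k}{2}\bmod 2$ is the same for every admissible $(k,\ell)$: since $k$ is even, $\ell$ odd and $\gcd(k,\ell)=1$, the factor $k$ carries the whole $2$-part of $N$, so $\frac{k}{2}$ is odd when $N\equiv_4 2$ and even when $N\equiv_4 0$. Hence $\frac{k}{2}c\equiv_2 c$ if $N\equiv_4 2$ and $\frac{k}{2}c\equiv_2 0$ if $N\equiv_4 0$, which rewrites the pair of conditions exactly as in (ii); and since $r,s,t$ may be chosen freshly for each $(k,\ell)$, the conditions so obtained are both necessary and sufficient, yielding the claimed equivalence. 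The only real care needed is the consistent bookkeeping of reductions modulo $2N$ against reductions modulo $N$, together with the short valuation argument showing that $\frac{k}{2}\bmod 2$ depends only on $N\bmod 4$; I do not expect any genuine obstacle beyond this.
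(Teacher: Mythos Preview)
Your proof is correct and follows essentially the same route as the paper's: the paper likewise combines Lemmas~\ref{N} and~\ref{even_odd} to write out $T^kR^\ell(R^\ell T^k)^{-1}$ explicitly, simplifies the vector part to $\left(\begin{smallmatrix}0\\ \frac{N}{2}v\end{smallmatrix}\right)$, and reads off from the definition of $K$ the equivalent conditions $a'\equiv_2 c'$ and $v\equiv_2 1+\frac{k}{2}c$, then declares ``the rest is easy.'' Your write-up simply spells out that last step (the $2$-adic valuation argument showing $\frac{k}{2}\bmod 2$ depends only on $N\bmod 4$) in full, which is a welcome clarification rather than a different approach.
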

\begin{proof}
Let $k\in \N$ be even and $\ell\in \N$ be odd such that $N=k\ell$ and $\gcd(k,\ell)=1$. By Lemmas \ref{N} and  \ref{even_odd} and the definition of the group $K$, the condition 
$$T^k\cdot R^\ell\left(R^\ell\cdot T^k\right)^{-1}=\left(A^kB^\ell(B^\ell A^k)^{-1},\vc{\frac{N}{2}(\ell-1)u'}{\frac{N}{2}(k-1)v}\right)=$$
$$=\left(\mat{1}{0}{0}{1}+N
 \mat{1+a'+c'+\frac{k}{2}c}{0}{1+a'+c'+\frac{k}{2}c}{1+\frac{k}{2}c},\vc{0}{\frac{N}{2}v}\right)\in K$$
is equivalent to 
$a'\equiv_2 c'$  and 
$v\equiv_2 1+\frac{k}{2}c\ .$
The rest is easy.
\end{proof}

\subsection{Case $k$ is odd and $\ell$ is even}

In this part we assume that $k\in\N$ is odd, $\ell\in\N$ is even, $N=k\ell$ and $\gcd(k,\ell)=1$.

\begin{lemma}\label{odd_even}
 $A^k B^\ell(B^\ell A^k)^{-1}=\mat{1}{0}{0}{1}+N\mat{1+\frac{\ell}{2}b'}{1+\frac{\ell}{2}b'}{0}{1+\frac{\ell}{2}b'}\ .$
\end{lemma}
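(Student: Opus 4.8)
The plan is to carry out the same computation as in the proof of Lemma~\ref{even_odd}, but with the roles of the triple $(A,k,c)$ and the triple $(B,\ell,b')$ interchanged. One genuinely new feature is worth anticipating: this time it is the linear part $\mat{1}{0}{-\ell}{1}$ of $B^\ell$ (rather than that of $A^k$) which reduces to the identity modulo $2$, since $\ell$ is even; consequently all the parameters $a,b,c$ contributed by $A^k$ will cancel, and only $\tfrac{\ell}{2}b'$ survives, which is why the answer here is strictly simpler than in Lemma~\ref{even_odd}.

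First I would insert the formulas of Corollary~\ref{7.6}, namely $A^k=\mat{1}{k}{0}{1}+N\mat{a+\frac{k+1}{2}c}{a+b}{c}{a+\frac{k-1}{2}c}$ (the case $k$ odd) and $B^\ell=\mat{1}{0}{-\ell}{1}+N\tfrac{\ell}{2}b'\mat{1}{0}{1}{1}$ (the case $\ell$ even), and expand $A^kB^\ell$ and $B^\ell A^k$ modulo $2N$. Throughout I would repeatedly use that $N^2\equiv_{2N}0$ (so every term quadratic in $N$ disappears), that $-N\equiv_{2N}N$ (so the $-k\ell=-N$ appearing on one diagonal entry of the product of linear parts can be absorbed into the $N$-part, leaving that part equal to $\mat{1}{k}{-\ell}{1}$ in both cases), and that $Nj\equiv_{2N}0$ whenever $j$ is even (so, inside an $N$-term, $\mat{1}{k}{0}{1}$ may be replaced by $\mat{1}{1}{0}{1}$ since $k$ is odd, $\mat{1}{0}{-\ell}{1}$ may be replaced by $\mat{1}{0}{0}{1}$ since $\ell$ is even, and terms such as $2a$ or $\ell b'$ may be dropped). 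This puts $A^kB^\ell$ and $B^\ell A^k$ each in the form $\mat{1}{k}{-\ell}{1}+N(\cdots)$ with an explicit matrix of parameters in the $N$-part.

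Since $B^\ell A^k\in\SL(2,\Z_{2N})$, I would then invert it by Remark~\ref{inverse}: its linear part being $\mat{1}{k}{-\ell}{1}$, the inverse has linear part $\mat{1}{-k}{\ell}{1}$ (again using $-N\equiv_{2N}N$) and an $N$-part obtained from that of $B^\ell A^k$ by interchanging the two diagonal entries and leaving the off-diagonal ones fixed. Multiplying $A^kB^\ell$ by $(B^\ell A^k)^{-1}$, the linear parts multiply to $\mat{1+N}{0}{0}{1+N}$ (consistently with $[A^kB^\ell]_N=t^kr^\ell=r^\ell t^k=[B^\ell A^k]_N$, which holds because $N=k\ell$ and $\gcd(k,\ell)=1$), so the product has the shape $\mat{1}{0}{0}{1}+NM$. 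Collecting the entries of $M$ and simplifying with the identity $\tfrac{k-1}{2}c+c=\tfrac{k+1}{2}c$ together with the observation that $(k+1)c$, $2a$, $2b$, $2c$ and $\ell b'$ are all even, every occurrence of $a,b,c$ vanishes modulo $2$ and one is left with $M=\mat{1+\frac{\ell}{2}b'}{1+\frac{\ell}{2}b'}{0}{1+\frac{\ell}{2}b'}$, as asserted.

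The main obstacle is nothing conceptual, only the accurate bookkeeping of the $N$-linear corrections through the three matrix multiplications involved (forming $A^kB^\ell$, forming $B^\ell A^k$, and multiplying the first by the inverse of the second), keeping in mind that inside an $N$-term only parities matter and that $a,b,c,a',b',c'\in\{0,1\}$. Since the chain of cancellations is the mirror image of the one already displayed in full in the proof of Lemma~\ref{even_odd}, I would present it in the same compressed, step-by-step display format rather than writing out every intermediate $2\times2$ matrix.
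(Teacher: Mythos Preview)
Your proposal is correct and follows exactly the paper's own proof: insert the formulas from Corollary~\ref{7.6} for $A^k$ ($k$ odd) and $B^\ell$ ($\ell$ even), compute $A^kB^\ell$ and $B^\ell A^k$ separately in the form $\mat{1}{k}{-\ell}{1}+N(\cdots)$, invert the latter via Remark~\ref{inverse}, multiply, and simplify the $N$-part modulo~$2$. One small caveat: your heuristic that the result is ``strictly simpler'' here because $B^\ell$'s linear part reduces to the identity modulo~$2$ is not quite the right explanation---in Lemma~\ref{even_odd} it is $A^k$'s linear part that reduces to the identity, yet $a'+c'$ still survives there---so the asymmetry between the two lemmas is not purely a parity-of-linear-part phenomenon but depends on the specific shapes of the $N$-correction matrices of $A$ versus $B$; this does not affect the computation, which goes through exactly as you outline.
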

\begin{proof}
We proceed similarly as in the proof of Lemma \ref{even_odd}. Since $k$ is odd and $\ell$ is even, it follows that $N\cdot(\begin{smallmatrix}1 & k\\ 0 & 1\end{smallmatrix})=N\cdot(\begin{smallmatrix}1 & 1\\ 0 & 1\end{smallmatrix})$ and $N\cdot(\begin{smallmatrix}1 & 0\\ -\ell & 1\end{smallmatrix})=N\cdot(\begin{smallmatrix}1 & 0\\ 0 & 1\end{smallmatrix})$ when computing modulo $2N$. Using these equalities, the equality $k\ell=N$ and Corollary \ref{7.6} we obtain
\begin{align*}
 &A^k B^\ell=\Bigg(\mat{1}{k}{0}{1}+
 N\mat{a+\frac{k+1}{2}c}{a+b}{c}{a+\frac{k-1}{2}c}\Bigg)\cdot\\
 &\hspace{65mm}\cdot\Bigg(\mat{1}{0}{-\ell}{1}+
 N\tfrac{\ell}{2}b'\mat{1}{0}{1}{1}\Bigg)=\\
 &=\mat{1-N}{k}{-\ell}{1}+
N\mat{a+\frac{k+1}{2}c}{a+b}{c}{a+\frac{k-1}{2}c}+
N\tfrac{\ell}{2}b'\mat{1}{1}{0}{1}\mat{1}{0}{1}{1}=\\
&=\mat{1-N}{k}{-\ell}{1}+
N\mat{a+\frac{k+1}{2}c}{a+b}{c}{a+\frac{k-1}{2}c}+
N\tfrac{\ell}{2}b'\mat{0}{1}{1}{1}=\\
&=\mat{1}{k}{-\ell}{1}+
N\mat{1+a+\frac{k+1}{2}c}{a+b+\frac{\ell}{2}b'}{c+\frac{\ell}{2}b'}{a+\frac{k-1}{2}c+\frac{\ell}{2}b'}
\end{align*}
and, similarly,
\begin{align*}
 &B^\ell A^k=\Bigg(\mat{1}{0}{-\ell}{1}+
 N\tfrac{\ell}{2}b'\mat{1}{0}{1}{1}\Bigg)\cdot\\
 &\hspace{50mm}\cdot\Bigg(\mat{1}{k}{0}{1}+
 N\mat{a+\frac{k+1}{2}c}{a+b}{c}{a+\frac{k-1}{2}c}\Bigg)=\\
 &=\mat{1}{k}{-\ell}{1-N}+
 N\mat{a+\frac{k+1}{2}c}{a+b}{c}{a+\frac{k-1}{2}c}+N\tfrac{\ell}{2}b'\mat{1}{0}{1}{1}\mat{1}{1}{0}{1}=\\
&=\mat{1}{k}{-\ell}{1-N}+
 N\mat{a+\frac{k+1}{2}c}{a+b}{c}{a+\frac{k-1}{2}c}+N\tfrac{\ell}{2}b'\mat{1}{1}{1}{0}=\\
&=\mat{1}{k}{-\ell}{1}+N
\mat{\alpha}{\beta}{\gamma}{\delta}
\end{align*}
 where $$\mat{\alpha}{\beta}{\gamma}{\delta}=\mat{a+\frac{k+1}{2}c+\frac{\ell}{2}b'}{a+b+\frac{\ell}{2}b'}{c+\frac{\ell}{2}b'}{1+a+\frac{k-1}{2}c}\ .$$
Hence, by Remark \ref{inverse} and by the equality $-N\equiv_{2N} N$, it holds that 
$$(B^\ell A^k)^{-1}=\mat{1}{-k}{\ell}{1}+
N\mat{\delta}{\beta}{\gamma}{\alpha}$$
and therefore
\begin{align*}
&A^k B^\ell(B^\ell A^k)^{-1}=\Bigg(\mat{1}{k}{-\ell}{1}+
N\mat{\delta+c}{\beta}{\gamma}{\alpha-c}\Bigg)\cdot\\
&\hspace{60mm}\cdot\Bigg(\mat{1}{-k}{\ell}{1}+
N\mat{\delta}{\beta}{\gamma}{\alpha}\Bigg)=\\
&=\mat{1+N}{0}{0}{1+N}+N\mat{1}{1}{0}{1}\mat{\delta}{\beta}{\gamma}{\alpha}+\\
&\hspace{60mm}+N\mat{\delta+c}{\beta}{\gamma}{\alpha-c}\mat{1}{1}{0}{1}=\\
&=\mat{1+N}{0}{0}{1+N}+N\mat{\gamma+\delta}{\alpha+\beta}{\gamma}{\alpha}+N\mat{\delta+c}{\beta+\delta+c}{\gamma}{\alpha+\gamma-c}=\\
&=\mat{1}{0}{0}{1}+N\mat{1+\gamma+c}{\alpha+\delta+c}{0}{1+\gamma-c}=\\
&=\mat{1}{0}{0}{1}+N\mat{1+\frac{\ell}{2}b'}{1+\frac{\ell}{2}b'}{0}{1+\frac{\ell}{2}b'} .
\end{align*}
\end{proof}

\begin{proposition}\label{3.5}
The following are equivalent:
\begin{enumerate}
\item[(i)] $T^{k}R^{\ell}(R^{\ell}T^{k})^{-1}\in K$  holds for all $k,\ell\in \N$ such that $k$ is odd, $\ell$ is even, $N=k\ell$ and $\gcd(k,\ell)=1$. 
\item[(ii)]  $u'\equiv_2\begin{cases} 1+b' & \text{if}\ N\equiv_4 2,\\ 1 & \text{if}\ N\equiv_4 0  .\end{cases}$
\end{enumerate}
\end{proposition}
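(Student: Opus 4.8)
The plan is to mimic the proof of Proposition~\ref{3.3}, now feeding in Lemmas~\ref{N} and~\ref{odd_even} (the roles of $k$ and $\ell$, hence of the two generators, being interchanged). Fix $k,\ell\in\N$ with $k$ odd, $\ell$ even, $N=k\ell$ and $\gcd(k,\ell)=1$; such a pair exists, e.g. $\ell=2^{v_2(N)}$ and $k=N/\ell$. First I would apply Lemma~\ref{N} to write
$$T^{k}R^{\ell}(R^{\ell}T^{k})^{-1}=\left(A^{k}B^{\ell}(B^{\ell}A^{k})^{-1},\ \vc{\tfrac{N}{2}(\ell-1)u'}{\tfrac{N}{2}(k-1)v}\right),$$
and simplify the vector part: since $k$ is odd, $k-1$ is even and $\tfrac{N}{2}(k-1)v\equiv_{N}0$, while $\ell$ even makes $\ell-1$ odd so $\tfrac{N}{2}(\ell-1)u'\equiv_{N}\tfrac{N}{2}u'$; thus the second component equals $\bigl(\tfrac{N}{2}u',0\bigr)^{T}$. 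For the matrix part I would quote Lemma~\ref{odd_even} directly,
$$A^{k}B^{\ell}(B^{\ell}A^{k})^{-1}=\mat{1}{0}{0}{1}+N\mat{1+\frac{\ell}{2}b'}{1+\frac{\ell}{2}b'}{0}{1+\frac{\ell}{2}b'}.$$

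Next I would compare these with the description of $K$ from Theorem~\ref{appleby}: an element of $\SL(2,\Z_{2N})\ltimes\Z_N^2$ lies in $K$ iff its matrix part is $\left(\begin{smallmatrix}1+Nr & Ns\\ Nt & 1+Nr\end{smallmatrix}\right)$ and its vector part is $\bigl(\tfrac{N}{2}s,\tfrac{N}{2}t\bigr)^{T}$ for some $r,s,t\in\{0,1\}$. Reading off the matrix above forces $t\equiv_{2}0$ and $r\equiv_{2}s\equiv_{2}1+\tfrac{\ell}{2}b'$ (as $Nm\bmod 2N$ depends only on $m\bmod 2$); the vector condition then reduces to $\tfrac{N}{2}u'\equiv_{N}\tfrac{N}{2}s$, i.e. $u'\equiv_{2}1+\tfrac{\ell}{2}b'$, the remaining coordinate $0\equiv_{N}\tfrac{N}{2}t$ being automatic. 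Hence, for each such $(k,\ell)$, one has $T^{k}R^{\ell}(R^{\ell}T^{k})^{-1}\in K$ iff $u'\equiv_{2}1+\tfrac{\ell}{2}b'$.

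Finally I would convert the parity of $\tfrac{\ell}{2}$ into a condition on $N$. Since $\gcd(k,\ell)=1$ and $k$ is odd, $\ell$ carries the full $2$-adic valuation of $N$; in particular this parity is the same for every admissible decomposition, so (i) is equivalent to the single condition for any one of them. If $N\equiv_{4}2$ then $v_2(N)=1$, so $\ell\equiv_{4}2$, $\tfrac{\ell}{2}$ is odd, and the condition reads $u'\equiv_{2}1+b'$; if $N\equiv_{4}0$ then $4\mid\ell$, $\tfrac{\ell}{2}$ is even, and it reads $u'\equiv_{2}1$. This is exactly (ii), proving the equivalence. The argument is essentially bookkeeping; the only points needing care are keeping straight which identities hold modulo $N$ (the $\f$-component) versus modulo $2N$ (the $\g$-component), and the elementary observation that $v_2(\ell)=v_2(N)$ for every such $(k,\ell)$.
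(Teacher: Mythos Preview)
Your proof is correct and follows essentially the same route as the paper: you apply Lemmas~\ref{N} and~\ref{odd_even}, reduce the vector part to $\bigl(\tfrac{N}{2}u',0\bigr)^{T}$, read off the membership condition $u'\equiv_{2}1+\tfrac{\ell}{2}b'$ from the description of $K$, and then split into the two cases modulo~$4$. Your argument is in fact more explicit than the paper's, which abbreviates the final step as ``the rest is easy''; your observation that $v_2(\ell)=v_2(N)$ for every admissible pair $(k,\ell)$ is exactly what makes that last reduction work.
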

\begin{proof}
Let $k\in\N$ be odd and $\ell\in\N$ be even such that $N=k\ell$ and $\gcd(k,\ell)=1$. 
By Lemmas \ref{N} and \ref{odd_even} and the definition of the group $K$, the condition
$$T^k\cdot R^\ell\left(R^\ell\cdot T^k\right)^{-1}=\left(A^kB^\ell(B^\ell A^k)^{-1},\vc{\frac{N}{2}(\ell-1)u'}{\frac{N}{2}(k-1)v}\right)=$$
$$=\left(\mat{1}{0}{0}{1}+N\mat{1+\frac{\ell}{2}b'}{1+\frac{\ell}{2}b'}{0}{1+\frac{\ell}{2}b'},\vc{\frac{N}{2}u'}{0}\right)\in K$$
is equivalent to 
$u'\equiv_2 1+\frac{\ell}{2}b'$. The rest is easy.
\end{proof}

\section{Condition $(T^k R^{\ell}T^k)^2(TRT)^{-2}\in K$ for $0\leq k,\ell\leq N-1$ and $k\ell\equiv_N 1$}\label{section_10}

%\section{Condition $(\tilde{t}^{k}\tilde{r}^{\ell}\tilde{t}^k)^2=(\tilde{t}\tilde{r}\tilde{t})^2$ for $0\leq k,\ell\leq N-1$ and $k\ell\equiv_N 1$}

In this section we assume that $1\leq k,\ell\leq N-1$ and $k\ell\equiv_N 1$. It follows that both $k$ and $\ell$ are odd.

\begin{lemma}\label{4.1}
Let  $\f(T)=h=\left(\begin{smallmatrix}u\\v\end{smallmatrix}\right)\in\Z_N^2$ and $\f(R)=h'=\left(\begin{smallmatrix}u'\\v'\end{smallmatrix}\right)\in\Z_N^2$. Then:
\begin{enumerate}
\item[(i)] $\f(T^k R^\ell T^k)=\matN{k}{\frac{k(3k-1)}{2}}{-1}{\frac{3k+1}{2}}\cdot h+\matN{\frac{\ell+1}{2}}{1}{-{\ell\choose 2}}{\ell}\cdot h'$,

%(ii) $$(A,h)(B,h')(A,h)=\left(ABA,\left[\mat{1}{1}{-1}{2}\right]_N\cdot h+\left[\mat{1}{1}{0}{1}\right]_N\cdot h'\right)$$

\item[(ii)] $\f\left(\left(T^k R^\ell T^k\right)^{2}\right)= \matN{0}{3k^2}{-2}{1}\cdot h+\matN{1}{2}{-\ell^2}{0}\cdot h'$,

\item[(iii)] $\f\left(\left(TRT\right)^{2}\right)=\matN{0}{3}{-2}{1}\cdot h+\matN{1}{2}{-1}{0}\cdot h'$,

\item[(iv)] $\f\left(\left(T^k R^\ell T^k\right)^{2}\left(TRT\right)^{-2}\right)= \vc{3(k^2-1)v}{-(\ell^2-1)u'}$.
$$$$
\end{enumerate}
\end{lemma}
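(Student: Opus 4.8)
The plan is to read all four identities off from the multiplication rules in $\SL(2,\Z_{2N})\ltimes\Z_N^2$ — Proposition~\ref{basic_0} and Lemma~\ref{basic}(i) — combined with the explicit formulas of Lemma~\ref{product}, reducing the resulting $2\times2$ integer coefficient matrices modulo $N$ at the end. Since $k\ell\equiv_N1$, both $k$ and $\ell$ are odd, so $k-1$, $\ell-1$, $3k-1$ are even; the reductions rest on the facts $k\ell\equiv_N1$ and $k\ell\cdot m\equiv_N m$ for every integer $m$, applied with $m=\tfrac{k-1}{2},\tfrac{\ell-1}{2},\tfrac{3k-1}{2}$. The catch is that one may cancel factors $k\ell$ only, never $k^2$ or $\ell^2$. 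For (i) I would write $T^kR^\ell T^k=(T^kR^\ell)\,T^k$ and apply Proposition~\ref{basic_0}(i),(ii) with $[\g(T^kR^\ell)]_N=[A]_N^k[B]_N^\ell=t^kr^\ell=\matN{1-k\ell}{k}{-\ell}{1}\equiv_N\matN{0}{k}{-\ell}{1}$; plugging in Lemma~\ref{product}(i),(ii) for $\f(T^k)$ and $\f(T^kR^\ell)$ and carrying out the one matrix product, the $h$-coefficient reduces to $\matN{k}{\frac{k(3k-1)}{2}}{-1}{\frac{3k+1}{2}}$ (using $-k\ell\equiv_N-1$ and $\ell{k\choose 2}=k\ell\cdot\tfrac{k-1}{2}\equiv_N\tfrac{k-1}{2}$), while the $h'$-coefficient, which is already the matrix $\matN{\ell-k{\ell\choose 2}}{k\ell}{-{\ell\choose 2}}{\ell}$ from Lemma~\ref{product}(ii), reduces to $\matN{\frac{\ell+1}{2}}{1}{-{\ell\choose 2}}{\ell}$.

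For (ii) I would use $\f(S^2)=\bigl(I+[\g(S)]_N\bigr)\f(S)$ (Lemma~\ref{basic}(i)) with $S=T^kR^\ell T^k$, where $[\g(S)]_N=t^kr^\ell t^k\equiv_N\matN{0}{k}{-\ell}{0}$, so $I+[\g(S)]_N=\matN{1}{k}{-\ell}{1}$. Multiplying this against the two coefficient matrices found in (i) and simplifying with the same arithmetic yields $\matN{0}{3k^2}{-2}{1}$ acting on $h$ and $\matN{1}{2}{-\ell^2}{0}$ acting on $h'$; for instance the $(1,2)$-entry of the first is $\tfrac{k(3k-1)}{2}+k\cdot\tfrac{3k+1}{2}=3k^2$ and the $(2,1)$-entry of the second is $-\tfrac{\ell(\ell+1)}{2}-{\ell\choose 2}=-\ell^2$. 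Part (iii) is just the case $k=\ell=1$ of (ii).

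For (iv) I would first note that $[\g((T^kR^\ell T^k)^2)]_N=(t^kr^\ell t^k)^2\equiv_N-I\equiv_N(trt)^2=[\g((TRT)^2)]_N$ — this is precisely the defining relation $(t^kr^\ell t^k)^2=(trt)^2$ of $\SL(2,\Z_N)$. Hence Proposition~\ref{basic_0}(iii) applies to $S=(T^kR^\ell T^k)^2$ and $P=(TRT)^2$ and gives $\f\bigl((T^kR^\ell T^k)^2(TRT)^{-2}\bigr)=\f\bigl((T^kR^\ell T^k)^2\bigr)-\f\bigl((TRT)^2\bigr)$; subtracting the formulas from (ii) and (iii) makes the $h$-coefficient collapse to $\matN{0}{3k^2-3}{0}{0}$ and the $h'$-coefficient to $\matN{0}{0}{1-\ell^2}{0}$, which gives $\vc{3(k^2-1)v}{-(\ell^2-1)u'}$ after applying them to $h=\vc{u}{v}$ and $h'=\vc{u'}{v'}$.

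Conceptually nothing here is difficult; each item is a direct application of Proposition~\ref{basic_0} and Lemma~\ref{product}. The one genuinely laborious and error-prone step — which I expect to be the main obstacle — is the modular bookkeeping in (i) and (ii): one must carry all four entries of each coefficient matrix, halve $k-1,\ell-1,3k-1$ before reducing mod $N$, and cancel the combinations $k\ell$ while scrupulously avoiding the false cancellations $k^2\equiv_N1$ or $\ell^2\equiv_N1$.
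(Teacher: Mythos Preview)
Your proposal is correct and follows essentially the same route as the paper: decompose $T^kR^\ell T^k=(T^kR^\ell)\cdot T^k$ via Proposition~\ref{basic_0}(ii) and Lemma~\ref{product} for (i), use $\f(S^2)=(I+[\g(S)]_N)\f(S)$ with $[\g(S)]_N=\matN{0}{k}{-\ell}{0}$ for (ii), specialize to $k=\ell=1$ for (iii), and invoke the defining relation $(t^kr^\ell t^k)^2=(trt)^2$ together with Proposition~\ref{basic_0}(iii) for (iv). The arithmetic reductions you flag (cancelling only $k\ell\equiv_N1$, never $k^2$ or $\ell^2$) are exactly the ones the paper uses.
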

\begin{proof}
 (i) Firstly, $$\left[A^k B^{\ell}\right]_N=\matN{1}{k}{0}{1}\matN{1}{0}{-\ell}{1}=\matN{0}{k}{-\ell}{1}.$$

 Since $kl\equiv_{N}1$, it follows that $\ell-k{\ell\choose 2}=\ell-k\ell\frac{\ell-1}{2}\equiv_{N}\ell-\frac{\ell-1}{2}=\frac{\ell+1}{2}$ and, symmetrically, $k-\ell{k\choose 2}\equiv_{N}\frac{k+1}{2}$ . Hence, by Lemma \ref{product}(i) and (ii), we obtain
 \begin{align*}
 &\f(T^k R^\ell T^k)=\f(T^k R^\ell)+\left[\g(T^k R^{\ell})\right]_N \cdot\f(T^k)= \f(T^k R^\ell)+\left[A^k B^{\ell}\right]_N \cdot\f(T^k)=\\
 &=\matN{k}{{k\choose 2}}{0}{k}\cdot h+\matN{\frac{\ell+1}{2}}{1}{-{\ell\choose 2}}{\ell}\cdot h'+\matN{0}{k}{-\ell}{1}\matN{k}{{k\choose 2}}{0}{k}\cdot h=\\
 &=\matN{k}{{k\choose 2}}{0}{k}\cdot h+\matN{\frac{\ell+1}{2}}{1}{-{\ell\choose 2}}{\ell}\cdot h'+\matN{0}{k^2}{-1}{\frac{k+1}{2}}\cdot h=\\
  &=\matN{k}{\frac{k(3k-1)}{2}}{-1}{\frac{3k+1}{2}}\cdot h+\matN{\frac{\ell+1}{2}}{1}{-{\ell\choose 2}}{\ell}\cdot h'\ .
  \end{align*}
  (ii) By the proof of the part (i), we have $$\left[A^k B^{\ell}A^k\right]_N=\left[A^k B^{\ell}]_{N}\cdot[A^k\right]_N=\matN{0}{k}{-\ell}{1}\matN{1}{k}{0}{1}=\matN{0}{k}{-\ell}{0}$$
  and,  by the part (i), $$w=\f(T^k R^{\ell}T^k)=\matN{k}{\frac{k(3k-1)}{2}}{-1}{\frac{3k+1}{2}}\cdot h+\matN{\frac{\ell+1}{2}}{1}{-{\ell\choose 2}}{\ell}\cdot h'.$$
  Hence we obtain
  \begin{align*}
  &\f((T^k R^{\ell}T^k)^2)=\f((T^k R^{\ell}T^k)(T^k R^{\ell}T^k))=w+[A^k B^{\ell}A^k]_N\cdot w=\\
  &=w+\matN{0}{k}{-\ell}{0}\cdot w=\matN{1}{k}{-\ell}{1}\cdot w=\\
  &=\matN{1}{k}{-\ell}{1}\matN{k}{\frac{k(3k-1)}{2}}{-1}{\frac{3k+1}{2}}\cdot h+\matN{1}{k}{-\ell}{1}\matN{\frac{\ell+1}{2}}{1}{-{\ell\choose 2}}{\ell}\cdot h'=\\
  &=\matN{0}{3k^2}{-2}{1}\cdot h+\matN{1}{2}{-\ell^2}{0}\cdot h'
  \end{align*}
where we have used the equalities $-\ell k\frac{3k-1}{2}+\frac{3k+1}{2}\equiv_N-\frac{3k-1}{2}+\frac{3k+1}{2}=1$ and
$\frac{\ell+1}{2}-k{\ell\choose 2}=\frac{\ell+1}{2}-k\ell\frac{\ell-1}{2}\equiv_{N}\frac{\ell+1}{2}-\frac{\ell-1}{2}=1$.

(iii) Follows immediately from (ii).

(iv) By the generating relations for 
$t=\left[\begin{smallmatrix}1 & 1\\ 0 & 1\end{smallmatrix}\right]_{N}=[A]_N$ and $r=\left[\begin{smallmatrix}1 & 0\\ -1 & 1\end{smallmatrix}\right]_{N}=[B]_N$ of the group $\SL(2,\Z_{N})$ it holds that $$\left[\g\left(\left(T^k R^\ell T^k\right)^{2}\right)\right]_{N}=\left[(A^kB^\ell A^k)^2\right]_N=(t^kr^{\ell}t^k)^2=(trt)^2=$$
$$=\left[(AB A)^2\right]_N=\left[\g\left(\left(T RT\right)^{2}\right)\right]_{N} .$$
Hence, by the parts (ii), (iii)  and by Lemma \ref{basic_0}(iii), we have 
\begin{align*}
&\f\Big(\left(T^k R^\ell T^k\right)^{2}\left(TRT\right)^{-2}\Big)=\f\Big(\left(T^k R^\ell T^k\right)^{2}\Big)-\f\Big(\left(TRT\right)^{2}\Big)=\\
&=\matN{0}{3k^2}{-2}{1}\cdot h+\matN{1}{2}{-\ell^2}{0}\cdot h'-\matN{0}{3}{-2}{1}\cdot h-\matN{1}{2}{-1}{0}\cdot h'=\\
&=\matN{0}{3(k^2-1)}{0}{0}\cdot h+\matN{0}{0}{-\ell^2+1}{0}\cdot h'=\\
&=\matN{0}{3(k^2-1)}{0}{0}\cdot \vc{v}{u}+\matN{0}{0}{-\ell^2+1}{0}\cdot \vc{v'}{u'}=\vc{3(k^2-1)v}{-(\ell^2-1)u'}.
\end{align*}
\end{proof}

\begin{lemma}\label{10.2}
Let $k\ell=1+\varepsilon N$, where $\varepsilon\in\N_0$. Then:
\begin{enumerate}
\item[(i)] $A^k B^{\ell}=$
$$=\mat{0}{k}{-\ell}{1}+
 N\mat{b+\frac{k+1}{2}c+\frac{\ell-1}{2}b'+c'+\varepsilon}{a+b+a'+\frac{\ell-1}{2}b'}{a+\frac{k+1}{2}c+a'+c'}{a+\frac{k-1}{2}c+a'+\frac{\ell+1}{2}b'},$$

\item[(ii)] $A^k B^{\ell}A^k=$
 $$=\mat{0}{k}{-\ell}{0}+
 N\mat{b+\frac{k-1}{2}c+\frac{\ell-1}{2}b'+c'+\varepsilon}{c+a'+c'+\varepsilon}{c+a'+c'}{b+\frac{k+1}{2}c+\frac{\ell+1}{2}b'+c'+\varepsilon},$$
 \item[(iii)] $(A^k B^{\ell}A^k)^2=\mat{-1}{0}{0}{-1}+
 N\mat{0}{c+b'}{c+b'}{0}=(ABA)^2$,
    
\item[(iv)] $(A^k B^{\ell}A^k)^2(ABA)^{-2}=\mat{1}{0}{0}{1}$.
\end{enumerate}
\end{lemma}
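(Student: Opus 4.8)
The strategy for all four parts is direct expansion, exploiting that in $\Z_{2N}$ one has $N^2\equiv_{2N}0$, $-N\equiv_{2N}N$, and $Nj\equiv_{2N}0$ or $Nj\equiv_{2N}N$ according as $j$ is even or odd; the last fact lets every matrix occurring as a coefficient of $N$ be reduced entrywise modulo $2$, where in particular $k\equiv_2\ell\equiv_2 1$ since $k\ell\equiv_N 1$ forces $k,\ell$ to be odd. Throughout I would use Corollary \ref{7.6} in its ``$k$ odd'' and ``$\ell$ odd'' branches to write $A^k=\mat{1}{k}{0}{1}+NP_1$ and $B^\ell=\mat{1}{0}{-\ell}{1}+NP_2$ with $P_1,P_2$ the displayed coefficient matrices, and I would feed in $k\ell=1+\varepsilon N$ wherever a product $k\ell$ appears.

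For part (i), since $N^2\equiv_{2N}0$ the product is $A^kB^\ell=\mat{1}{k}{0}{1}\mat{1}{0}{-\ell}{1}+N\big(\mat{1}{k}{0}{1}P_2+P_1\mat{1}{0}{-\ell}{1}\big)$; the leading factor equals $\mat{1-k\ell}{k}{-\ell}{1}=\mat{-\varepsilon N}{k}{-\ell}{1}=\mat{0}{k}{-\ell}{1}+N\mat{-\varepsilon}{0}{0}{0}$, and $-\varepsilon N\equiv_{2N}\varepsilon N$, which is exactly how $\varepsilon$ enters the $(1,1)$ entry of the coefficient of $N$. Reducing the bracketed sum modulo $2$ (using $k,\ell$ odd) and collecting gives the asserted formula. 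Part (ii) is then $(A^kB^\ell)A^k$: substitute (i), multiply, observe that the leading factor $\mat{0}{k}{-\ell}{1}\mat{1}{k}{0}{1}=\mat{0}{k}{-\ell}{1-k\ell}$ again contributes an $\varepsilon$, now to the $(2,2)$ entry, then reduce modulo $2$ and collect.

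For part (iii), write $A^kB^\ell A^k=\mat{0}{k}{-\ell}{0}+NP_4$ with $P_4=\mat{p}{q}{r}{s}$ read off from (ii), so that $(A^kB^\ell A^k)^2=\mat{0}{k}{-\ell}{0}^2+N\big(\mat{0}{k}{-\ell}{0}P_4+P_4\mat{0}{k}{-\ell}{0}\big)$. Here $\mat{0}{k}{-\ell}{0}^2=\mat{-k\ell}{0}{0}{-k\ell}=\mat{-1}{0}{0}{-1}+N\mat{\varepsilon}{0}{0}{\varepsilon}$, while reducing $\mat{0}{k}{-\ell}{0}$ to $\mat{0}{1}{1}{0}$ modulo $2$ shows the coefficient of $N$ in the cross term is $\equiv_2\mat{q+r}{p+s}{p+s}{q+r}$. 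The crux is then the numerical check, from the explicit entries of $P_4$ in (ii), that $q+r\equiv_2\varepsilon$ and $p+s\equiv_2 kc+\ell b'\equiv_2 c+b'$: the first identity makes the two occurrences of $\varepsilon$ cancel on the diagonal, so the result is $\mat{-1}{0}{0}{-1}+N\mat{0}{c+b'}{c+b'}{0}$ with no dependence on $k,\ell,\varepsilon$; specializing to $k=\ell=1$ (hence $\varepsilon=0$) identifies this with $(ABA)^2$, and part (iv) then follows at once since $(ABA)^2(ABA)^{-2}=I$. The only real obstacle is bookkeeping — tracking how the correction $\varepsilon$ flows from (i) through (ii) into $P_4$ and confirming the cancellation $\varepsilon+(q+r)\equiv_2 0$; as a consistency check one may note in advance that the $\SL(2,\Z_N)$ relation $(t^kr^\ell t^k)^2=(trt)^2$ already forces $(A^kB^\ell A^k)^2$ and $(ABA)^2$ to agree modulo $N$.
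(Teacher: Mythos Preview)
Your proposal is correct and follows essentially the same approach as the paper: both proofs use Corollary~\ref{7.6} in the odd--odd branch, expand the products via Lemma~\ref{basic}(ii) (i.e.\ $N^2\equiv_{2N}0$), reduce all $N$-coefficient matrices entrywise modulo $2$ using $k\equiv_2\ell\equiv_2 1$, and track the entry of $\varepsilon$ from $k\ell=1+\varepsilon N$ through each step until it cancels in (iii). Your summary of the key numerical checks in (iii), namely $q+r\equiv_2\varepsilon$ and $p+s\equiv_2 c+b'$, matches exactly what the paper verifies as $\beta+\gamma-\varepsilon\equiv_2 0$ and $\alpha+\delta\equiv_2 c+b'$.
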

\begin{proof}
(i) Since both $k$ and $\ell$ are odd, it follows that $N\cdot(\begin{smallmatrix}1 & k\\ 0 & 1\end{smallmatrix})=N\cdot(\begin{smallmatrix}1 & 1\\ 0 & 1\end{smallmatrix})$ and $N\cdot(\begin{smallmatrix}1 & 0\\ -\ell & 1\end{smallmatrix})=N\cdot(\begin{smallmatrix}1 & 0\\ 1 & 1\end{smallmatrix})$ when computing modulo $2N$. Using these equalities, the equality $k\ell=1+\varepsilon N$, for some $\varepsilon\in\N_{0}$, and Corollary \ref{7.6} we obtain
\begin{align*}
 &A^k B^{\ell}=\Bigg(\mat{1}{k}{0}{1}+N\mat{a+\frac{k+1}{2}c}{a+b}{c}{a+\frac{k-1}{2}c}\Bigg)\cdot\\
 &\hspace{40mm}\cdot\Bigg(\mat{1}{0}{-\ell}{1}+N\mat{a'+\frac{\ell-1}{2}b'}{b'}{a'+c'}{a'+\frac{\ell+1}{2}b'}\Bigg)=\\
 &=\mat{-\varepsilon N}{k}{-\ell}{1}+N\mat{1}{1}{0}{1}\mat{a'+\frac{\ell-1}{2}b'}{b'}{a'+c'}{a'+\frac{\ell+1}{2}b'}+\\
&\hspace{55mm}+ N\mat{a+\frac{k+1}{2}c}{a+b}{c}{a+\frac{k-1}{2}c}\mat{1}{0}{1}{1}=\\
 &=\mat{-\varepsilon N}{k}{-\ell}{1}+N\mat{\frac{\ell-1}{2}b'+c'}{a'+\frac{\ell+3}{2}b'}{a'+c'}{a'+\frac{\ell+1}{2}b'}+
 N\mat{b+\frac{k+1}{2}c}{a+b}{a+\frac{k+1}{2}c}{a+\frac{k-1}{2}c}
 \end{align*}
 
 (ii) By the part (i) and by using again the equality $-N\equiv_{2N}N$ and the equality $N\cdot(\begin{smallmatrix}0 & k\\ -\ell & 1\end{smallmatrix})=N\cdot(\begin{smallmatrix}0 & 1\\ 1 & 1\end{smallmatrix})$, we obtain
 \begin{align*}
 &A^k B^{\ell}A^k=(A^k B^{\ell})A^k=\\
 &=\Bigg(\mat{0}{k}{-\ell}{1}+
 N\mat{b+\frac{k+1}{2}c+\frac{\ell-1}{2}b'+c'+\varepsilon}{a+b+a'+\frac{\ell-1}{2}b'}{a+\frac{k+1}{2}c+a'+c'}{a+\frac{k-1}{2}c+a'+\frac{\ell+1}{2}b'}\Bigg)\cdot\\
 &\hspace{45mm}\cdot\Bigg(\mat{1}{k}{0}{1}+N\mat{a+\frac{k+1}{2}c}{a+b}{c}{a+\frac{k-1}{2}c}\Bigg)=\\
  &=\mat{0}{k}{-\ell}{-\varepsilon N}+
 N\mat{0}{1}{1}{1}\mat{a+\frac{k+1}{2}c}{a+b}{c}{a+\frac{k-1}{2}c}+\\
 &\hspace{10mm}+N\mat{b+\frac{k+1}{2}c+\frac{\ell-1}{2}b'+c'+\varepsilon}{a+b+a'+\frac{\ell-1}{2}b'}{a+\frac{k+1}{2}c+a'+c'}{a+\frac{k-1}{2}c+a'+\frac{\ell+1}{2}b'}\mat{1}{1}{0}{1}=\\
 &=\mat{0}{k}{-\ell}{-\varepsilon N}
+N\mat{c}{a+\frac{k-1}{2}c}{a+\frac{k-1}{2}c}{b+\frac{k-1}{2}c}+\\
 &\hspace{28mm}+ N\mat{b+\frac{k+1}{2}c+\frac{\ell-1}{2}b'+c'+\varepsilon}{a+\frac{k+1}{2}c+a'+c'+\varepsilon}{a+\frac{k+1}{2}c+a'+c'}{c+\frac{\ell+1}{2}b'+c'}=\\
 &=\mat{0}{k}{-\ell}{-\varepsilon N}+N\mat{\alpha}{\beta}{\gamma}{\delta}
 \end{align*}
 where $$\mat{\alpha}{\beta}{\gamma}{\delta}=\mat{b+\frac{k-1}{2}c+\frac{\ell-1}{2}b'+c'+\varepsilon}{c+a'+c'+\varepsilon}{c+a'+c'}{b+\frac{k+1}{2}c+\frac{\ell+1}{2}b'+c'+\varepsilon}\ .$$

 (iii) By the proof of part (ii) and  with help of the equality $N\cdot(\begin{smallmatrix}0 & k\\ -\ell & 0\end{smallmatrix})=N\cdot(\begin{smallmatrix}0 & 1\\ 1 & 0\end{smallmatrix})$ it follows that 
 \begin{align*}
 (A^k B^{\ell}A^k)^2&=\Bigg(\mat{0}{k}{-\ell}{0}+
 N\mat{\alpha}{\beta}{\gamma}{\delta}\Bigg)^2=\mat{-1-\varepsilon N}{0}{0}{-1-\varepsilon N}+\\
 &+N\mat{0}{1}{1}{0}\mat{\alpha}{\beta}{\gamma}{\delta}+N\mat{\alpha}{\beta}{\gamma}{\delta}\mat{0}{1}{1}{0}=\\
 &=\mat{-1-\varepsilon N}{0}{0}{-1-\varepsilon N}+N\mat{\gamma}{\delta}{\alpha}{\beta}+N\mat{\beta}{\alpha}{\delta}{\gamma}=\\
 &=\mat{-1}{0}{0}{-1}+N\mat{\beta+\gamma-\varepsilon}{\alpha+\delta}{\alpha+\delta}{\beta+\gamma-\varepsilon}=\\
 &=\mat{-1}{0}{0}{-1}+
 N\mat{0}{c+b'}{c+b'}{0}\ .
 \end{align*}
 This expression does not depend on $k$ and $\ell$. Hence we obtain $(A^{k}B^{\ell}A^k)^2=(ABA)^2$.
 
 (iv) Follows immediately from (iii).
\end{proof}

\begin{remark}\label{10.3}
 The following are equivalent for $e\in\N_0$:
 \begin{enumerate}
  \item There is $s\in\{0,1\}$ such that $e\equiv_{N}\frac{N}{2}s$ and $Ns\equiv_{2N}0$,
  \item $e\equiv_{N}0$.
 \end{enumerate}
\end{remark}

\begin{proposition}\label{4.3}
 Let $1\leq k,\ell\leq N-1$ be  such that $k\ell\equiv_N 1$. Then the condition $(T^k R^{\ell}T^k)^2(TRT)^{-2}\in K$  holds if and only if  $3(k^2-1)v\equiv_N 0$ and 
$(\ell^2-1)u'\equiv_N 0$.
\end{proposition}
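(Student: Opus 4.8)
The plan is to evaluate the element $P:=(T^k R^{\ell}T^k)^2(TRT)^{-2}$ in both coordinates of the semidirect product $\SL(2,\Z_{2N})\ltimes \Z_{N}^2$ and then to compare it directly with the description of $K$ from Theorem \ref{appleby}. For the matrix coordinate I would use that $\g$ is a group homomorphism (Proposition \ref{basic_0}(i)), so $\g(P)=(A^kB^{\ell}A^k)^2(ABA)^{-2}$, and then invoke Lemma \ref{10.2}(iii) and (iv) to conclude $\g(P)=\mat{1}{0}{0}{1}$ in $\SL(2,\Z_{2N})$. For the translation coordinate, Lemma \ref{4.1}(iv) gives directly $\f(P)=\vc{3(k^2-1)v}{-(\ell^2-1)u'}$. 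Hence $P=\big(\mat{1}{0}{0}{1},\vc{3(k^2-1)v}{-(\ell^2-1)u'}\big)$.

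The remaining step is to decide when such an element lies in $K$. Writing a general element of $K$ as $\big(\mat{1+Nr}{Ns}{Nt}{1+Nr},\vc{\frac{N}{2}s}{\frac{N}{2}t}\big)$ with $r,s,t\in\{0,1\}$, the membership $P\in K$ is equivalent to the system: $Nr\equiv_{2N}0$ coming from the diagonal entries; $Ns\equiv_{2N}0$ together with $\frac{N}{2}s\equiv_N 3(k^2-1)v$ coming from the $(1,2)$-entry and the first translation coordinate; and $Nt\equiv_{2N}0$ together with $\frac{N}{2}t\equiv_N -(\ell^2-1)u'$ coming from the $(2,1)$-entry and the second translation coordinate. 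The condition on $r$ is met by $r=0$ and imposes nothing further. Applying Remark \ref{10.3} with $e=3(k^2-1)v$ shows that an admissible $s$ exists if and only if $3(k^2-1)v\equiv_N 0$; applying it with $e=-(\ell^2-1)u'$ (and noting that $-(\ell^2-1)u'\equiv_N 0$ holds if and only if $(\ell^2-1)u'\equiv_N 0$) shows that an admissible $t$ exists if and only if $(\ell^2-1)u'\equiv_N 0$. Combining these, $P\in K$ if and only if $3(k^2-1)v\equiv_N 0$ and $(\ell^2-1)u'\equiv_N 0$, which is the claim.

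I do not expect a genuine obstacle here: the substantive work has already been carried out in Lemmas \ref{4.1} and \ref{10.2}, and what is left is essentially bookkeeping. The only points needing care are matching the two translation coordinates of $P$ against the entries $\tfrac{N}{2}s$, $\tfrac{N}{2}t$ of a generic element of $K$ while simultaneously respecting that the corresponding off-diagonal matrix entries $Ns$, $Nt$ must vanish modulo $2N$ — which is exactly the content that Remark \ref{10.3} was arranged to package — together with the harmless sign in the second coordinate, and the observation (already made in the text preceding Lemma \ref{4.1}) that $1\leq k,\ell\leq N-1$ with $k\ell\equiv_N 1$ forces both $k$ and $\ell$ to be odd, which is what legitimizes the application of Lemma \ref{10.2}.
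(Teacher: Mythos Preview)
Your proposal is correct and follows essentially the same route as the paper: compute the matrix part via Lemma \ref{10.2}(iv) to get the identity, read off the translation part from Lemma \ref{4.1}(iv), and then reduce membership in $K$ to the two congruences using Remark \ref{10.3}. You spell out the bookkeeping (the roles of $r,s,t$ and the sign in the second coordinate) a bit more explicitly than the paper does, but the argument is the same.
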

\begin{proof}
By Lemmas \ref{4.1}(iv), \ref{10.2}(iv), the definition of the group $K$ and by  Remark \ref{10.3}, the condition
$$\left(T^k R^\ell T^k\right)^{2}\left(TRT\right)^{-2}=\left((A^k B^{\ell}A^k)^2(A BA)^{-2}, \vc{3(k^2-1)v}{-(\ell^2-1)u'}\right)=$$
$$=\left(\mat{1}{0}{0}{1},\vc{3(k^2-1)v}{-(\ell^2-1)u'}\right)\in K$$
is equivalent to conditions
\begin{align*}
3(k^2-1)v&\equiv_N 0 \\
(\ell^2-1)u'&\equiv_N 0\ .
\end{align*}
%In particular, if $\ell=1$ then it follows from $(\ell^2+1)u'\equiv_N 0$ that  $u'\equiv_N \frac{N}{2}$. On the other hand, if $u'\equiv_N \frac{N}{2}$, then, as $\ell$ is odd, we obtain that $(\ell^2+1)u'\equiv_N 0$.
\end{proof}

\section{Condition $T^k R^{\ell}T^k(R^{\ell}T^k R^{\ell})^{-1}\in K$ for $1\leq k,\ell\leq N-1$ and $k\ell\equiv_N 1$}\label{section_11}

%\section{Condition $\tilde{t}^{k}\tilde{r}^{\ell}\tilde{t}^k=\tilde{r}^{\ell}\tilde{t}^{k}\tilde{r}^{\ell}$ for $0\leq k,\ell\leq N-1$ and $k\ell\equiv_N 1$}

In this section we assume again that $1\leq k,\ell\leq N-1$ and $k\ell\equiv_N 1$. It follows that both $k$ and $\ell$ are odd.

\begin{lemma}\label{5.1}
Let $\f(T)=h=\left(\begin{smallmatrix}u\\v\end{smallmatrix}\right)\in\Z_N^2$ and $\f(R)=h'=\left(\begin{smallmatrix}u'\\v'\end{smallmatrix}\right)\in\Z_N^2$. Then:
\begin{enumerate}
\item[(i)] $\f(R^\ell T^k R^\ell)=\matN{k}{{k\choose 2}}{-1}{\frac{k+1}{2}}\cdot h+\matN{\frac{3\ell+1}{2}}{1}{-\frac{3\ell-1}{2}\ell}{\ell}\cdot h'$,

%(ii) $$(A,h)(B,h')(A,h)=\left(ABA,\left[\mat{1}{1}{-1}{2}\right]_N\cdot h+\left[\mat{1}{1}{0}{1}\right]_N\cdot h'\right)$$

\item[(ii)] $\f\left(T^k R^\ell T^k\left(R^\ell T^k R^\ell\right)^{-1}\right)=
\vc{k^2v-\ell u'}{kv+\ell^2 u'}$.
$$$$
\end{enumerate}
\end{lemma}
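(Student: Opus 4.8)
The plan is to follow the same pattern as Lemma~\ref{4.1}, using that $\g$ is a homomorphism and that $\f$ is "twisted additive" (Proposition~\ref{basic_0}), together with the congruence $k\ell\equiv_N 1$ (equivalently $\ell\equiv_N k^{-1}$) to simplify the binomial expressions that occur.

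\emph{Part (i).} I would write $R^\ell T^k R^\ell=(R^\ell T^k)\cdot R^\ell$ and apply Proposition~\ref{basic_0}(ii):
\[
\f(R^\ell T^k R^\ell)=\f(R^\ell T^k)+[\g(R^\ell T^k)]_N\cdot\f(R^\ell).
\]
For the first term, Lemma~\ref{product}(iii) gives $\f(R^\ell T^k)=\matN{k}{{k\choose 2}}{-k\ell}{k-\ell{k\choose 2}}\cdot h+\matN{\ell}{0}{-{\ell\choose 2}}{\ell}\cdot h'$; reducing its $h$-block with $k\ell\equiv_N 1$ turns $-k\ell$ into $-1$ and $k-\ell{k\choose 2}=k-\tfrac{k\ell(k-1)}{2}$ into $\tfrac{k+1}{2}$. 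For the matrix factor, since $\g$ is a homomorphism and $[A]_N=t$, $[B]_N=r$, one has $[\g(R^\ell T^k)]_N=r^\ell t^k=\matN{1}{k}{-\ell}{1-k\ell}=\matN{1}{k}{-\ell}{0}$. Multiplying this by $\f(R^\ell)=\matN{\ell}{0}{-{\ell\choose 2}}{\ell}\cdot h'$ (Lemma~\ref{product}(i)) and using $\ell-k{\ell\choose 2}\equiv_N\tfrac{\ell+1}{2}$ yields $\matN{\frac{\ell+1}{2}}{1}{-\ell^2}{0}\cdot h'$. Adding the two $h'$-contributions and simplifying ($\ell+\tfrac{\ell+1}{2}=\tfrac{3\ell+1}{2}$ and $-{\ell\choose 2}-\ell^2=-\tfrac{\ell(3\ell-1)}{2}$) gives the stated formula.

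\emph{Part (ii).} Here the key point is that the defining relation $t^k r^\ell t^k=r^\ell t^k r^\ell$ of the presentation of $\SL(2,\Z_N)$ holds precisely because $k\ell\equiv_N 1$. Since $\g$ is a homomorphism and $[A]_N=t$, $[B]_N=r$, this gives
\[
[\g(T^k R^\ell T^k)]_N=t^k r^\ell t^k=r^\ell t^k r^\ell=[\g(R^\ell T^k R^\ell)]_N,
\]
so Proposition~\ref{basic_0}(iii) applies and
\[
\f\bigl(T^k R^\ell T^k(R^\ell T^k R^\ell)^{-1}\bigr)=\f(T^k R^\ell T^k)-\f(R^\ell T^k R^\ell).
\]
Substituting the formula from Lemma~\ref{4.1}(i) and part~(i) above, the $h$-block collapses to $\matN{0}{k^2}{0}{k}$ (using $\tfrac{k(3k-1)}{2}-{k\choose 2}=k^2$ and $\tfrac{3k+1}{2}-\tfrac{k+1}{2}=k$) and the $h'$-block to $\matN{-\ell}{0}{\ell^2}{0}$ (using $\tfrac{\ell+1}{2}-\tfrac{3\ell+1}{2}=-\ell$ and $-{\ell\choose 2}+\tfrac{\ell(3\ell-1)}{2}=\ell^2$). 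Plugging in $h=\left(\begin{smallmatrix}u\\v\end{smallmatrix}\right)$ and $h'=\left(\begin{smallmatrix}u'\\v'\end{smallmatrix}\right)$ then gives $\f\bigl(T^k R^\ell T^k(R^\ell T^k R^\ell)^{-1}\bigr)=\vc{k^2 v-\ell u'}{kv+\ell^2 u'}$, as claimed.

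The argument is essentially mechanical; the only thing requiring care is the repeated reduction modulo $N$ of the $\ell$-dependent binomial expressions via $k\ell\equiv_N 1$ (for instance $k{\ell\choose 2}=\tfrac{k\ell(\ell-1)}{2}\equiv_N\tfrac{\ell-1}{2}$) and keeping track of which of $h$, $h'$ each $2\times2$ block acts on. I do not anticipate any genuine obstacle, since this lemma is the mirror image of Lemma~\ref{4.1} with the roles of $T$ and $R$ (and of the two coordinates) interchanged.
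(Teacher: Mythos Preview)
Your proposal is correct and follows essentially the same approach as the paper: both proofs rely on Proposition~\ref{basic_0}(ii),(iii), Lemma~\ref{product}, Lemma~\ref{4.1}(i), and the reductions afforded by $k\ell\equiv_N 1$. The only cosmetic difference is in part~(i), where the paper groups the product as $R^\ell\cdot(T^k R^\ell)$ (computing $\f(R^\ell)+[B^\ell]_N\cdot\f(T^k R^\ell)$ via Lemma~\ref{product}(ii)) while you group it as $(R^\ell T^k)\cdot R^\ell$ (computing $\f(R^\ell T^k)+[\g(R^\ell T^k)]_N\cdot\f(R^\ell)$ via Lemma~\ref{product}(iii)); both routes yield the same matrices after the identical simplifications.
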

\begin{proof}
 (i) Since $k\ell\equiv_N 1$, we have $\ell-k{\ell\choose 2}=\ell-k\ell\frac{\ell-1}{2}\equiv_{N} \ell-\frac{\ell-1}{2}= \frac{\ell+1}{2}$ and $-\ell{k\choose 2}+k=-\ell k\frac{k-1}{2}+k\equiv_{N} -\frac{k-1}{2}+k= \frac{k+1}{2}$. Further, $-\ell\frac{(\ell+1)}{2}-{\ell\choose 2}= -\frac{\ell^2+\ell}{2}-\frac{\ell^2-\ell}{2}=-\ell^2.$ Hence, by Lemma \ref{product}(i) and (ii), we obtain
 \begin{align*}
 &\f(R^\ell T^k R^\ell)=\f(R^\ell)+[B^\ell]_N \cdot\f(T^k R^{\ell})=\f(R^\ell)+\matN{1}{0}{-\ell}{1} \cdot\f(T^k R^{\ell})=\\
 &=\matN{\ell}{0}{-{\ell\choose 2}}{\ell}\cdot h'+\matN{1}{0}{-\ell}{1}\cdot\left(\matN{k}{{k\choose 2}}{0}{k}\cdot h+\matN{\frac{\ell+1}{2}}{1}{-{\ell\choose 2}}{\ell}\cdot h'\right)=\\
 &=\matN{\ell}{0}{-{\ell\choose 2}}{\ell}\cdot h'+\matN{k}{{k\choose 2}}{-1}{\frac{k+1}{2}}\cdot h+\matN{\frac{\ell+1}{2}}{1}{-\ell^2}{0}\cdot h'=\\
&=\matN{k}{{k\choose 2}}{-1}{\frac{k+1}{2}}\cdot h+\matN{\frac{3\ell+1}{2}}{1}{-\frac{\ell(3\ell-1)}{2}}{\ell}\cdot h'.
\end{align*}
  
  (ii)  By the generating relations for 
$t=\left[\begin{smallmatrix}1 & 1\\ 0 & 1\end{smallmatrix}\right]_{N}=[A]_N$ and $r=\left[\begin{smallmatrix}1 & 0\\ -1 & 1\end{smallmatrix}\right]_{N}=[B]_N$ of the group $\SL(2,\Z_{N})$ it holds that $$[\g(T^k R^\ell T^k)]_{N}=\left[A^kB^\ell A^k\right]_N=t^kr^{\ell}t^k=r^{\ell}t^{k}r^{\ell}=\left[B^{\ell} A^k B^{\ell}\right]_N=\left[\g\left(R^\ell T^kR^\ell\right)\right]_N\ .$$ 
Hence, by the part (i) and Lemmas \ref{basic_0}(iii) and \ref{4.1}, we have
\begin{align*}
&\f\Big(\left(T^k R^\ell T^k\right)\left(R^\ell T^kR^\ell\right)^{-1}\Big)=\f\Big(T^k R^\ell T^k\Big)-\f\Big(R^\ell T^kR^\ell\Big)=\\
&=\matN{k}{\frac{k(3k-1)}{2}}{-1}{\frac{3k+1}{2}}\cdot h+\matN{\frac{\ell+1}{2}}{1}{-{\ell\choose 2}}{\ell}\cdot h'\\
&\quad-\matN{k}{{k\choose 2}}{-1}{\frac{k+1}{2}}\cdot h-\matN{\frac{3\ell+1}{2}}{1}{-\frac{\ell(3\ell-1)}{2}}{\ell}\cdot h'=\\
&=\matN{0}{k^2}{0}{k}\cdot h+\matN{-\ell}{0}{\ell^2}{0}\cdot h'=\\
&=\matN{0}{k^2}{0}{k}\cdot \vc{u}{v}+\matN{-\ell}{0}{\ell^2}{0}\cdot \vc{u'}{v'}=\vc{k^2v-\ell u'}{kv+\ell^2 u'}.
\end{align*}
\end{proof}

\begin{lemma}\label{5.2}
Let $k\ell=1+\varepsilon N$, where $\varepsilon\in\N_0$. Then $$A^k B^{\ell}A^k(B^{\ell}A^k B^{\ell})^{-1}=\mat{1}{0}{0}{1}+
 N\mat{r}{c+b'}{c+b'}{r}$$
where $r=a+b+c+a'+b'+c'+\varepsilon$.
\end{lemma}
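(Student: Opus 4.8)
The plan is to compute the two palindromic triple products $A^{k}B^{\ell}A^{k}$ and $B^{\ell}A^{k}B^{\ell}$ explicitly in $\SL(2,\Z_{2N})$, invert the second one, multiply, and reduce modulo $2N$. The first product is already available: it is precisely Lemma~\ref{10.2}(ii), giving $A^{k}B^{\ell}A^{k}=\mat{0}{k}{-\ell}{0}+N\mat{\alpha}{\beta}{\gamma}{\delta}$ with $\alpha,\beta,\gamma,\delta$ built from $a,b,c,a',b',c',\varepsilon$. For $B^{\ell}A^{k}B^{\ell}$ I would run the completely parallel computation — take $A^{k}B^{\ell}$ from Lemma~\ref{10.2}(i) and left-multiply by $B^{\ell}$ from Corollary~\ref{7.6} — and check, exactly as in the proof of Lemma~\ref{10.2} and using $k\ell=1+\varepsilon N$, that every summand which is a multiple of $N^{2}$ or of $N$ times an even integer is absorbed, so that again $B^{\ell}A^{k}B^{\ell}=\mat{0}{k}{-\ell}{0}+N\mat{\alpha'}{\beta'}{\gamma'}{\delta'}$ for suitable $\alpha',\beta',\gamma',\delta'$.

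Next I would invert $B^{\ell}A^{k}B^{\ell}$ via Remark~\ref{inverse} and the identity $-N\equiv_{2N}N$, obtaining $(B^{\ell}A^{k}B^{\ell})^{-1}=\mat{0}{-k}{\ell}{0}+N\mat{\delta'}{\beta'}{\gamma'}{\alpha'}$, and form the product with $A^{k}B^{\ell}A^{k}$. Here the standard reductions of Sections~\ref{section_7}--\ref{section_10} do the work: $N^{2}\equiv_{2N}0$ kills the product of the two $N$-corrections; since $1\leq k,\ell\leq N-1$ with $k\ell\equiv_{N}1$ force $k$ and $\ell$ odd, one has $Nk\equiv_{2N}N\equiv_{2N}N\ell$, so the entries of the $N$-corrections may be taken modulo $2$; and the binomial parities of Remark~\ref{7.5} dispose of any $\tbinom{k}{2},\tbinom{\ell}{2}$-type coefficients already present. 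The leading-by-leading term is $\mat{0}{k}{-\ell}{0}\mat{0}{-k}{\ell}{0}=k\ell\,I=(1+\varepsilon N)I$, contributing the summand $\varepsilon$ on the diagonal of the correction matrix, while the two mixed terms $N\mat{0}{k}{-\ell}{0}\mat{\delta'}{\beta'}{\gamma'}{\alpha'}$ and $N\mat{\alpha}{\beta}{\gamma}{\delta}\mat{0}{-k}{\ell}{0}$ collapse, after the odd-$k,\ell$ reduction, to $N$ times a symmetric matrix with both diagonal entries equal to $a+b+c+a'+b'+c'$ and both off-diagonal entries equal to $c+b'$. Adding the diagonal $\varepsilon$ from the leading term yields $\mat{1}{0}{0}{1}+N\mat{r}{c+b'}{c+b'}{r}$ with $r=a+b+c+a'+b'+c'+\varepsilon$, which is independent of $k$ and $\ell$, as it must be.

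The argument is entirely mechanical, and the one place where care is genuinely needed — and the step I expect to be the main obstacle — is the parameter bookkeeping: tracking the six Boolean quantities $a,b,c,a',b',c'$ and the integer $\varepsilon$ through the nested multiplications, and applying each reduction ($N^{2}\to0$, $Nj\to0$ for $j$ even, $Nk\to N$ for $k$ odd, $-N\to N$, and the parities of $\tbinom{k}{2},\tbinom{\ell}{2}$) without double-counting or dropping a $k$- or $\ell$-dependent piece that is supposed to cancel. Two built-in checks guard against slips: the final $N$-correction must come out \emph{symmetric}, and the whole expression must be $k,\ell$-independent; moreover, specialising to $k=\ell=1$, $\varepsilon=0$ must be consistent with the off-diagonal term $c+b'$ already appearing in $(ABA)^{2}$ in Lemma~\ref{10.2}(iii), since the presentation relation $trt=rtr$ forces $A^{k}B^{\ell}A^{k}(B^{\ell}A^{k}B^{\ell})^{-1}$ to lie in the kernel of $[\,\cdot\,]_{N}$.
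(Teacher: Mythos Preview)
Your plan is correct and follows essentially the same route as the paper: compute $B^{\ell}A^{k}B^{\ell}$ by left-multiplying the expression for $A^{k}B^{\ell}$ from Lemma~\ref{10.2}(i) by $B^{\ell}$ from Corollary~\ref{7.6}, invert via Remark~\ref{inverse}, and multiply against $A^{k}B^{\ell}A^{k}$ from Lemma~\ref{10.2}(ii) using the reductions $N^{2}\equiv_{2N}0$, $Nk\equiv_{2N}N$ for $k$ odd, and $-N\equiv_{2N}N$. The paper's one streamlining trick, which will spare you some of the bookkeeping you flag as the main obstacle, is to observe that the $N$-correction of $A^{k}B^{\ell}A^{k}$ can be rewritten as $\left(\begin{smallmatrix}\delta & r-\beta\\ r-\gamma & \alpha\end{smallmatrix}\right)$ in terms of the entries $\alpha,\beta,\gamma,\delta$ of the $N$-correction of $B^{\ell}A^{k}B^{\ell}$, after which the final product collapses immediately.
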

\begin{proof}
Since both $k$ and $\ell$ are odd, it follows that $N\cdot(\begin{smallmatrix}1 & 0\\ -\ell & 1\end{smallmatrix})=N\cdot(\begin{smallmatrix}1 & 0\\ 1 & 1\end{smallmatrix})$ and $N\cdot(\begin{smallmatrix}0 & k\\ -\ell & 1\end{smallmatrix})=N\cdot(\begin{smallmatrix}0 & 1\\ 1 & 1\end{smallmatrix})$ when computing modulo $2N$. Using these equalities, the equality $k\ell=1+\varepsilon N$, for some $\varepsilon\in\N_{0}$, and Corollary \ref{7.6} we obtain
\begin{align*}
 &B^{\ell}A^k B^{\ell}=B^{\ell}(A^k B^{\ell})=\Bigg(\mat{1}{0}{-\ell}{1}+N\mat{a'+\frac{\ell-1}{2}b'}{b'}{a'+c'}{a'+\frac{\ell+1}{2}b'}\Bigg)\cdot\\
 &\cdot\Bigg(\mat{0}{k}{-\ell}{1}+
 N\mat{b+\frac{k+1}{2}c+\frac{\ell-1}{2}b'+c'+\varepsilon}{a+b+a'+\frac{\ell-1}{2}b'}{a+\frac{k+1}{2}c+a'+c'}{a+\frac{k-1}{2}c+a'+\frac{\ell+1}{2}b'}\Bigg)=\\
 & =\mat{0}{k}{-\ell}{-\varepsilon N}+
N\mat{a'+\frac{\ell-1}{2}b'}{b'}{a'+c'}{a'+\frac{\ell+1}{2}b'}\mat{0}{1}{1}{1}+\\
&\hspace{14mm}+N\mat{1}{0}{1}{1}\mat{b+\frac{k+1}{2}c+\frac{\ell-1}{2}b'+c'+\varepsilon}{a+b+a'+\frac{\ell-1}{2}b'}{a+\frac{k+1}{2}c+a'+c'}{a+\frac{k-1}{2}c+a'+\frac{\ell+1}{2}b'}=\\
&=\mat{0}{k}{-\ell}{-\varepsilon N}+
 N\mat{b'}{a'+\frac{\ell+1}{2}b'}{a'+\frac{\ell+1}{2}b'}{\frac{\ell+1}{2}b'+c'}+\\
 &\hspace{37mm}+N\mat{b+\frac{k+1}{2}c+\frac{\ell-1}{2}b'+c'+\varepsilon}{a+b+a'+\frac{\ell-1}{2}b'}{a+b+a'+\frac{\ell-1}{2}b'+\varepsilon}{b+\frac{k-1}{2}c+b'}=\\
 &=\mat{0}{k}{-\ell}{0}+
 N\mat{\alpha}{\beta}{\gamma}{\delta}.
 \end{align*}
 where 
 $$\mat{\alpha}{\beta}{\gamma}{\delta}=\mat{b+\frac{k+1}{2}c+\frac{\ell+1}{2}b'+c'+\varepsilon}{a+b+b'}{a+b+b'+\varepsilon}{b+\frac{k-1}{2}c+\frac{\ell-1}{2}b'+c'+\varepsilon}\ .$$
 Hence, by Remark \ref{inverse}, it is
 $$(B^{\ell}A^k B^{\ell})^{-1}=\mat{0}{-k}{\ell}{0}+
 N\mat{\delta}{\beta}{\gamma}{\alpha}.$$
 Finally, using $N\cdot(\begin{smallmatrix}0 & k\\ -\ell & 0\end{smallmatrix})=N\cdot(\begin{smallmatrix}0 & 1\\ 1 & 0\end{smallmatrix})=N\cdot(\begin{smallmatrix}0 & -k\\ -\ell & 0\end{smallmatrix})$, Lemma \ref{10.2}(ii) and setting $r=a+b+c+a'+b'+c'+\varepsilon$,  we obtain
 \begin{align*}
  &A^k B^{\ell}A^k(B^{\ell}A^k B^{\ell})^{-1}
 =\\
 &=\Bigg(\mat{0}{k}{-\ell}{0}+
 N\mat{\delta}{r-\beta}{r-\gamma}{\alpha}\Bigg)
 \cdot\Bigg(\mat{0}{-k}{\ell}{0}+
 N\mat{\delta}{\beta}{\gamma}{\alpha}\Bigg)=\\
 &=\mat{1+\varepsilon N}{0}{0}{1+\varepsilon N}
+N\mat{\delta}{r-\beta}{r-\gamma}{\alpha}\mat{0}{1}{1}{0}+\\
&\hspace{80mm}+N\mat{0}{1}{1}{0}\mat{\delta}{\beta}{\gamma}{\alpha}=\\
 &=\mat{1+\varepsilon N}{0}{0}{1+\varepsilon N}
 +N\mat{r-\beta}{\delta}{\alpha}{r-\gamma}
+N\mat{\gamma}{\alpha}{\delta}{\beta}=\\
&=\mat{1}{0}{0}{1}
 +N\mat{r-\beta+\gamma+\varepsilon}{\alpha+\delta}{\alpha+\delta}{r+\beta-\gamma+\varepsilon}=\\
 &=\mat{1}{0}{0}{1}+
 N\mat{r}{c+b'}{c+b'}{r}.
 \end{align*}
\end{proof}

\begin{proposition}\label{5.3}
The following conditions are equivalent:
\begin{enumerate}
\item[(i)] $T^k R^{\ell}T^k(R^{\ell}T^k R^{\ell})^{-1}\in K$ holds for all $1\leq k,\ell\leq N-1$  such that $k\ell\equiv_N 1$.   
\item[(ii)] $v=\frac{N}{2}\beta$, $u'=\frac{N}{2}\alpha'$ for some $\alpha',\beta\in\{0,1\}$ and $\alpha'+\beta\equiv_2 c+b'$.
\end{enumerate}
\end{proposition}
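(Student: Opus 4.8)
The plan is to reduce condition (i) to a finite set of congruences by exploiting the fact that, modulo $N$, the matrix $k^{2}$, $\ell^{2}$, $k$ and $\ell$ (with $k\ell\equiv_N 1$) range over a controlled set. First I would invoke Lemma \ref{5.1}(ii) and Lemma \ref{5.2}: by the definition of $K$, the condition $T^k R^{\ell}T^k(R^{\ell}T^k R^{\ell})^{-1}\in K$ for a fixed admissible pair $(k,\ell)$ is equivalent to the matrix part $A^k B^{\ell}A^k(B^{\ell}A^k B^{\ell})^{-1}$ being of the form prescribed in the definition of $K$ (which by Lemma \ref{5.2} forces $r\equiv_2 0$, where $r=a+b+c+a'+b'+c'+\varepsilon$, together with a compatibility of the off-diagonal entry $c+b'$ with the second component), \emph{and} the vector part $\bigl(\begin{smallmatrix}k^2 v-\ell u'\\ kv+\ell^2 u'\end{smallmatrix}\bigr)$ being of the form $\bigl(\begin{smallmatrix}\frac{N}{2}s\\ \frac{N}{2}t\end{smallmatrix}\bigr)$ with $s,t\in\{0,1\}$ and $Ns,Nt\equiv_{2N}0$ — i.e., by Remark \ref{10.3}, $k^2v-\ell u'\equiv_N 0$ and $kv+\ell^2 u'\equiv_N 0$. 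So everything comes down to analysing these two families of congruences in $v$ and $u'$ as $(k,\ell)$ varies over all pairs with $k\ell\equiv_N 1$.

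Next I would extract what the vector congruences say. Taking $k=\ell=1$ gives $v-u'\equiv_N 0$ and $v+u'\equiv_N 0$, hence $2v\equiv_N 0$ and $2u'\equiv_N 0$, so $v=\frac{N}{2}\beta$ and $u'=\frac{N}{2}\alpha'$ for some $\alpha',\beta\in\{0,1\}$ (using that $N$ is even). With $v,u'$ of this shape, $k^2 v\equiv_N v$ and $\ell^2 u'\equiv_N u'$ automatically (since $k,\ell$ are odd, $k^2\equiv_2 1$ and $\frac{N}{2}k^2\equiv_N\frac{N}{2}$), and $\ell u'\equiv_N u'$, $kv\equiv_N v$ similarly; so both congruences collapse to $v-u'\equiv_N 0$, i.e. $\frac{N}{2}(\beta-\alpha')\equiv_N 0$, which holds for \emph{all} admissible $(k,\ell)$ with no further constraint. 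Thus the vector part contributes exactly ``$v=\frac{N}{2}\beta$, $u'=\frac{N}{2}\alpha'$''.

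It then remains to translate the matrix condition into the stated arithmetic relation. From Lemma \ref{5.2} the matrix part is $\bigl(\begin{smallmatrix}1&0\\0&1\end{smallmatrix}\bigr)+N\bigl(\begin{smallmatrix}r&c+b'\\c+b'&r\end{smallmatrix}\bigr)$; comparing with the general element of $K$, whose matrix part is $\bigl(\begin{smallmatrix}1+Nr_0&Ns_0\\Nt_0&1+Nr_0\end{smallmatrix}\bigr)$ paired with the vector $\bigl(\begin{smallmatrix}\frac N2 s_0\\ \frac N2 t_0\end{smallmatrix}\bigr)$, one reads off that membership in $K$ forces $r\equiv_2 r_0$, $c+b'\equiv_2 s_0=t_0$, and the \emph{same} $s_0,t_0$ must produce the vector part $\bigl(\begin{smallmatrix}\frac N2\beta\\ \frac N2\alpha'\end{smallmatrix}\bigr)$, i.e. $\frac N2 s_0\equiv_N\frac N2\beta$ and $\frac N2 t_0\equiv_N\frac N2\alpha'$, so $s_0\equiv_2\beta$ and $t_0\equiv_2\alpha'$. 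Combining $s_0=t_0$ with $s_0\equiv_2 c+b'$, $s_0\equiv_2\beta$, $t_0\equiv_2\alpha'$ yields precisely $\alpha'+\beta\equiv_2 c+b'$ (note $\alpha'\equiv_2\beta$, and either equals $c+b'$, so their sum is $2(c+b')\equiv_2 0$... wait — here the parity bookkeeping must be done carefully: $s_0=t_0$ gives $\beta\equiv_2\alpha'$, and $s_0\equiv_2 c+b'$ then gives $\beta\equiv_2 c+b'$; so $\alpha'+\beta\equiv_2 2\beta\equiv_2 0$ while $c+b'\equiv_2\beta$, meaning the stated relation $\alpha'+\beta\equiv_2 c+b'$ is equivalent to $c+b'\equiv_2 0$ together with $\alpha'\equiv_2\beta$ — I would double-check against the conventions in the definition of $K$, since there $r,s,t$ are independent in $\{0,1\}$ and the pairing is rigid). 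The cleanest route is simply to write $T^k R^{\ell}T^k(R^{\ell}T^k R^{\ell})^{-1}=\bigl(M,\bigl(\begin{smallmatrix}\frac N2\beta\\ \frac N2\alpha'\end{smallmatrix}\bigr)\bigr)$ with $M$ as in Lemma \ref{5.2}, and demand this equal some $\bigl(\bigl(\begin{smallmatrix}1+Nr_0&Ns_0\\Nt_0&1+Nr_0\end{smallmatrix}\bigr),\bigl(\begin{smallmatrix}\frac N2 s_0\\ \frac N2 t_0\end{smallmatrix}\bigr)\bigr)\in K$; matching the vector forces $s_0=\beta$, $t_0=\alpha'$ (mod $2$), matching the off-diagonal forces $c+b'\equiv_2 s_0=t_0$, hence $\alpha'\equiv_2\beta\equiv_2 c+b'$, which is exactly (ii) once one also notes $r\equiv_2 0$ is automatically available by choosing $r_0$. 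The main obstacle is bookkeeping the mod-$2$ versus mod-$N$ reductions consistently — in particular making sure the \emph{single} pair $(k,\ell)=(1,1)$ already captures all constraints so that no extra condition is hidden in larger $(k,\ell)$ — and I would verify this by checking that, once $v=\frac N2\beta$ and $u'=\frac N2\alpha'$, Lemma \ref{5.1}(ii) indeed gives the $(k,\ell)$-independent vector $\bigl(\begin{smallmatrix}\frac N2(\beta-\alpha')\\ \frac N2(\beta+\alpha')\end{smallmatrix}\bigr)$ and Lemma \ref{5.2} the $(k,\ell)$-independent matrix, so that the quantifier over $(k,\ell)$ is harmless.
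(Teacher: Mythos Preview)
Your argument has a genuine gap stemming from a misreading of the structure of $K$. The parameters $s,t$ in the definition of $K$ simultaneously control the off-diagonal matrix entries $Ns,Nt$ \emph{and} the vector $\bigl(\tfrac{N}{2}s,\tfrac{N}{2}t\bigr)$; they cannot be decoupled. Remark~\ref{10.3} was used in Proposition~\ref{4.3} only because there the matrix part was exactly the identity, forcing $s_0=t_0=0$. Here, by Lemma~\ref{5.2}, the off-diagonal entries are $N(c+b')$, so membership in $K$ forces $s_0\equiv_2 c+b'\equiv_2 t_0$, and hence the vector components must each be $\equiv_N \tfrac{N}{2}(c+b')$, \emph{not} $\equiv_N 0$. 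Your deduction ``$v-u'\equiv_N 0$ and $v+u'\equiv_N 0$'' from $k=\ell=1$ is therefore false in general; the correct conclusion (as in the paper) is $v-u'\equiv_{N/2}0$ and $v+u'\equiv_{N/2}0$, with the additional information that each of $v-u'$ and $v+u'$ equals $\tfrac{N}{2}$ times something congruent to $c+b'$ mod $2$.

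This error propagates into your ``cleanest route'', where you write the vector part as $\bigl(\tfrac{N}{2}\beta,\tfrac{N}{2}\alpha'\bigr)$. That is not what Lemma~\ref{5.1}(ii) gives: with $v=\tfrac{N}{2}\beta$, $u'=\tfrac{N}{2}\alpha'$ and $k,\ell$ odd, one has $k^2v-\ell u'\equiv_N \tfrac{N}{2}\beta-\tfrac{N}{2}\alpha'\equiv_N\tfrac{N}{2}(\alpha'+\beta)$ (using $-\tfrac{N}{2}\equiv_N\tfrac{N}{2}$), and likewise $kv+\ell^2 u'\equiv_N\tfrac{N}{2}(\alpha'+\beta)$. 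Matching this with $\bigl(\tfrac{N}{2}s_0,\tfrac{N}{2}t_0\bigr)$ and with $s_0\equiv_2 c+b'$ from the matrix gives $\alpha'+\beta\equiv_2 c+b'$ directly --- no contradiction, no spurious extra constraint $\alpha'\equiv_2\beta$. Your own parenthetical ``wait'' flags exactly the symptom of this miscomputation. Finally, the diagonal parameter $r$ is unconstrained (one simply takes $r_0\equiv_2 r$), so the claim ``forces $r\equiv_2 0$'' is also incorrect, though harmless. Once these points are fixed, the argument coincides with the paper's: the single pair $(k,\ell)=(1,1)$ already yields (ii), and conversely (ii) makes the element land in $K$ for every admissible $(k,\ell)$ since both the matrix and vector reduce to expressions depending only on $c+b'$ and $\alpha'+\beta$.
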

\begin{proof}
Let $k,\ell\in\{1,\dots,N-1\}$  be such that $k\ell\equiv_{2N}1+\varepsilon N$ for some $\varepsilon\in\{0,1\}$.
By  Lemmas \ref{5.1}(ii) and \ref{5.2}  we have
$$T^k R^\ell T^k\left(R^\ell T^k R^\ell\right)^{-1}=\left(A^k B^{\ell}A^k(B^{\ell} A^k B^{\ell})^{-1}, 
\vc{k^2v-\ell u'}{kv+\ell^2 u'}\right)=$$
$$=\left(
 \mat{1+Nr}{N(c+b')}{N(c+b')}{1+Nr},\vc{k^2v-\ell u'}{kv+\ell^2 u'}\right)$$
 where $r=a+b+c+a'+b'+c'+\varepsilon$.

Assume the condition (i). For $k=\ell=1$ we obtain, by the definition of the group $K$, that $v-u'\equiv_{N/2}0$ and $v+u'\equiv_{N/2}0$. Hence $v-u'=\gamma\frac{N}{2}$ and $v+u'=\varepsilon\frac{N}{2}$ for some $\gamma, \varepsilon\in\Z$. The definition of the group $K$ now implies that $\gamma\equiv_2 c+b'\equiv_2\varepsilon$. It follows that $\gamma+\varepsilon\equiv_2 \gamma-\varepsilon\equiv_2 0$ and therefore we have $\gamma+\varepsilon=2\beta$ and  $\gamma-\varepsilon=2\alpha'$ for some $\alpha',\beta\in\Z$. Finally, $v=\frac{(\gamma+\varepsilon)N}{4}=\beta\frac{N}{2}$, $u'=\frac{(\gamma-\varepsilon)N}{4}=\alpha'\frac{N}{2}$  and $\alpha'+\beta=\gamma\equiv_2 c+b'$. Now we can assume without loss of generality that $\alpha',\beta\in\{0,1\}$. Hence, we have obtained the condition (2).

Assume the condition (ii). With help of the fact that both $k$ and $\ell$ are odd numbers, we have $k^2v-\ell u'\equiv_{N}v+u'=\frac{N}{2}(\alpha'+\beta)\equiv_{2}\frac{N}{2}(c+b')$ and, similarly, $kv+\ell^2 u'\equiv_{N}v+u'=\frac{N}{2}(\alpha'+\beta)\equiv_{2}\frac{N}{2}(c+b')$. By the definition of the group $K$ it is now clear that 
$T^k R^\ell T^k\left(R^\ell T^k R^\ell\right)^{-1}\in K$.
%In particular, if $\ell=1$ then it follows from $(\ell^2+1)u'\equiv_N 0$ that  $u'\equiv_N \frac{N}{2}$. On the other hand, if $u'\equiv_N \frac{N}{2}$, then, as $\ell$ is odd, we obtain that $(\ell^2+1)u'\equiv_N 0$.
\end{proof}

\begin{proposition}\label{6.3}
 Let $N\equiv_4 2$ and $T,R\in \SL(2,\Z_{2N})\ltimes \Z_{N}^2$ have the form as in Section $\ref{section_7}$. Then the following are equivalent:
 \begin{enumerate}
  \item[(i)] $T$ and $R$ fulfill the conditions (i)--(v) in Proposition $\ref{character_2}$.
  \item[(ii)] $a'\equiv_2 c'$, $v\equiv_N\frac{N}{2}(c+1)$ and $u'\equiv_N\frac{N}{2}(b'+1)$.
 \end{enumerate}
\end{proposition}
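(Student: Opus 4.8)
\emph{Plan of proof.} Proposition \ref{6.3} is obtained by assembling the case analyses of conditions (ii)--(v) of Proposition \ref{character_2} carried out in Sections \ref{section_8}--\ref{section_11}, specialized to the situation $N\equiv_4 2$. The first observation is that condition (i) of Proposition \ref{character_2} is nothing but the assumption that $T,R$ have the form fixed in Section \ref{section_7}, so it imposes no extra constraint and the whole question reduces to conditions (ii)--(v). The one arithmetic fact that does all the work is that $N\equiv_4 2$ forces $\tfrac N2$ to be \emph{odd}; hence $\tfrac N2 e\equiv_2 e$ for every integer $e$, and, since $k^2-1$ is even whenever $k$ is odd, the integer $(k^2-1)\tfrac N2$ is always a multiple of $N$.

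For the implication (i) $\Rightarrow$ (ii): Proposition \ref{3.3} immediately yields $a'\equiv_2 c'$ together with $v\equiv_2 1+c$, and Proposition \ref{5.3} yields $v=\tfrac N2\beta$, $u'=\tfrac N2\alpha'$ for some $\alpha',\beta\in\{0,1\}$. Reducing $v=\tfrac N2\beta$ modulo $2$ and comparing with $v\equiv_2 1+c$ forces $\beta\equiv_2 1+c$; since $\tfrac N2(1+c)-\tfrac N2\beta=Nc\equiv_N 0$, this gives $v\equiv_N\tfrac N2(c+1)$. The argument for $u'$ is entirely symmetric, using Proposition \ref{3.5} (which supplies $u'\equiv_2 1+b'$) in place of Proposition \ref{3.3}. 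This establishes (ii).

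For the converse (ii) $\Rightarrow$ (i): condition (i) of Proposition \ref{character_2} holds by hypothesis, and conditions (ii)--(v) are checked against Propositions \ref{2.2}, \ref{3.3}, \ref{3.5}, \ref{4.3} and \ref{5.3}. Because $v-\tfrac N2(c+1)$ is a multiple of $N$, hence even, reduction modulo $2$ gives $v\equiv_2\tfrac N2 c+1$, which (with its analogue $u'\equiv_2\tfrac N2 b'+1$ and the given $a'\equiv_2 c'$) is exactly what Propositions \ref{2.2}, \ref{3.3}, \ref{3.5} demand when $N\equiv_4 2$. For Proposition \ref{4.3} one notes that, for odd $k,\ell$, $(k^2-1)v\equiv_N(k^2-1)\tfrac N2(c+1)\equiv_N 0$ and likewise $(\ell^2-1)u'\equiv_N 0$, so condition (iv) is automatic. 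Finally, choosing $\beta,\alpha'\in\{0,1\}$ with $\beta\equiv_2 c+1$ and $\alpha'\equiv_2 b'+1$ gives $v\equiv_N\tfrac N2\beta$, $u'\equiv_N\tfrac N2\alpha'$ and $\alpha'+\beta\equiv_2(b'+1)+(c+1)\equiv_2 b'+c$, so condition (v) follows from Proposition \ref{5.3}.

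Everything here is routine once the right simplifications are in place; the only point worth flagging is that the two apparently strongest requirements --- the mod-$N$ divisibilities of Proposition \ref{4.3} and the compatibility $\alpha'+\beta\equiv_2 c+b'$ of Proposition \ref{5.3} --- turn out to be free consequences of the remaining conditions precisely because $8\mid k^2-1$ for odd $k$ and $\tfrac N2$ is odd. The mild pitfall to avoid is the bookkeeping in passing between "$\equiv_2$'' statements about $v,u'$ and the exact values $\tfrac N2\beta,\tfrac N2\alpha'$ in $\Z_N$; this is handled by the identity $\tfrac N2(1+c)\equiv_N\tfrac N2(1-c)$ (and its analogue for $b'$).
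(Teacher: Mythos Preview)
Your proof is correct and follows essentially the same route as the paper's own argument: both directions are obtained by assembling Propositions \ref{2.2}, \ref{3.3}, \ref{3.5}, \ref{4.3} and \ref{5.3}, using throughout that $\tfrac N2$ is odd when $N\equiv_4 2$. The only cosmetic difference is that you invoke $8\mid k^2-1$ in your closing remark, whereas the paper (and your own computation) only needs the weaker fact $2\mid k^2-1$; this does no harm.
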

\begin{proof}
Since $N\equiv_4 2$, the number $\frac{N}{2}$ is odd.

 Assume the condition (i). By Proposition \ref{3.3}, $a'\equiv_2 c'$ and $v\equiv_2 c+1$. By Proposition \ref{5.3}, $v=\frac{N}{2}\beta$ for some $\beta\in\N_0$. Hence $\beta\equiv_2 \frac{N}{2}\beta=v\equiv_2 c+1$. From $\beta\equiv_2 (c+1)$ it follows that $v=\frac{N}{2}\beta\equiv_{N}\frac{N}{2}(c+1)$.
 
 Similarly, we obtain $u'\equiv_N \frac{N}{2}(b'+1)$ by Propositions \ref{3.3} and \ref{5.3}.
 
  Assume the condition (ii). Then $v\equiv_2\frac{N}{2}(c+1)\equiv_2 c+1\equiv_2\frac{N}{2}c+1$ and, similarly, we obtain $u'\equiv_2\frac{N}{2}b'+1$. Hence the conditions (i) in Propositions \ref{2.2}, \ref{3.3} and \ref{3.5} are fulfilled.
  
  Let $k\in\N$ be an odd number. Then $k^2-1=2m$ for some $m\in\Z$. From $v\equiv_2\frac{N}{2}(c+1)$ it follows that $3(k^2-1)v=6mv\equiv_N 6m\frac{N}{2}(c+1)\equiv_N 0$. Similarly, we obtain $(\ell^2-1)u'\equiv_N 0$ for $\ell$ odd. In this way condition (i) in Proposition \ref{4.3} are fulfilled.
  
  Finally, as $(c+1)+(b'+1)\equiv_2 c+b'$, the condition (i) in Proposition \ref{5.3} is fulfilled. 
\end{proof}

\section{The main results}\label{section_12}

\begin{theorem}\label{6.1}
 Let $N\equiv_4 0$. Then the projective Clifford group $\overline{C}(N)$ is not a semidirect product corresponding to the sequence
 $$1 \to \overline{H}(N)\to \overline{C}(N)\to \overline{C}(N)/\overline{H}(N)\to 1
 $$   (or equivalently, to the exact sequence $(\ref{exact1})$). Consequently, none of these two exact sequences is right splitting.  
\end{theorem}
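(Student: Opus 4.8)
The plan is to argue by contradiction, feeding the characterization of Proposition~\ref{character_2} with the concrete translations of its conditions (ii) and (v) worked out in Sections~\ref{section_8} and~\ref{section_11}. So suppose $\overline{C}(N)$ were a semidirect product corresponding to the exact sequence (\ref{exact1}). By Proposition~\ref{character_2} there would exist $T=(A,h)$ and $R=(B,h')$ in $\SL(2,\Z_{2N})\ltimes\Z_N^2$ satisfying conditions (i)--(v); in particular (by condition (i)) they have the normal form fixed at the start of Section~\ref{section_7}, with parameters $a,b,c,a',b',c'\in\{0,1\}$ and $h=\bigl(\begin{smallmatrix}u\\v\end{smallmatrix}\bigr)$, $h'=\bigl(\begin{smallmatrix}u'\\v'\end{smallmatrix}\bigr)$ in $\Z_N^2$. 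Notice that for the contradiction only conditions (i), (ii) and (v) will be needed; (iii) and (iv) play no role.

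The decisive point is that $N\equiv_4 0$ forces $\frac{N}{2}$ to be \emph{even}. First I would invoke Proposition~\ref{2.2}: condition (ii) gives $v\equiv_2\frac{N}{2}c+1$, and since $\frac{N}{2}c\equiv_2 0$ this says $v$ is odd. Then I would invoke Proposition~\ref{5.3}: condition (v) forces $v=\frac{N}{2}\beta$ for some $\beta\in\{0,1\}$, and since $\frac{N}{2}$ is even this says $v$ is even. These two conclusions are incompatible, so no such $T$ and $R$ can exist; the very same clash can equally be run with $u'$ in place of $v$, using the second halves of Propositions~\ref{2.2} and~\ref{5.3}. By Proposition~\ref{character_2} we conclude that $\overline{C}(N)$ is not a semidirect product corresponding to (\ref{exact1}).

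To finish off the remaining wording of the statement, I would observe that under the identifications $\overline{H}(N)\cong\Z_N^2$ via $\nu$ and $\overline{C}(N)/\overline{H}(N)\cong\SL(2,\Z_N)$ coming from Theorem~\ref{htpp}, the sequence $1\to\overline{H}(N)\to\overline{C}(N)\to\overline{C}(N)/\overline{H}(N)\to 1$ is nothing but (\ref{exact1}), so the two formulations of the claim coincide; and the last assertion then follows from the equivalence (i)$\Leftrightarrow$(ii) recalled in Section~\ref{semidirect}, by which a sequence admits the semidirect-product description precisely when it is right splitting.

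As for the main obstacle: there is essentially none left at this stage. All the genuine work has been front-loaded into the bookkeeping of Sections~\ref{section_7}--\ref{section_11}, and Theorem~\ref{6.1} is simply the place where the parity obstruction surfaces. The only point demanding a little care is to use conditions (ii) and (v) of Proposition~\ref{character_2} in their \emph{necessary} direction, and to note that Propositions~\ref{2.2} and~\ref{5.3} are stated as equivalences rather than one-way implications; both of these are already in hand.
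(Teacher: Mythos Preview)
Your proof is correct and follows essentially the same route as the paper: argue by contradiction via Proposition~\ref{character_2}, derive that $v$ is odd from one of the earlier necessary conditions, derive that $v=\frac{N}{2}\beta$ (hence even when $N\equiv_4 0$) from Proposition~\ref{5.3}, and conclude. The only cosmetic difference is that you extract the oddness of $v$ from Proposition~\ref{2.2} (condition (ii)), whereas the paper uses Proposition~\ref{3.3} (condition (iii)); since for $N\equiv_4 0$ both yield $v\equiv_2 1$, the two arguments are interchangeable.
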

\begin{proof}
%  The sequence from the statement is isomorphic to the sequence $(\ref{exact1})$)
%  $$1 \to \Z_{N}^2\to \left(\SL(2,\Z_{2N})\ltimes \Z_{N}^2\right)/K \to 
%  \SL(2,\Z_N) \to 1$$
 Assume, on the contrary, that the projective Clifford group $\overline{C}(N)$ is a semidirect product corresponding to any of these mutually equivalent exact sequences. Then, by Proposition \ref{character_2}, there are $T,R\in \SL(2,\Z_{2N})\ltimes \Z_{N}^2$ fulfilling the conditions (i)-(v) in Proposition \ref{character_2}. Hence, by Propositions \ref{3.3} and \ref{5.3}, we have $v\equiv_2 1$ and $v=\frac{N}{2}\beta$ for some $\beta\in\N_0$. From $N\equiv_4 0$ it follows that $v=\frac{N}{2}\beta\equiv_2 0$, a contradiction. Therefore the projective Clifford group $\overline{C}(N)$ is not a semidirect product of the proposed form and the corresponding exact sequences are not right splitting.
\end{proof}

\begin{corollary}\label{6.2}
  Let $N\equiv_4 0$. Then the  Clifford group $C(N)$ is not a semidirect product corresponding to the sequence
 $$1 \to H(N)\to C(N)\to C(N)/H(N)\to 1
 $$   (or equivalently, to the exact sequence $(\ref{exact2})$). Consequently, none of these two exact sequences is right splitting.  
\end{corollary}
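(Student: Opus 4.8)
The plan is to obtain Corollary \ref{6.2} as a formal consequence of Theorem \ref{6.1}, by transporting a hypothetical splitting of the non-projective sequence down to the projective one. First I would recall, from Section \ref{semidirect}, that $C(N)$ being a semidirect product corresponding to
$$1 \to H(N)\to C(N)\to C(N)/H(N)\to 1$$
is equivalent to this sequence — hence, via the identification $C(N)/H(N)\cong\SL(2,\Z_N)$ of Theorem \ref{htpp}, to the sequence $(\ref{exact2})$ — being right splitting. So there would be a group homomorphism $\mu:\SL(2,\Z_N)\to C(N)$ with $\pi\circ\mu=id_{\SL(2,\Z_N)}$; I will argue by contradiction that no such $\mu$ exists.

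Next I would compose $\mu$ with the canonical quotient map $q:C(N)\to\overline{C}(N)=C(N)/I(N)$, $U\mapsto U_{/\sim}$. The single ingredient needed is the compatibility $\overline{\pi}\circ q=\pi$, which is precisely the relation $\pi(U)=\overline{\pi}(U_{/\sim})$ recorded in Theorem \ref{htpp}. Then $\overline{\mu}:=q\circ\mu:\SL(2,\Z_N)\to\overline{C}(N)$ is a group homomorphism with
$$\overline{\pi}\circ\overline{\mu}=\overline{\pi}\circ q\circ\mu=\pi\circ\mu=id_{\SL(2,\Z_N)},$$
so the exact sequence $(\ref{exact1})$ (equivalently $1\to\overline{H}(N)\to\overline{C}(N)\to\overline{C}(N)/\overline{H}(N)\to1$) is right splitting, i.e. $\overline{C}(N)$ is a semidirect product of exactly the kind ruled out, for $N\equiv_4 0$, by Theorem \ref{6.1}. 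This contradiction shows that $\mu$ cannot exist, whence $C(N)$ is not a semidirect product corresponding to the stated sequence and neither it nor $(\ref{exact2})$ is right splitting.

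There is essentially no obstacle in this argument: the whole content is the observation that $\pi$ factors through the projection $q$ onto $\overline{C}(N)$, so that any set-theoretic — here, homomorphic — section of $\pi$ pushes forward to a section of $\overline{\pi}$. The only point deserving a line of care is checking that the isomorphisms $C(N)/H(N)\cong\SL(2,\Z_N)$ and $\overline{C}(N)/\overline{H}(N)\cong\SL(2,\Z_N)$ used to pass between the two formulations of each sequence are the ones induced by $\pi$ and $\overline{\pi}$ respectively, which is again supplied by Theorem \ref{htpp}.
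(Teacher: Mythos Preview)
Your argument is correct and essentially identical to the paper's proof: both assume a splitting $\mu$ of $\pi$, compose with the natural projection $C(N)\to\overline{C}(N)$ (your $q$, the paper's $p$), and use $\pi=\overline{\pi}\circ q$ to obtain a splitting of $\overline{\pi}$, contradicting Theorem~\ref{6.1}. The only cosmetic difference is that the paper displays the commutative diagram relating the two exact sequences, whereas you spell out the factorization verbally.
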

\begin{proof}
 Consider the commutative diagram
 \begin{equation*} 
\begin{tikzcd}\label{diagram_2}
1 \arrow{r}{}& H(N) \arrow{r}{\iota} & C(N) \arrow{d}{p} \arrow{r}{\pi}  & \SL(2,\Z_{N}) \arrow{d}{id_{\SL(2,\Z_N)}} \arrow{r}{} & 1\\
1 \arrow{r}{} & \Z_N^2 \arrow{r}{\nu} &\overline{C}(N) \arrow{r}{\overline{\pi}}  & \SL(2,\Z_{N}) \arrow{r}{} & 1  & 
\end{tikzcd}
\end{equation*}
% \begin{tikzcd}\label{diagram_2}
% C(N) \arrow{d}{p} \arrow{r}{\pi}  & \SL(2,\Z_{N}) \\%
% \overline{C}(N)  \arrow[swap]{ur}{\overline{\pi}}   & 
% \end{tikzcd}
where $p:C(N)\to\overline{C}(N)$ is the natural projection and $\pi=\overline{\pi}\circ p$.

Assume, for contrary, that the upper sequence (i.e., the sequence $(\ref{exact2})$)
is right splitting. Then, according to Section \ref{semidirect}, there is a group homomorphism $\mu:\SL(2,\Z_{N})\to C(N)$ such that $\pi\circ\mu=id_{\SL(2,\Z_N)}$.
 It follows that $\overline{\pi}\circ (p\circ \mu)=\pi\circ\mu=id_{\SL(2,\Z_N)}$ and the lower exact sequence (i.e., the sequence (\ref{exact1})) is therefore right splitting with the homomorphism $p\circ\mu:\SL(2,\Z_{N})\to \overline{C}(N)$, a contradiction with Theorem \ref{6.1}.
\end{proof}

\begin{theorem}\label{6.4}
 Let $N\equiv_4 2$. Then $$
\overline{C}(N) \cong   \SL(2,\mathbb{Z}_N)\ltimes_{\overline{\varphi}} \mathbb{Z}_N^2 
$$
where $\overline{\varphi}(F)(w)=Fw$ for $F\in \SL(2,\mathbb{Z}_N)$ and $w\in\Z_{N}^2$.
\end{theorem}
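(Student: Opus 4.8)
The plan is to reduce everything to the criteria already established and then exhibit an explicit section. By Proposition \ref{character_2}, the group $\overline{C}(N)$ is a semidirect product corresponding to the exact sequence $(\ref{exact1})$ if and only if there exist $T,R\in\SL(2,\Z_{2N})\ltimes\Z_N^2$ satisfying conditions (i)--(v) of that proposition. Since here $N\equiv_4 2$, Proposition \ref{6.3} tells us that, for $T=(A,h)$ and $R=(B,h')$ as parametrized in Section \ref{section_7}, conditions (i)--(v) are equivalent to the three arithmetic conditions $a'\equiv_2 c'$, $v\equiv_N\frac{N}{2}(c+1)$, and $u'\equiv_N\frac{N}{2}(b'+1)$ on the parameters $a,b,c,a',b',c'\in\{0,1\}$ and $u,v,u',v'\in\Z_N$. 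So the first step is to choose these parameters: the simplest admissible choice is $a=b=c=a'=b'=c'=0$, $u=v'=0$, and $v=u'=\frac{N}{2}$, which gives the concrete elements
$$T=\left(\mat{1}{1}{0}{1},\vc{0}{\frac{N}{2}}\right)\quad\text{and}\quad R=\left(\mat{1}{0}{-1}{1},\vc{\frac{N}{2}}{0}\right)$$
of $\SL(2,\Z_{2N})\ltimes\Z_N^2$.

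With such $T,R$ in hand, Proposition \ref{6.3} guarantees conditions (i)--(v), so by Proposition \ref{character_2} there is a group homomorphism $\mu:\SL(2,\Z_N)\to\big(\SL(2,\Z_{2N})\ltimes_\psi\Z_N^2\big)/K$ with $\sigma\circ\mu=id_{\SL(2,\Z_N)}$, and then Proposition \ref{character_1} yields that $\overline{C}(N)$ is a semidirect product for $(\ref{exact1})$. Composing with the isomorphism $\widetilde{f}$ of diagram $(\ref{diagram})$ gives a homomorphic section $\mu'=\widetilde{f}\circ\mu:\SL(2,\Z_N)\to\overline{C}(N)$ with $\overline{\pi}\circ\mu'=id_{\SL(2,\Z_N)}$. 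It then remains to identify the resulting action. By the formula in Section \ref{semidirect}, the action $\overline{\varphi}$ in the decomposition $\overline{C}(N)\cong\SL(2,\Z_N)\ltimes_{\overline{\varphi}}\Z_N^2$ attached to $\mu'$ is $\overline{\varphi}(F)(w)=\nu^{-1}\big(\mu'(F)\cdot\nu(w)\cdot\mu'(F)^{-1}\big)$. Writing $\mu'(F)=U_{/\sim}$ and noting $\overline{\pi}(U_{/\sim})=F$, Theorem \ref{htpp} gives $UW(w)U^{-1}\sim W(Fw)$ for every $w\in\Z_N^2$, that is $\mu'(F)\cdot\nu(w)\cdot\mu'(F)^{-1}=W(Fw)_{/\sim}=\nu(Fw)$; hence $\overline{\varphi}(F)(w)=Fw$, exactly the homomorphism claimed in the statement.

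Almost all of the substance here has been pushed into Proposition \ref{6.3} and the long sequence of matrix/vector computations of Sections \ref{section_8}--\ref{section_11}, which is precisely what isolates $N\equiv_4 2$ as the case in which (i)--(v) are simultaneously solvable; once that is available, the present theorem is a short diagram chase. The only place demanding a little care --- and what I would expect to be the one genuinely delicate point --- is checking that the recovered action on $\overline{H}(N)\cong\Z_N^2$ is the honest linear action $w\mapsto Fw$ rather than an affinely twisted variant; this is exactly where the uniqueness clause of Theorem \ref{htpp} is used. Everything else is routine.
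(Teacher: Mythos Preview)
Your proof is correct and follows essentially the same route as the paper's: invoke Proposition~\ref{6.3} to reduce to an explicit parameter choice, then chase through Propositions~\ref{character_2} and~\ref{character_1} and Theorem~\ref{htpp} to recover the linear action. The only difference is cosmetic: you take $c=b'=0$ and $v=u'=\tfrac{N}{2}$ (putting the nontrivial data in the $\Z_N^2$ component), while the paper takes $c=b'=1$ and $v=u'=0$ (putting it in the $\SL(2,\Z_{2N})$ component); both satisfy the constraints of Proposition~\ref{6.3}(ii).
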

\begin{proof}
   Set $a=b=a'=c'=0$, $c=b'=1$ and $u=v=u'=v'=0$. Then 
 $$T=(A,h)=\left(\mat{1+N}{1}{N}{1}, 0\right)$$
 $$R=(B,h')=\left(\mat{1}{N}{-1}{1+N}, 0\right)\ .$$
 and the condition (2) in Proposition \ref{6.3} is fulfilled. Hence, by Propositions \ref{6.3} and \ref{character_2}, the group $\overline{C}(N)$ is a semidirect product corresponding to the exact sequence (\ref{exact1}) that is right splitting with some homomorphism $\overline{\mu}:\SL(2,\Z_N)\to \overline{C}(N)$ such that $\overline{\pi}\circ\overline{\mu}=id_{\SL(2,\Z_N)}$. In particular, $
\overline{C}(N) \cong   \SL(2,\mathbb{Z}_N)\ltimes_{\overline{\varphi}} \mathbb{Z}_N^2
$ where, by Section \ref{semidirect}, the map $\overline{\varphi}$ has the form $$\overline{\varphi}(F)(w)=\nu^{-1}\left(\overline{\mu}(F)\cdot \nu(w)\cdot \overline{\mu}(F)^{-1}\right)$$ for all $F\in \SL(2,\mathbb{Z}_N)$ and $w\in \Z_N^2$. Since $\overline{\pi}(\overline{\mu}(F))=F$, we obtain, by the definition of $\overline{\pi}$ in  Theorem \ref{htpp}, that for $\overline{\mu}(F)\in\overline{C}(N)$ it holds that $\overline{\mu}(F)\cdot W(w)_{/\sim}\cdot \overline{\mu}(F)^{-1}=W(Fw)_{/\sim}$. Hence $$\overline{\varphi}(F)(w)=\nu^{-1}\left(\overline{\mu}(F)\cdot \nu(w)\cdot \overline{\mu}(F)^{-1}\right)=\nu^{-1}\left(\overline{\mu}(F)\cdot W(w)_{/\sim}\cdot \overline{\mu}(F)^{-1}\right)=$$
$$=\nu^{-1}\left(W(Fw)_{/\sim}\right)=Fw\ .$$
%  and there is a monomorphism $\nu:SL(2,\Z_N)\to \left(SL(2,\Z_{2N})\ltimes \Z_{N}^2\right)/K$ such that
%  $$\nu\left(\matN{1}{1}{0}{1}\right)=\left(\mat{1+N}{1}{N}{1}, 0\right)_{/K}$$
%  $$\nu\left(\matN{1}{0}{-1}{1}\right)=\left(\mat{1}{N}{-1}{1+N}, 0\right)_{/K}$$
%  and $\pi\circ\nu=id_{SL(2,\Z_N)}$.
\end{proof}

To decide whether the Clifford group $C(N)$ is a semidirect product in the case when $N\equiv_4 2$ seems to be more difficult. This problem is (as in Section \ref{semidirect}) equivalent to the question whether there is a splitting homomorphism $\mu:\SL(2,\Z_N)\to C(N)$. For this task, the group presentation from Section \ref{section_6} can be used again, but at this time a computation with $N\times N$ matrices in $\U(N)$ is needed. Nevertheless, it seems to be  plausible that the answer is positive and such a homomorphism $\mu$ might be constructed as a lifting  of the map $\overline{\mu}$ from the proof of Theorem $\ref{6.4}$.

\section{Conclusions}\label{conclusion}

Our paper brings new results concerning the structure of the projective
Clifford groups in even-dimensional Hilbert spaces.
We were interested in a natural question coming already from classical mechanics:
is the Clifford group a semidirect product?
It is well known that the answer is positive if the dimension of the Hilbert space is odd.
However, the answer was not apparent in even dimensions. 
We turned to the simpler case of projective Clifford groups corresponding to one qudit of dimension $N$
 and proved  that for dimensions divisible by 4 
they are not semidirect products (as it could be expected), but for dimensions $N$ with $N/2$ odd they, surprisingly, are. 
Our proof is  based on an appropriate group presentation of $\SL(2,\Z_N)$ and a simplified description of the projective Clifford group.
However, it is a pity that this approach doesn't allow a direct extension of our results
to non-projective Clifford groups in all even dimensions.
There the situation still remains unclear. 

Based on our results and \cite{bolt1,bolt2} we suggest the following conjecture that concerns the most general case of finite-dimensional quantum mechanics. i.e., when the configuration space is an arbitrary finite abelian group  $A$ (for the details see e.g. \cite{KorbTolar12}).

\

\textbf{Conjecture:} Let the configuration space of a finite-dimensional quantum mechanics be a finite abelian group  $A$ of an even order (i.e., $A=\Z_{n_1}\oplus\cdots\oplus\Z_{n_k}$ and $|A|=n_1\cdots n_k$ is an even number). Then the following are equivalent:
\begin{itemize}
 \item The corresponding Clifford group is a  semidirect product related to its exact sequence with its Heisenberg group.  
 \item The corresponding \emph{projective} Clifford group is a  semidirect product related to its exact sequence with its \emph{projective} Heisenberg group.
 \item $|A|\equiv_{4} 2$.
\end{itemize}

\section*{Acknowledgements}
We remember, with gratitude, the memory of our friend
Ji\v{r}\'{i} Patera, who introduced us into the field of Lie theory --
gradings of semi-simple Lie algebras, especially the Pauli gradings
and their symmetries, now known as the Clifford groups.

We also truly hope that our research helps to lessen 
the language differences that exist between the mathematical 
and the quantum information communities. 
%It is understandable that the quantum information community
%is using some notions differently from the mathematicians.

Both authors are grateful for financial support from RVO14000 and 
”Centre for Advanced Applied Sciences”, 
Registry No.~CZ.02.1.01/0.0/0.0/16\_019/0000778, 
supported by the Operational Programme Research, Development and Education, 
co-financed by the European Structural and Investment Funds.

\end{document}